\renewcommand*\env@matrix[1][*\c@MaxMatrixCols c]{%
  \hskip -\arraycolsep
  \let\@ifnextchar\new@ifnextchar
  \array{#1}}
\definecolor{mydarkblue}{rgb}{0,0.08,0.45} 
\definecolor{mydarkgreen}{rgb}{0,0.45,0.08} 
\crefname{assumption}{Assumption}{assumptions}
\newtheorem*{theorem*}{Theorem}
\newtheorem*{lemma*}{Lemma}
\definecolor{formalshade}{rgb}{0.95,0.95,1}
\newenvironment{formal}{%
  \MakeFramed{\advance\hsize-\width\FrameRestore}%
  \noindent\hspace{-4.55pt}
  \begin{adjustwidth}{}{7pt}%
  \vspace{2pt}\vspace{2pt}%
}
{%
  \vspace{2pt}\end{adjustwidth}\endMakeFramed%
}
\definecolor{codegreen}{rgb}{0,0.45,0.08}
\definecolor{codegray}{rgb}{0.5,0.5,0.5}
\definecolor{codepurple}{rgb}{0,0.08,0.45}
\definecolor{backcolour}{rgb}{0.95,0.95,0.92}
\lstdefinestyle{mystyle}{
    backgroundcolor=\color{backcolour},   
    commentstyle=\color{codegreen},
    keywordstyle=\color{magenta},
    numberstyle=\tiny\color{codegray},
    stringstyle=\color{codepurple},
    basicstyle=\ttfamily\footnotesize,
    breakatwhitespace=false,         
    breaklines=true,                 
    captionpos=b,                    
    keepspaces=true,                 
    numbers=left,                    
    numbersep=5pt,                  
    showspaces=false,                
    showstringspaces=false,
    showtabs=false,                  
    tabsize=2
}
\DeclarePairedDelimiterX{\inp}[2]{\langle}{\rangle}{#1, #2}
\DeclarePairedDelimiterX{\norm}[1]{\lVert}{\rVert}{#1}
\newcommand{\Hb}{\ensuremath{H_B}}
\newcommand{\tatonnement}{t\^atonnement}
\newcommand{\RR}{\mathbb{R}}
\newcommand{\g}{\mathrm{g}}
\newcommand{\bfx}{\mathbf{x}}
\newcommand{\bfy}{\mathbf{y}}
\newcommand{\bfz}{\mathbf{z}}
\newcommand{\bfd}{\mathbf{d}}
\newcommand{\bfp}{\mathbf{p}}
\newcommand{\bfg}{\mathbf{g}}
\newcommand{\bfh}{\mathbf{h}}
\newcommand{\bfq}{\mathbf{q}}
\newcommand{\bfalpha}{\boldsymbol{\alpha}}
\newcommand{\bfB}{\mathbf{B}}
\title{T\^atonnement Dynamics for Fisher Markets with Chores}
\author{Bhaskar Ray Chaudhury$^{\dagger}$, Christian Kroer$^{\ddagger}$, Ruta Mehta$^{\dagger}$, Tianlong Nan$^{\ddagger}$}
\date{\vspace{-12pt}$^{\dagger}$University of Illinois Urbana-Champaign, $^{\ddagger}$Columbia University}
\begin{document}

\maketitle

\begin{abstract}

In this paper, we initiate the study of \emph{t\^atonnement dynamics in markets with chores}. T\^atonnement is a fundamental market dynamic, that captures how prices evolve when they are adjusted in proportion of their excess demand. While its convergence to a competitive equilibrium (CE) is well understood in goods markets for broad classes of utility functions, {no analogous results are known for chore markets}. 

Analyzing t\^atonnement in the chores market presents new challenges. Several elegant structural properties that facilitate convergence in goods markets---such as convexity of the equilibrium price set and monotonicity of excess demand under the t\^atonnement price updates---\emph{fail to hold in the chore setting}. Consistent with these difficulties, we first show that \emph{naïve t\^atonnement}, which adjusts prices proportional to the excess demand, {diverges even for the simplest case of linear disutilities}. To overcome this, we propose a modified process called \emph{relative t\^atonnement}, where prices are updated according to normalized excess demand. We prove {its {\em convergence to a CE}} 
under suitable step-size choices for a broad class of disutility functions, namely \emph{continuous, convex, and 1-homogeneous (CCH)} disutilities. This class includes many standard forms such as linear and convex CES disutilities. Our proof proceeds by showing that the relative t\^atonnement dynamics correspond to applying generalized gradient methods to a nonsmooth, nonconvex yet regular objective function---\emph{a generalization of the objective in the Eisenberg--Gale-type dual program} introduced by~\cite{chaudhury2024competitive}.  

For the case of \emph{CES disutilities}, where disutility is the weighted $p$-norm of the individual chore disutilities for $p \in (1, \infty)$, we show that relative t\^atonnement converges to an $\varepsilon$-CE in $\mathcal{O}(1/\varepsilon^2)$ iterations. This \emph{quadratic convergence rate} is established by proving smoothness of the associated objective function. We achieve this by interpreting the objective as the \emph{polar gauge (or gauge dual)} of the disutility function. Typically, smoothness of gauge dual is proven by proving strong convexity of the \emph{primal gauge}, (in this case, the disutility function). Although CES disutilities are neither strictly nor strongly convex, we are nonetheless able to prove smoothness of their gauge dual, thereby obtaining the desired rate of convergence.

Finally, following the framework of~\cite{arrow1958stability}, we analyze the \emph{stability of competitive equilibria} under the continuous-time counterpart of our relative t\^atonnement dynamics. We provide a complete characterization of local stability and show that the \emph{Nash welfare maximizing {CE}} is always locally stable when agents have \emph{linear disutilities}-- offering a new normative justification for their desirability \cite{bogomolnaia2017competitive}.

\end{abstract}

\section{Introduction} 
\label{sec:intro}

The notion of \emph{competitive equilibrium} (CE), also known as \emph{market equilibrium} (ME), has long been regarded as the crown jewel of microeconomics~\cite{nisan2007algorithmic}. In a \emph{Fisher market}~\cite{fisher1891value} with $n$ buyers and $m$ divisible items, each item has a fixed total supply, and each buyer seeks to maximize their utility subject to a budget constraint. At a CE, the price for each item is set such that the \emph{market clears}, i.e.,  when every buyer demands their utility-maximizing bundle at those prices, the aggregate demand for each item exactly equals its supply. It is a key solution concept in perfectly competitive markets, and its static and dynamics properties were central research areas in economic theory.  For the past few decades, CE has been utilized to design internet market mechanisms~\cite{eden2023platform}, develop online resource allocation algorithms~\cite{salehi2023competitive}, and has garnered significant attention within the computer science communities.

The most extensive study of CE has occurred for goods markets, i.e.,  markets where agents receive positive utilities from the items. Beyond the traditional static properties of a CE, such as existence, uniqueness, and welfare theorems, significant attention has also been given to its dynamics properties, including the convergence to an equilibrium and the stability of CE~\cite{arrow1958stability, samuelson1941stability, cheung2019tatonnement, cole2008fast}. The significance and objectives of studying the dynamics properties of a CE are effectively outlined in the seminal work of~\cite{arrow1958stability}.

\begin{formal}
    ``\emph{The task consists in constructing a formal dynamics model whose characteristics reflect the nature of the competitive process and in examining its stability properties, given assumptions as to the properties of the individual units or of the aggregate excess demand functions.}''
\end{formal}

A central dynamics concept studied since the early development of competitive equilibrium theory in the 19\textsuperscript{th} century is \emph{(Walrasian) t\^atonnement}~\cite{walras1874elements}. t\^atonnement represents a natural price-adjustment process driven by excess demand: prices are increased for goods whose aggregate demand exceeds supply, and decreased when supply surpasses demand, typically in proportion to the magnitude of the imbalance. In their seminal work, \cite{arrow1958stability} established sufficient conditions under which t\^atonnement converges to a competitive equilibrium. Since then, a rich body of research has explored the convergence, stability, and computational properties of t\^atonnement and related market dynamics in goods markets~\cite{codenotti2005market, cole2008fast, cheung2012tatonnement, cheung2019tatonnement}.


Recently, competitive markets involving \emph{chores}--non-disposable items that incur cost during consumption--have garnered significant attention due to their relevance in online labor markets (e.g., \emph{Upwork, Amazon's Mechanical Turk}), freelance (e.g., \emph{Fiverr, Toptal}) and gig platforms (e.g., \emph{TaskRabbit}). Despite their importance, the dynamic properties of CE in the context of chores are yet to be explored.

\begin{center}
  \emph{In this paper, we initiate the study on t\^atonnement dynamics in the chores market, drawing natural parallels and distinctions to the t\^atonnement literature in the goods market.}    
\end{center}

\paragraph{Difficulty in the chores market: \emph{non-monotone excess-demand}.} When adapting t\^atonnement to the chores market, we encounter a fundamental obstacle that makes proving convergence significantly more challenging than in the goods setting. For any fixed price vector, the excess demand for an item is defined as the difference between its aggregate demand and its available supply, given that all buyers best respond\footnote{demand their utility-maximizing bundles subject to their budget constraints.} to those prices. In the goods setting, the t\^atonnement dynamics updates prices by adding a term proportional to the excess demand: if the excess demand for a good is positive (demand exceeds supply), its price increases; if negative (supply exceeds demand), its price decreases.

The key reason this process converges in goods markets with fairly general buyer preferences, e.g., Weak Gross Substitute (WGS), homothetic, etc., lies in the monotonic behavior of excess demand under such price updates. Increasing the price of a good whose demand exceeds supply typically (i) reduces the consumption of that good by agents who were previously demanding it, and (ii) may shift some of their demand to other goods, further reducing aggregate demand for that good. A symmetric argument applies when supply exceeds demand. This monotonic response of excess demand stabilizes the t\^atonnement dynamics and underpins the convergence guarantees of classical Walrasian algorithms and dynamics~\cite{codenotti2005market}.

When we naturally extend the t\^atonnement process to the chores setting, we incorporate a negative correction term proportional to the excess demand, i.e., if demand exceeds supply, the corresponding price decreases, and if supply exceeds demand, the price increases. However, unlike in the goods setting, the excess demand function in chores markets fails to satisfy monotonicity-- even under simple linear disutility functions. The failure of this property is also at the heart of 
why there are no polynomial-time algorithms known 
so far 
to determine a CE for the chores setting. 
Consider the following example,

\begin{example}
Consider a simple chores market with two agents and two chores. Each agent has a budget of \$1, and each chore has a total supply of one unit. The agents have \emph{linear separable disutilities} specified by the matrix
\[
d_{i,j} =
\begin{cases}
1 & \text{if } i = j,\\
2 & \text{if } i \neq j,
\end{cases}
\quad \text{for } i,j \in \{1,2\}.
\]
Here, $d_{i,j}$ represents the disutility incurred by agent~$i$ from performing one unit of chore~$j$. Consider the price vector $(p_1, p_2) = (\tfrac{5}{4}, \tfrac{3}{4})$. Given these prices, one can verify that the optimal (disutility-minimizing) demand bundles are
\[
\mathbf{x}^*_1 = (\tfrac{4}{5}, 0) \quad \text{and} \quad \mathbf{x}^*_2 = (\tfrac{4}{3}, 0).
\]
At these allocations, supply exceeds demand for chore~1 ($\tfrac{4}{5} < 1$), while demand exceeds supply for chore~2 ($\tfrac{4}{3} > 1$).  In a chores t\^atonnement process, prices are updated in the \emph{opposite} direction from the goods setting: prices are \emph{increased} for chores where supply exceeds demand and \emph{decreased} where demand exceeds supply. Thus, the update rule would increase $p_1$ and decrease $p_2$. However, this change actually amplifies the magnitude of excess demand—agents end up demanding even more of chore~2 and less of chore~1—before eventually correcting. Only when the prices reach $(p_1, p_2) = (\tfrac{4}{3}, \tfrac{2}{3})$ does the market clear exactly. 

This simple example highlights the central challenge: in chores markets, t\^atonnement price adjustments can initially drive the system \emph{away} from equilibrium—when distance is measured by the imbalance in excess demand—making the convergence analysis substantially more delicate.
\end{example}

\paragraph{Na\"ive adaptation fails.}As a natural starting point, we adapt the t\^atonnement dynamics from the goods setting, updating the price of each chore proportionally to its excess demand. Perhaps unsurprisingly, this na\"ive extension fails to converge, under a variety of disutilities. 
In particular, we construct a family of examples 
(Figure~\ref{fig:naive-chores-t\^atonnement}) 
featuring a unique competitive equilibrium at which t\^atonnement diverges—even when initialized arbitrarily close to equilibrium. This divergence is robust to all choices of step sizes (including adaptive ones) and tie-breaking rules.

A closer examination of this example reveals that na\"ive t\^atonnement systematically drives the price vector \emph{outside the price simplex}, i.e., the hyperplane where the sum of prices equals the sum of budgets. This observation suggests a natural remedy: constrain the price updates to remain within the simplex. To this end, we introduce \emph{relative t\^atonnement}, a variant in which the price of each chore is updated proportionally to its \emph{relative excess demand}—defined as the excess demand minus its average across all chores. This modification guarantees that the total price mass remains constant, ensuring that prices evolve within the simplex.

\paragraph{Convergence of \emph{relative t\^atonnement}.} We next turn to our main contribution: establishing convergence guarantees for \emph{relative t\^atonnement}. Surprisingly, despite the inherent non-monotonicity of excess demand in chores markets, relative t\^atonnement converges under remarkably general conditions on agent disutilities.

\begin{theorem*}[Informal]
    Relative t\^atonnement converges to a competitive equilibrium when agents have convex and 1-homogeneous disutility functions.
\end{theorem*}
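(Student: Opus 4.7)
The plan is to follow the roadmap flagged in the abstract: exhibit a potential function $\Phi$ on the price simplex $\Delta = \{\bfp \geq 0 : \sum_j p_j = \sum_i B_i\}$ whose generalized gradient encodes (excess) demand, interpret relative t\^atonnement as a projected generalized-gradient method on $\Phi$ over $\Delta$, and then invoke convergence theory for nonsmooth, nonconvex but \emph{Clarke regular} subgradient methods. Generalizing the Eisenberg--Gale-type dual of~\cite{chaudhury2024competitive} beyond linear disutilities, the natural candidate is
\[
\Phi(\bfp) \;=\; \sum_{i} B_i \log\Bigl(\min_{\bfx_i \ge 0,\; \inp{\bfp}{\bfx_i} \ge B_i} d_i(\bfx_i)\Bigr),
\]
which, under convexity and $1$-homogeneity of $d_i$, equals $\sum_i B_i \log d_i^{\circ}(\bfp)$ up to an additive constant, where $d_i^{\circ}$ is the polar gauge of $d_i$.

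The first step is an envelope/Danskin-style subgradient calculation: for any optimal demand selector $\bfx_i^\ast(\bfp)$, a suitable normalization of $\bfx_i^\ast(\bfp)$ lies in $\partial \log d_i^{\circ}(\bfp)$, and $1$-homogeneity allows one to eliminate the Lagrange multiplier from the parametric KKT conditions. Summing over agents, every Clarke subgradient of $\Phi$ decomposes as excess demand $\bfz(\bfp)$ plus a vector in the normal cone to $\Delta$. Consequently, the update
\[
\bfp^{t+1} \;=\; \bfp^{t} - \eta_t\bigl(\bfz(\bfp^t)-\bar z(\bfp^t)\mathbf{1}\bigr),
\]
i.e., relative t\^atonnement, is exactly a projected generalized-gradient step on $\Phi$ over $\Delta$; the ``relative'' correction $-\bar z(\bfp^t)\mathbf{1}$ is nothing but the projection of a Clarke subgradient onto the tangent space $\{\mathbf{v}:\sum_j v_j = 0\}$ that keeps iterates on $\Delta$.

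The main obstacle is nonsmoothness (demands are set-valued when $d_i$ is merely convex, e.g.\ linear) coupled with nonconvexity of $\Phi$. I would prove that $\Phi$ is \emph{Clarke regular} on the relative interior of $\Delta$ by arguing that each $d_i^{\circ}$ is a sublinear gauge (hence Clarke regular), that $\log(\cdot)$ is $C^{1}$ on $\mathbb{R}_{>0}$ (so composition preserves regularity and validates the Clarke chain rule), and that regularity is closed under finite sums. With $\Phi$ Clarke regular and locally Lipschitz on the relative interior, the differential-inclusion/Lyapunov machinery for projected nonsmooth subgradient methods (in the style of Davis--Drusvyatskiy) applies: with step sizes satisfying $\eta_t \to 0$, $\sum_t \eta_t = \infty$, and $\sum_t \eta_t^{2} < \infty$, the iterates accumulate on the Clarke stationary set of $\Phi$ over $\Delta$.

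To close the loop, I would check that a Clarke stationary $\bfp^\ast$ in the relative interior of $\Delta$ forces $\bfz(\bfp^\ast) = c\mathbf{1}$ for some scalar $c$, and Walras' law $\inp{\bfp^\ast}{\bfz(\bfp^\ast)}=0$ then pins $c=0$, yielding exact market clearing; together with the buyer-side optimality already baked into $\Phi$ through the inner minimization, this is the definition of a CE. The hardest parts I anticipate are (a) verifying Clarke regularity rigorously under only convexity and $1$-homogeneity of $d_i$ (no strict convexity or smoothness), especially justifying the envelope-theorem identification of excess demand with a Clarke subgradient selection, and (b) controlling the boundary behavior of $\Phi$ on $\partial \Delta$, where coordinates $p_j \to 0$ may make $\log d_i^{\circ}$ blow up; the argument likely uses this blow-up as a barrier to keep iterates bounded away from the boundary, which is what ultimately allows the abstract convergence theorem to be applied to the concrete t\^atonnement dynamics.
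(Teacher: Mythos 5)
Your overall route is the same as the paper's: a potential that is, up to additive constants on the price simplex, $\sum_i B_i\log d_i^\circ(\bfp)$; the identification of its Clarke subdifferential restricted to the affine hull $H_B$ with the relative excess demand (Danskin/envelope plus $1$-homogeneity to kill the multiplier); local Lipschitzness and subdifferential regularity via the support-function/max structure, the $C^1$ logarithm, and a sum rule; convergence of the discrete dynamics via the Davis--Drusvyatskiy stochastic-approximation framework with square-summable but not summable steps; and the closing observation that a stationary point on $H_B$ has some $\bfz\in Z(\bfp^*)$ with $\bfz=c\mathbf{1}_m$, and the budget identity $\inp*{\bfp^*}{\bfz}=0$ forces $c=0$, i.e.\ a CE. Two steps, however, do not go through as written.

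First, the orientation of the potential is off by a sign, and here the sign is not cosmetic. By $1$-homogeneity, $\min\{d_i(\bfx_i):\inp*{\bfp}{\bfx_i}\ge B_i\}=B_i/d_i^\circ(\bfp)$, so your $\Phi$ equals a constant \emph{minus} $\sum_i B_i\log d_i^\circ(\bfp)$. Consequently the projected Clarke subgradient of your literal $\Phi$ is the \emph{negative} of the relative excess demand, relative t\^atonnement is an \emph{ascent} step on $\Phi$, and $\Phi$ is not the function your regularity argument covers: Clarke regularity is not preserved under negation, and avoiding ``downward facing cusps'' is exactly the property the convergence machinery needs. The correct object is descent on $\sum_i B_i\log d_i^\circ(\bfp)$ restricted to $H_B$ (the paper's $f\vert_{H_B}$), which is in fact what your Danskin and gauge-regularity computations address, so this is fixable but must be fixed. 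Second, your plan for the boundary fails: on the price simplex one has $d_i^\circ(\bfp)\ge\delta_i\norm{\bfB}_1>0$, so the potential is locally Lipschitz and bounded below on all of $\Delta_B$ and provides no barrier as $p_j\to 0$. Since the dynamics uses no projection onto the nonnegative orthant, you still have to prove that iterates stay in $\Delta_B$ with bounded (relative) excess demands before the abstract convergence theorem applies. The paper does this with an excess-demand threshold rather than a barrier in the objective --- and this is also where the extra assumption that disutilities are strictly increasing in each coordinate enters: if $p_j\le\ell_0$ then every relative excess demand satisfies $\tilde{z}_j<-\tfrac{1}{6m}$, so underpriced chores have strictly rising prices, and with the additional step-size cap $\eta^k\le \ell_0^2/(2\norm{\bfB}_1)$ the iterates remain in $\Delta_B$ and eventually stay above $\ell_0/2$. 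An argument of this type (or some substitute) is missing from your proposal, and without it the verification of the Davis--Drusvyatskiy assumptions (limit points in the domain, bounded iterates and subgradients) is incomplete.
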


This result closely parallels the classical convergence guarantees of (na\"ive) t\^atonnement in goods markets {for concave 1-homogeneous utility functions (homothetic markets), which,} 
to the best of our knowledge, represents one of the most general non-trivial classes of preferences for which t\^atonnement convergence is known in the goods setting. The only other broad class is that of \emph{weak gross substitutes (WGS)} preferences, whose definition itself is {{\em monotonic excess demand}-- a property that is inapplicable to the chores market.} 

Our proof proceeds by constructing a \emph{Lyapunov potential function}—a non-convex scalar function defined over the price simplex—that strictly decreases along the relative t\^atonnement dynamics. This potential can be viewed as a natural generalization of the Eisenberg–Gale (EG) dual program for chores introduced in~\cite{chaudhury2024competitive} for linear disutilities. We further show that the vector of relative excess demands coincides with the generalized gradient of this potential, implying that relative t\^atonnement corresponds to a first-order dynamical system on this generalized non-convex program. 

\paragraph{Polynomial convergence to $\varepsilon$-CE for CES (weighted $\ell_\rho$-norm) disutilty functions.}

We further establish a polynomial convergence rate for a broad and economically meaningful class of \emph{CES disutilities}. In this model, the total disutility from a bundle of chores is given by the $\ell_{\rho}$-norm of the vector of individual disutility, for some parameter $\rho \ge 1$. This formulation captures varying degrees of \emph{substitutability} and \emph{complementarity} among chores. When $\rho = 1$, we recover \emph{linear disutilities}, corresponding to perfectly substitutable chores—each unit of effort can be reallocated freely across chores without changing the overall burden.  For $\rho = 2$, the model introduces a mild interdependence among chores: the total disutility is minimized when effort (or disutility) is distributed relatively evenly across the tasks and increases when it is concentrated on only a few. Put differently, agents perceive an imbalanced allocation of chores as more burdensome, reflecting a natural preference for spreading effort across multiple tasks rather than overloading a subset. As $\rho$ increases, this effect becomes stronger—agents become increasingly sensitive to imbalance, and the total disutility is progressively dominated by the most demanding chores. In this way, larger values of $\rho$ capture a continuum from nearly additive, independent disutilities to strongly “imbalance-averse” or complementary disutilities across chores.

\begin{theorem*}
    When agents have CES disutility functions with $\rho \in (1, \infty)$, relative t\^atonnement converges to an $\varepsilon$-competitive equilibrium in $\tilde{\mathcal{O}}(1/\varepsilon^2)$  iterations.\footnote{$\tilde{O}(\cdot) $ hides dependencies on $\rho$ and the budgets of the agents.}
\end{theorem*}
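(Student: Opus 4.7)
The plan is to leverage the CCH machinery already in the paper, that relative tâtonnement equals (generalized) gradient descent on a Lyapunov potential $\Phi$ whose generalized gradient at a price vector $\bfp$ is the relative excess demand, and to upgrade the convergence rate by proving that $\Phi$ is $L$-smooth in the CES regime. Once Lipschitz-gradient is established, the $\tilde O(1/\varepsilon^2)$ bound is immediate from the textbook analysis of gradient descent on a smooth, possibly nonconvex objective.

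I would first specialize $\Phi$ to the CES case. Following the polar-gauge characterization sketched in the abstract, each agent $i$'s contribution to $\Phi$ equals (a budget-scaled) polar gauge of the disutility $d_i$. Since $d_i$ is a weighted $\ell_\rho$ norm, its polar gauge is a weighted $\ell_{\rho'}$ expression in $\bfp$ with $1/\rho + 1/\rho' = 1$. Summing over agents yields a fully explicit, twice-differentiable $\Phi$ on the interior of the price simplex whose derivatives can be computed in closed form.

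The main technical difficulty, and the principal obstacle, is bounding the Hessian of $\Phi$. The standard argument deducing smoothness of a gauge dual from strong convexity of the \emph{primal} gauge is unavailable here: being $1$-homogeneous, CES disutilities are linear along rays through the origin and hence neither strictly nor strongly convex in the usual sense. I would instead bound the Hessian of each agent's polar gauge directly. The key ingredients are: exploit $1$-homogeneity via Euler's identity to decompose the Hessian into a radial part, which kills directions tangent to level sets, and a tangential part, which I control by explicit computation; remove the $\rho'$-th root by working with the polar gauge raised to the $\rho'$-th power, which yields a separable, twice-differentiable surrogate whose Hessian admits a clean spectral bound in $\rho'$ and the agents' budgets; and handle the singularity as some $p_j \to 0$ by showing that the dynamics keeps prices uniformly bounded away from zero, since $p_j \to 0$ blows up the excess demand (i.e. the $j$-th coordinate of the gradient), which the descent step self-corrects. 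A case split on $\rho \in (1,2]$ versus $\rho \in [2,\infty)$ mirroring the classical smoothness/strong-convexity asymmetry of $\ell_p$-norms is likely needed inside the tangential bound.

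With $\Phi$ shown to be $L$-smooth, where $L$ depends polynomially on $\rho$ and the budgets, I invoke the standard convergence guarantee for projected gradient descent on a smooth nonconvex function: for step size $\eta \asymp 1/L$, within $T = O\!\bigl(L(\Phi(\bfp^0) - \Phi^\star)/\varepsilon^2\bigr)$ iterations, some iterate has projected-gradient squared norm at most $\varepsilon^2$. Since the projected gradient of $\Phi$ on the simplex is precisely the relative excess demand, and since a CE is characterized by its vanishing, that iterate is an $\varepsilon$-CE in the paper's sense. Absorbing the $\rho$- and budget-dependence into the $\tilde O$ notation yields the claimed $\tilde O(1/\varepsilon^2)$ rate.
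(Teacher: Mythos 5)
Your overall pipeline is the same as the paper's: interpret relative t\^atonnement as a generalized gradient method on the potential $f\vert_{H_B}$, prove smoothness of the gauge dual (with a case split around $\rho=2$), and invoke the standard $O(1/\varepsilon^2)$ nonconvex smooth descent bound, converting a small relative excess demand into an $\varepsilon$-CE. For $\rho\in(1,2]$ your direct Hessian computation on the closed-form dual $(\sum_j d_{ij}^{1-\sigma}p_j^{\sigma})^{1/\sigma}$ is a plausible alternative to the paper's route via strong quasi-convexity of the squared primal and strongly convex level sets, and would likely go through.

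The genuine gap is in the regime $\rho>2$, i.e., $\sigma\in(1,2)$, where smoothness truly fails near the boundary of the price simplex: the Hessian of the gauge dual contains terms of order $p_j^{\sigma-2}$, which blow up as $p_j\to 0$, so no uniform $L$ exists on the interior of $\Delta_B$ and the smooth-descent analysis cannot be started from an arbitrary $\bfp^0$. Your proposed fix---that prices stay bounded away from zero because ``$p_j\to 0$ blows up the excess demand,'' which the descent step self-corrects---rests on a false premise: in a chores market a chore priced near zero attracts essentially no demand, so its excess demand tends to $-1$ and the corresponding gradient coordinate stays \emph{bounded}; it is the fact that the relative excess demand is negative and bounded away from zero below a threshold (the paper's $\ell_0$, \cref{lem:excess-demand-barrier}) that pushes such prices up, not any blow-up. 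More importantly, turning this into a proof requires a quantitative two-phase argument that your sketch omits: (i) with a sufficiently small \emph{constant} stepsize, a price that crosses the threshold can never fall below half of it, so the iterates eventually remain in the region $P_s$ where the local Lipschitz constant $L(\ell_0/2)$ applies (\cref{lem:discrete-time-relative-tatonnement-properties}); and (ii) the burn-in phase before entering $P_s$---during which you cannot assume any descent of the potential, since smoothness is unavailable there---must be bounded by a polynomial number of iterations independent of $\varepsilon$ (\cref{lem:non-smooth-to-smooth}). Without these two steps your $\rho>2$ case does not yield the claimed $\tilde{\mathcal{O}}(1/\varepsilon^2)$ guarantee from an arbitrary initialization; the remaining steps (lower-bounding $d_i^\circ$ on $\Delta_B$ so the logarithm preserves smoothness, and converting an $\varepsilon$-stationary point into an $\varepsilon$-CE via the budget identity) are minor and handled as in the paper.
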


This establishes the first polynomial-time convergence guarantee for t\^atonnement under CES disutilities with $\rho \in (1, \infty)$, encompassing several important subclasses of preferences. Our analysis builds on a careful quantification of the “\emph{demand stability}” property of CES disutilities-- small changes in prices induce small changes in demand. This property can be intuitively visualized through the smooth, rounded contours of CES disutility functions (see Figure~\ref{fig:strongly-convex-set-and-smoothness}).

Formally, this stability is captured through the same Lyapunov potential that governs the t\^atonnement dynamics. This potential involves the logarithm of a quantity that can be interpreted as an agent’s \emph{maximum earning potential, given fixed prices}-- the maximum total payment an agent can obtain for performing chores, subject to a bound on their overall disutility. We interpret this expression as a the \emph{gauge dual} or equivalently the \emph{polar gauge}~\cite{friedlander2014gauge} with the agent's disutility function being the \emph{primal gauge}. 

To achieve the $\mathcal{O}(1/\varepsilon^2)$ convergence rate, the key technical challenge lies in establishing the smoothness of this Lyapunov potential. This, in turn, reduces to proving smoothness of the gauge dual—a nontrivial task, since CES disutility functions are not even strictly convex. We overcome these challenges through careful structural observations and tailored arguments, a brief overview of which is provided in Section~\ref{subsub:poly-convergence}.

\paragraph{Stability of Individual CE.}  Following the line of inquiry initiated by Arrow and Hurvicz~\cite{arrow1958stability}, we next examine the \emph{stability} of competitive equilibria (CE) under the \emph{continuous-time} analogue of the relative t\^atonnement dynamics. Informally, an equilibrium price vector $\bfp^*$ is said to be \emph{locally stable} with respect to a continuous-time dynamical system if there exists a non-empty feasible neighborhood $\mathbf{N}(\bfp^*)$ around $\bfp^*$ such that every trajectory starting in $\mathbf{N}(\bfp^*)$ converges to $\bfp^*$. 

We provide several equivalent and insightful characterizations of locally stable CEs (see Theorem~\ref{thm:characterize-locally-stable-chores-ce}) when agents have linear disutility functions. One characterization identifies locally stable equilibria as local minima of a Lyapunov potential function closely related to the objective of the Eisenberg--Gale-like dual program for chores introduced by~\cite{chaudhury2024competitive}. Another characterization is combinatorial: a CE is locally stable if and only if the \emph{minimum pain-per-buck (MPB)} graph
is connected.  
Another {interesting} characterization is that the CE that maximizes \emph{Nash welfare}, i.e., $\prod_{i=1}^n d_i(\bfx_i)$, is always locally stable.

\begin{theorem*}
    When agents have linear disutility functions, the Nash welfare-maximizing competitive equilibria are locally stable.
\end{theorem*}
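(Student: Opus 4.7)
The plan is to invoke the equivalent characterization of local stability from Theorem~\ref{thm:characterize-locally-stable-chores-ce}, which identifies locally stable CEs with local minima of the Lyapunov potential $\Phi$ on the price simplex $\mathcal{P} = \{\bfp \ge 0 : \sum_j p_j = \sum_i B_i\}$. The goal is to show that any Nash welfare-maximizing CE is in fact a \emph{global} minimum of $\Phi$ on $\mathcal{P}$---hence a local minimum, and therefore locally stable.

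The first step is to specialize $\Phi$ to the linear-disutility setting. Following~\cite{chaudhury2024competitive}, the EG-type dual objective takes the form
\[
\Phi(\bfp) \;=\; \sum_{i=1}^n B_i \log \max_{j \in [m]} \frac{p_j}{d_{ij}}
\]
(up to an additive constant), and its generalized critical points on $\mathcal{P}$ are exactly the CE price vectors. Since at a CE agent $i$ earns $B_i$ only through chores achieving the minimum pain-per-buck $d_{ij}/p^*_j$, the realized disutility equals $d_i(\bfx^*_i) = B_i / \max_j(p^*_j/d_{ij})$. Taking the $B_i$-weighted log and summing yields the key identity
\[
\sum_i B_i \log d_i(\bfx^*_i) \;=\; \sum_i B_i \log B_i \;-\; \Phi(\bfp^*),
\]
valid at \emph{every} CE $(\bfx^*, \bfp^*)$. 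Consequently, over the set of CE price vectors, minimizing $\Phi$ is equivalent to maximizing the $B_i$-weighted log Nash welfare; in the equal-budget case $B_i \equiv 1$ this reduces exactly to $\log \prod_i d_i(\bfx^*_i)$, the objective in the theorem statement.

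The conclusion then follows by a short compactness argument: since $\Phi$ is continuous on the compact set $\mathcal{P}$, it attains a global minimum, which must be a generalized critical point and hence a CE price vector $\bar{\bfp}$. By the dual identity, $\bar{\bfp}$ maximizes (weighted) Nash welfare over all CEs; conversely, any Nash welfare-maximizing CE must attain the same minimum value of $\Phi$ on the set of CEs, and therefore on all of $\mathcal{P}$. Being a global (and therefore local) minimum of $\Phi$, each such CE is locally stable by Theorem~\ref{thm:characterize-locally-stable-chores-ce}.

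The main obstacle in turning this plan into a full proof is establishing the duality identity above cleanly, which requires carefully matching the three CE conditions---payment constraint $\sum_j p^*_j x^*_{ij} = B_i$, MPB concentration, and market clearing---to the Eisenberg--Gale-type dual in~\cite{chaudhury2024competitive}. Once this identity is in hand, the rest of the argument is essentially a one-line compactness observation. A secondary subtlety is reconciling the unweighted product $\prod_i d_i(\bfx_i)$ in the theorem statement with the natural $B_i$-weighted version $\prod_i d_i(\bfx_i)^{B_i}$ arising from the duality; the two coincide in the equal-budget setting and, when budgets differ, the weighted argument carries over to the unweighted formulation with straightforward modifications.
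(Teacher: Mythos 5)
Your overall route---using the CE identity $d_i(\bfx^*_i)=B_i/\max_j\,(p^*_j/d_{ij})$ to show that, over the set of equilibrium prices, minimizing the potential is the same as maximizing (budget-weighted) Nash welfare, then arguing by compactness that the global minimum of $f\vert_{H_B}$ on $\Delta_B$ is attained at a CE, and finally invoking Theorem~\ref{thm:characterize-locally-stable-chores-ce}---is the intended one, and the duality identity itself is correct. Two steps, however, are not yet closed.

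First, item (ii) of Theorem~\ref{thm:characterize-locally-stable-chores-ce} is an equivalence with \emph{strict} local minima of $f\vert_{H_B}$, whereas your argument only shows that the Nash-welfare-maximizing CE is a global, hence local, minimum. To bridge this you need the isolation of equilibrium prices (the paper uses that the CE set is finite and disconnected): any point near $\bfp^*$ attaining the same value would itself be a global minimizer of the constrained program, hence a KKT point of \eqref{pgm:general-eg-dual}, hence an equilibrium price by the paper's KKT characterization, contradicting isolation; only then is the global minimum a strict local minimum and the appeal to (ii) legitimate. As written, ``global (and therefore local) minimum'' does not suffice to invoke the theorem.

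Second, the identity yields the budget-weighted objective $\sum_i B_i\log d_i(\bfx_i)$, i.e.\ $\prod_i d_i(\bfx_i)^{B_i}$. Your closing claim that the argument ``carries over to the unweighted formulation with straightforward modifications'' when budgets differ is not justified: with unequal $B_i$ the maximizer of $\prod_i d_i(\bfx_i)$ over the CE set need not coincide with the maximizer of the weighted product, and your proof establishes stability only for the latter. The statement should be read, as in \cite{bogomolnaia2017competitive}, for equal earning requirements (or for budget-weighted Nash welfare); under that reading your proof is correct once the strictness step above is supplied.
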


Interestingly,~\cite{bogomolnaia2017competitive} argue—``\emph{without any strong normative reason}''—that the Nash welfare-maximizing CE is the most desirable among the multiple possible equilibria that a chores market may admit. Our results offer such a normative justification: stability provides a natural dynamic rationale for why the Nash welfare-maximizing equilibrium should indeed be viewed as the desirable outcome.

\paragraph{Remark on Global Stability.} Arrow and Hurwicz \cite{arrow1958stability} also define {\em global-stability}, where a CE $p^*$ is said to be globally stable if starting from any initial feasible prices $p^0$, any trajectory $p(t)$ converges to $p^*$. They show that in the weak-gross-substitutes goods market, the unique CE is globally stable. This follows from the fact that the dynamics has to converge to the unique CE of the market. The weak-gross-substitute property subsumes markets with linear utilities in the goods case. However, such a result is impossible in the case of chores market, even with linear disutilities, because of a multiplicity of equilibria. We show that this holds even if we {\em relax} the definition to require that starting from a uniformly random feasible $p^0$, the dynamics converges to $p^*$ with probability one. In particular, we design instances (see Figure~\ref{fig:instances-stability-NW}) that have multiple locally-stable equilibria, and thereby none of these can be globally-stable even under the relaxed definition.


\begin{figure}[t]
    \centering
    \includegraphics[width=1\linewidth]{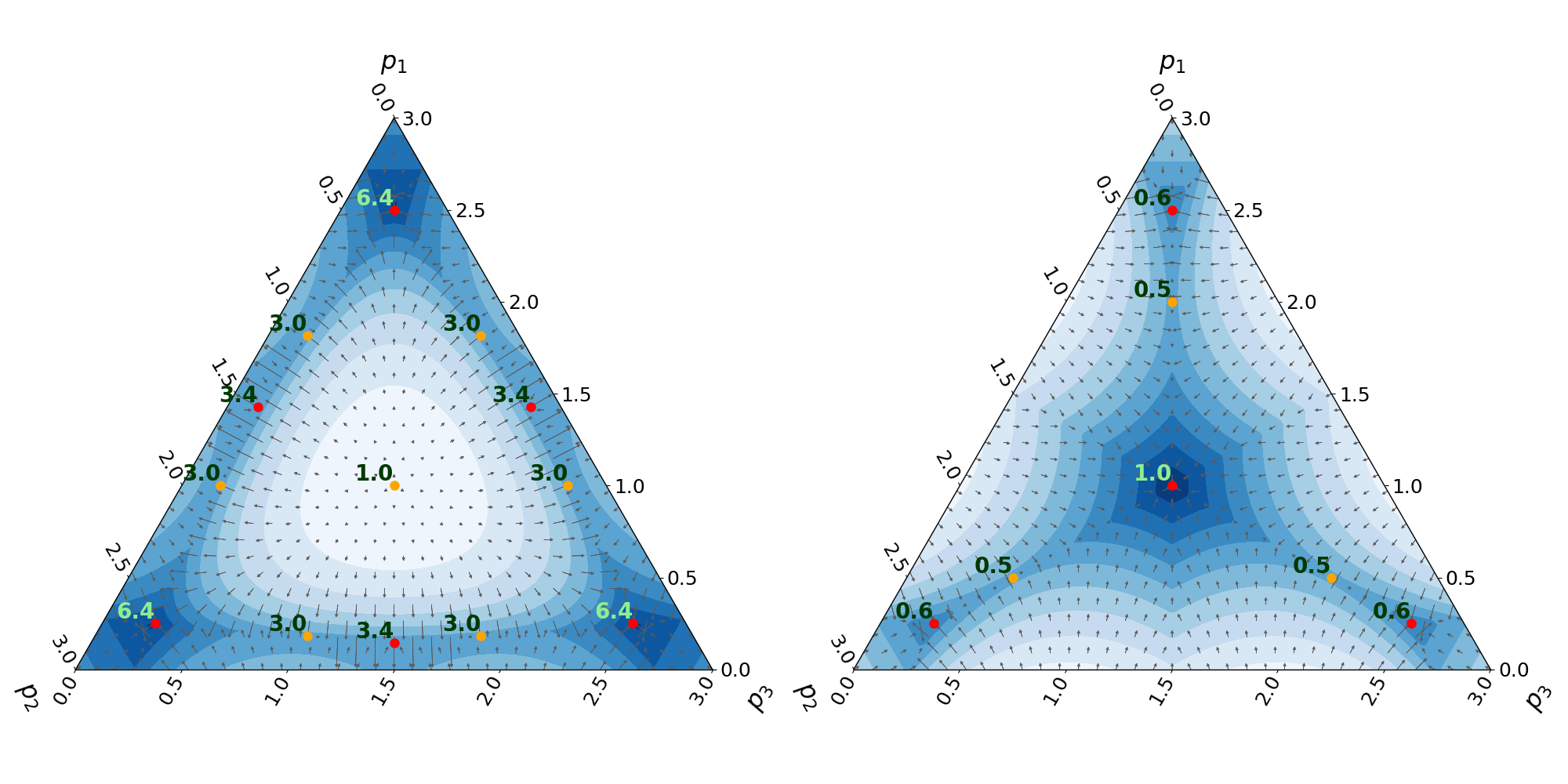}
    \caption{Stability and associated Nash welfare of chores competitive equilibrium visualized as ternary plots (also known as simplex plots). The details on the instances can be found in~\cref{sec:intro}. 
    We make the plots in the price simplex consisting of nonnegative prices that sum to the sum of budgets. 
    We use blue color to denote the level sets of the objective values of our potential function (see \cref{eq:potential}). The darker the color is, the lower the objective value is. 
    We use red and yellow dots to denote ``locally stable'' and ``unstable'' chores CE, respectively. 
    We use bold texts to denote the values of ``Nash welfare'' associated with each CE, and the maximum Nash welfare CE is denoted in light green text.
    The gray arrows (flows) denote the relative excess demand vector. If there are multiple relative excess demands, we pick the one with the minimal Euclidean norm.}
    \label{fig:instances-stability-NW}
\end{figure} 

\subsection{Technical Highlights}
We study t\^atonnement in Fisher markets with chores, where a set of $n$ agents are deciding on what chores to take on from a set of $m$ divisible chores in order to earn money. 
Each chore has a unit supply.
An allocation $\bfx \in \mathbb{R}^{n \times m}_{+}$, $x_{ij}$ represents the amount of chore $j$ allocated to agent $i$. 
Each agent $i$ incurs a disutility $d_i (\bfx_i)$ for a bundle $\bfx_i = (x_{i1}, x_{i2}, \dots, x_{im})$ of chores that is assigned to her. 
We consider each $d_i$ to be a general continuous, convex, $1$-homogeneous (CCH) disutility function.
Agent $i$ has an earning requirement of $B_i$. 

Each agent $i$ has an earning requirement of $B_i$. Given prices $(p_1,\dots,p_m)$ of chores, where $p_j$ represents the payment-per-unit of chore $j$ done, agent $i$ will demand a bundle that minimizes her disutility subject to earning at least $B_i$ money. When, for each chore, its aggregated demand from all the agents meets its supply, the prices are said to be at CE. That is, $(\bfp,\bfx)$ are at CE iff they satisfy the following conditions:


\begin{itemize}
    \item \emph{Optimal bundles:} Each agent $i$ gets a disutility-minimizing chore bundle subject to her earning requirement. Letting $x^*_i(\bfp) = \textnormal{argmin}_{\bfy_i \in \mathbb{R}^m_+} \left\{  d_i(\bfy_i) \mid \inp*{\bfp}{\bfy_i} \geq B_i \right.\}$, this means that $\bfx_i \in x^*_i(\bfp)$ for all $i \in [n]$; 
    \item \emph{Market clearing:} Every chore is completely allocated, i.e., $\sum_{i \in [n]} x_{ij} = 1$ for all $j \in [m]$.
\end{itemize}

Throughout the paper, the summations of vectors are component-wise operations. 
$\bm{1}_m$ and $\bm{0}_m$ represent the $m$-dimensional all ones and all zeros vectors, respectively.

\subsubsection{Na\"ive t\^atonnement and its Divergence}

T\^atonnement is an economic dynamical system that captures how prices evolve based on the current demand and supply. 
A price trajectory is described by a time-based parameterization of the price vector: $\bfp(t)$ denotes the price vector at time $t > 0$. Arrow and Hurwicz~\cite{arrow1958stability} consider a continuous-time evolution of the price vector $\bfp(t)$, while a substantial body of work in the EconCS community has considered the evolution of $\bfp(t)$ in discrete time steps\footnote{Primarily to accommodate dynamics in discrete-time decision-making processes in online markets, as well as the possibility of implementation on a computer.}. 
We consider both the continuous and discrete-time settings. In the discrete-time setting, we model the evolution of prices as a sequence of price vectors $\{\bfp^t\}_{t \ge 0}$, where each $\bfp^t$ denotes the market prices at iteration~$t$.

Given any valid price vector $\bfp \in \mathbb{R}^m_{+}$, we define the (aggregated) \emph{excess demand} $Z(\bfp)$ as the \emph{correspondence} $Z(\bfp) = \{\sum_{i \in [n]} \bfx_{i} -\bm{1}_m \mid \bfx_i \in x^*_i(\bfp) \text{ for all } i \}$.
Note that, a price vector $\bfp^*$ is a CE if $\bm{0}_m \in Z(\bfp^*)$.
A t\^atonnement process reads signals from one excess-demand vector in $Z(\bfp)$ and updates prices in response, with the goal of approaching an equilibrium price vector.

As an initial approach, we adapt \emph{na\"ive t\^atonnement} from the goods setting to the chores setting, where the price of a chore is {\em additively} adjusted proportional to the excess demand.\footnote{Typically, this would just be called \emph{t\^atonnement}, we add the modifier \emph{na\"ive} due to its failure in the chores setting, as we demonstrate.}
Given a price vector $\bfp^t$ and the excess demand set $Z(\bfp^t)$, na\"ive t\^atonnement updates the prices as follows \cite{codenotti2005market}: 
\begin{align*}
    \bfp^{t + 1} = \bfp^t - \eta^{t} \bfz^t, \hspace{20pt} \bfz^t \in Z(\bfp^t), 
\end{align*}
where $\eta^t > 0$ is the stepsize. 
We begin by observing the non-convergent behavior of this na\"ive dynamics. 
In Section~\ref{sec: naive unstable}, we present an example with two chores (see Figure~\ref{fig:naive-chores-t\^atonnement}) and a unique CE price vector $\bfp^*=(\frac{1}{2},\frac{1}{2})$. We show that, starting at any $\bfp^0\neq \bfp^*$, the sum of the prices $(p^t_1+p^t_2)$ strictly increases under the na\"ive t\^atonnement dynamics, implying its divergence (Lemma \ref{lem:divergence}). This shows divergence in a strong sense: even when starting {\em arbitrarily} close to the CE, and furthermore, for any choice of (possibly adaptive) step-sizes and tie-breaking rules. Next, 
we extend this divergence result to any class of convex CES disutility functions with $\rho \in [1,\infty)$. 
This stands in stark contrast to the goods setting, where in CES markets (for a full spectrum of $\rho$) na\"ive additive t\^atonnement converges to a CE from any arbitrary initial point~\cite{codenotti2005market,nan2024convergence}.
Finally, 
we also provide an economic explanation for the price divergence under na\"ive t\^atonnement. 

\subsubsection{Convergence of Relative t\^atonnement}

The divergence of na\"ive t\^atonnement necessitates finding an alternate dynamical system which is natural, simple and guaranteed to converge. To this end, we introduce \emph{relative t\^atonnement}. 
Given a price-vector $\bfp$, define the \emph{relative excess demand correspondence} $\tilde{Z}(\bfp) = \{\bfz - \frac{1}{m} \bm{1}^\top_m \bfz \mid \bfz \in Z(\bfp)\}$. Relative t\^atonnement is then defined by replacing $Z(\bfp^t)$ with $\tilde{Z}(\bfp^t)$ in the na\"ive t\^atonnement. 
Formally, the relative t\^atonnement dynamics is as follows:
\begin{equation*}
    \bfp^{t+1} = \bfp^{t} - \eta^t \tilde{\bfz}^t, \hspace{20pt} \tilde{\bfz}^t \in \tilde{Z}(\bfp^t). 
\end{equation*}
    
In Sections \ref{subsec:Relative chores t\^atonnement is stable}, we prove that surprisingly, with such a mild modification, foregoing relative t\^atonnements converge to a CE in the chores market!  Note that, as a neat fix to the previous diverging issue, our price adjustment mechanism responds to the excess demands compared to the averaging level, 
and maintain the total prices at a constant level:
$$\bfp^t \in H_B := \{ \bfp \in \RR^m \mid \sum_{j=1}^m p_j = \sum_{i=1}^n B_i \} \mbox{ for all $t \geq 0$.}$$
This circumvents the problematic scenarios where the total price of all the chores diverges to infinity or zero over time, which consistently happens for na\"ive t\^atonnement dynamics (Lemma \ref{lem:divergence}). The result is first proved  for the continuous-time dynamics and then extended to discrete in Section \ref{subsec:CCH-Discrete}. 
\medskip

\noindent{\bf Sketch of the Convergence Proof.}
We first consider the continuous-time dynamics, defined by $\frac{d \bfp}{d t} \in -\tilde{Z}(\bfp)$, and show that the set of its stationary points (restricted to $H_B$) exactly corresponds to the set of CE (Lemma \ref{key-lem:stationary-point-in-affine-hull-is-CE}). This implies that if the dynamics converge point-wise, then it has to converge to a CE.


So then the next step is to show point-wise convergence. To do this, 
for both continuous and discrete dynamics, 
we introduced a novel {\em Lyapunov potential function} applicable for the general CCH disutility functions: 
\begin{equation}
    \mbox{The restriction $f|_{\Hb}(\bfp):\Delta_B\rightarrow \RR$  where, }  f(\bfp) = - \sum_{j = 1}^m p_j + \sum_{i = 1}^n B_i \log\left( \max_{\bfx_i \geq 0: d_i(\bfx_i) \leq 1} \inp{\bfp}{\bfx_i} \right), 
    \label{tech-highlight:eq:potential}
\end{equation}
where $\Delta_B=H_B \cap \RR^m_+$ is the price simplex. Next we show one-to-one mapping between the generalized subdifferential of $f|_{\Hb}$ and the excess demand correspondence (see Lemma \ref{lem:subdifferentially-regular-and-generalized-subdifferential}):
\begin{lemma*}
    $\partial f\vert_{H_B}(\bfp) = \tilde{Z}(\bfp)$ for any $\bfp \in \Delta_B$. 
\end{lemma*}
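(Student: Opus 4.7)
My plan is to exploit the separable structure $f(\bfp) = -\bm{1}_m^\top \bfp + \sum_{i=1}^n B_i \log g_i(\bfp)$, where I introduce $g_i(\bfp) := \max\{\inp{\bfp}{\bfx_i} : \bfx_i \geq 0,\ d_i(\bfx_i) \leq 1\}$. Because $d_i$ is continuous and convex, the ``earning set'' $S_i := \{\bfx_i \geq 0 : d_i(\bfx_i) \leq 1\}$ is a closed convex subset of $\RR^m_+$, so $g_i$ is precisely the support function of $S_i$, and hence a convex function of $\bfp$. On the relative interior of the price simplex $\Delta_B$ we have $g_i(\bfp) > 0$, so $\log g_i$ is well defined; being the composition of the smooth strictly increasing map $\log$ with a convex (hence Clarke regular) function, $\log g_i$ is itself Clarke regular, even though it is generally nonconvex.

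The first technical step is computing $\partial g_i(\bfp)$. As the subdifferential of a support function, this equals the argmax set
\[
\partial g_i(\bfp) = \{\bfy_i \in S_i : \inp{\bfp}{\bfy_i} = g_i(\bfp)\},
\]
and $1$-homogeneity of $d_i$ together with $\bfp \geq 0$ forces every such maximizer to satisfy $d_i(\bfy_i) = 1$. The chain rule for a smooth function composed with a regular one then gives
\[
\partial \log g_i(\bfp) = \tfrac{1}{g_i(\bfp)}\,\partial g_i(\bfp),
\]
with $\log g_i$ still regular. Next I would identify $B_i\, \partial \log g_i(\bfp)$ with the demand correspondence $x_i^*(\bfp)$: for any maximizer $\bfy_i$, the scaled bundle $\bfx_i := \tfrac{B_i}{g_i(\bfp)} \bfy_i$ satisfies $\inp{\bfp}{\bfx_i} = B_i$ and $d_i(\bfx_i) = B_i/g_i(\bfp)$; a short duality argument using $1$-homogeneity (if some $\bfx'$ earns $B_i$ with lower disutility $c < B_i/g_i(\bfp)$, then $\bfx'/c \in S_i$ satisfies $\inp{\bfp}{\bfx'/c} > g_i(\bfp)$, contradiction) shows that $B_i/g_i(\bfp)$ is exactly the minimum disutility achievable under the earning constraint, and that $\bfy_i \mapsto \tfrac{B_i}{g_i(\bfp)} \bfy_i$ is a bijection between $\partial g_i(\bfp)$ and $x_i^*(\bfp)$.

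Combining the two ingredients via the Clarke sum rule for regular functions (and absorbing the smooth term $-\bm{1}_m^\top \bfp$ whose gradient is $-\bm{1}_m$), I obtain
\[
\partial f(\bfp) = -\bm{1}_m + \sum_{i=1}^n B_i\, \partial \log g_i(\bfp) = \Bigl\{\sum_{i=1}^n \bfx_i - \bm{1}_m : \bfx_i \in x_i^*(\bfp)\Bigr\} = Z(\bfp),
\]
and $f$ is Clarke regular. To pass to the restriction $\fHb$, I invoke the standard fact that the generalized subdifferential of a regular function restricted to an affine subspace equals the orthogonal projection of the ambient subdifferential onto the parallel linear subspace. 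Since the tangent space of $\Hb$ is $\{\bfv \in \RR^m : \bm{1}_m^\top \bfv = 0\}$, this projection sends $\bfz \mapsto \bfz - \tfrac{1}{m}(\bm{1}_m^\top \bfz)\bm{1}_m$; applying it to $Z(\bfp)$ yields exactly $\tilde Z(\bfp)$, so $\partial \fHb(\bfp) = \tilde Z(\bfp)$. The main obstacle I anticipate is the nonsmooth calculus bookkeeping---establishing Clarke regularity of the nonconvex function $\log g_i$ (and hence of $f$), and carefully justifying the sum, chain, and restriction rules in the generalized sense---together with handling boundary prices where some $p_j$ may vanish and the argmax description of $\partial g_i$ can become unbounded.
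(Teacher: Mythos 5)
Your proposal is correct and shares the paper's overall architecture: decompose $f$ into the linear term plus per-agent terms, identify each agent's generalized gradient with her demand correspondence via the earning-maximization/disutility-minimization duality and $1$-homogeneous scaling, apply the sum rule for regular functions, and pass to $\Hb$ by projecting the ambient subdifferential onto the parallel subspace. Where you genuinely differ is the nonsmooth calculus for the per-agent term: the paper moves the logarithm \emph{inside} the maximum and invokes the Rockafellar--Wets rule for a pointwise maximum of smooth functions over a compact index set, which yields subdifferential regularity and the subdifferential formula in a single step; you keep the logarithm \emph{outside}, observe that the inner maximum is the support function of the compact convex set $S_i$ (so its Clarke and convex subdifferentials coincide with the argmax set), and then apply the exact Clarke chain rule for a $C^1$ increasing outer function composed with a regular inner function. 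Both routes are sound; yours stays within classical convex analysis and connects more directly to the gauge-dual viewpoint the paper exploits later, while the paper's max-rule avoids the chain-rule bookkeeping altogether. Two minor points: the ``standard fact'' you invoke for restricting to the affine hull is precisely where the paper spends care (it cites Norkin's pseudogradient framework and justifies the projection formula in an appendix); since you have already established regularity of $f$, it does follow from Clarke's chain rule for affine precompositions, but it merits an explicit justification rather than an appeal to folklore. Also, your anticipated obstacle about unbounded argmax sets at boundary prices is moot: Assumption~1 together with $1$-homogeneity makes $S_i$ compact, so $\partial g_i(\bfp)$ is bounded for every $\bfp$, and $g_i(\bfp) \geq \delta_i \norm{\bfB}_1 > 0$ on $\Delta_B$ keeps $\log g_i$ well defined and locally Lipschitz there.
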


The above lemma shows that the relative t\^atonnement dynamics can be seen as a subgradient dynamical system associated with the unconstrained optimization problem $\min_{p \in \mathbb{R}^m} f\vert_{H_B}$.  
Therefore, convergence of the first-order method to a local-minima of $f\vert_{H_B}$ implies point-wise convergence of the relative T\^atonnement dynamics to a CE (see the proof of Lemma \ref{lem:descent}). Now, our task reduces to show the convergence of the first-order method on $f\vert_{H_B}$ where the tricky part is to handle non-convexity and non-smoothness at the same time. 

To prove the above lemma, Lemma \ref{lem:subdifferentially-regular-and-generalized-subdifferential} first shows that, even as a nonconvex nonsmooth function, $f|_{\Hb}$ is locally Lipschitz and subdifferentially regular.
To show the regularity property, intuitively, we need to show that 
there are no  \emph{downward facing cusps} in the graph of the function.
This property is crucial to prove the convergence of the first-order method (Lemma \ref{lem:descent}), because, in the absence of regularity, the downward facing cusps can cause problems for the first-order method to converge. If they exist even at an exact stationary point, there can be a descent direction that prevents the dynamics from terminating.
In the chores market language, 
our potential function ensures that the relative t\^atonnement dynamics is stable at any exact CE.
We illustrate this point in~\cref{fig:subdifferential-regularity}.
Furthermore, this regularity guarantees strict descent at any non-stationary point, establishing the foundations for convergence of the continuous-time dynamics, formally proved in Theorem \ref{thm:relative-tatonnement-stable}. 

Extending the result to discrete-time dynamics is trickier (Section \ref{subsec:CCH-Discrete}), since in general, it is common for continuous-time gradient
descent procedures to have better convergence properties than their discrete-time analogues. Following the general framework of stochastic approximation ~\cite{davis2020stochastic}, 
we isolate conditions under which the discrete-time sequence can be seen as an approximation to the continuous-time trajectory of the dynamical system. Building on this, we show the convergence of discrete-time relative t\^atonnement dynamics when 
the step-size sequence ($\eta^t, t=1,2,...$) is nonnegative and square summable, but not summable (Theorem \ref{thm:discrete-time-relative-tatonnement-convergence}).


\begin{figure}[t]
    \centering
    \includegraphics[width=0.75\linewidth]{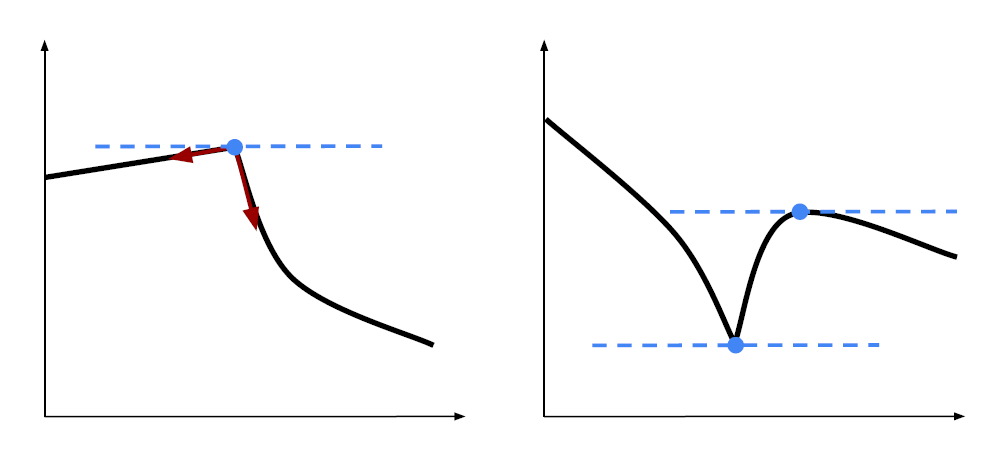}
    \caption{The left is a function with ``downward facing cusps'' and hence is {\em not regular}, while the right is a ``regular'' function.
    Both of them are nonsmooth nonconvex. The left function is not compatible with first-order dynamics because of the descent directions (denoted by red arrows) at the unique stationary point.
    In contrast, a ``regular'' function is nice because all first-order generalized gradients pointing to \emph{non-descent} directions at any stationary point.}
    \label{fig:subdifferential-regularity}
\end{figure}

\medskip

\noindent{\bf CE as KKT points for markets with CCH disutility functions.} For the case of CCH disutility functions, \cite{bogomolnaia2017competitive} characterized CE as KKT points of a mathematical formulation--minimize product of disutilities subject to total allocation equals supply--{\em under additional condition that disutility of no agent should be zero (non-pole KKT points)}.\footnote{Note that, the global minimizer of the product of disutilities will always set one of the agent's disutility to zero. Thus strikingly this characterization requires to avoid the global minimizers!}  This latter (open) constraint turns out to be one of the trickiest to handle in designing efficient algorithms \cite{BoodaghiansCM22}. For the case of linear disutility functions \cite{chaudhury2024competitive} circumvents this through a novel formulation. As a by product of our potential function, we generalize this result to general CCH disutility functions. 
In particular, our potential function naturally provides a descent program that captures all the equilibrium prices as its KKT points. 

Interestingly, by embedding the “\emph{redundant constraint}” (sum of prices equal sum of earnings requirements) in Chaudhury et al. (2024) and the non-negativity constraint, the program circumvents the major “\emph{poles}” issue while preserving the one-to-one mapping between competitive equilibria and KKT points. In this setting, a \emph{pole} is a direction in the domain along which the objective function can go to infinity-- for instance, letting certain variables approach a boundary causes the objective to blow up-- so standard first-order methods may be drawn towards these pathological “\emph{infinite valleys}” rather than meaningful equilibria. By removing these directions via the redundant constraint, the modified program becomes free of poles. This significantly expand the boundary of usage of the optimization methods in solving CE in nonconvex chores markets.
\begin{equation}
    \begin{aligned}
        \min_{\bfp \geq 0} \quad & - \sum_{j = 1}^m p_j + \sum_{i = 1}^n B_i \log\left( \max_{\bfx_i \geq 0: d_i(\bfx_i) \leq 1} \inp{\bf p}{\bf x_i} \right) \\ 
        \textnormal{s.t.} \quad & \sum_{j=1}^m p_j = \sum_{i = 1}^n B_i. 
    \end{aligned}
    \tag{General EG Dual}
    \label{tech-highlight:pgm:general-eg-dual}
\end{equation}

\begin{theorem*}[Informal]
    There is a one-to-one mapping between the set of equilibrium prices in the chores Fisher market and the set of primal points to~\eqref{tech-highlight:pgm:general-eg-dual} that satisfies the KKT conditions.
    Moreover,~\eqref{tech-highlight:pgm:general-eg-dual} has a finite lower bound.
\end{theorem*}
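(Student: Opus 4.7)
The plan is to leverage the structural identity $\partial f|_{H_B}(\bfp) = \tilde{Z}(\bfp)$ from the preceding discussion, together with the companion fact that the unrestricted subdifferential satisfies $\partial f(\bfp) = Z(\bfp)$. Both follow from observing that $g_i(\bfp) = \max_{\bfx \geq 0,\, d_i(\bfx) \leq 1} \inp{\bfp}{\bfx}$ is the support function of a convex compact set, whose subdifferential at $\bfp$ equals the argmax set; the chain rule for $\log g_i$ and scaling by $B_i$ then recovers the agent's optimal demand $\hat{\bfx}_i(\bfp) = (B_i/g_i(\bfp))\bfx_i^*$. Writing the Lagrangian of \eqref{tech-highlight:pgm:general-eg-dual} with multiplier $\mu \in \RR$ for the equality constraint and $\boldsymbol{\lambda} \geq \bm{0}_m$ for the nonnegativity constraints, the KKT conditions become: (i) there exists $\bfz \in Z(\bfp)$ with $\bfz + \mu \bm{1}_m = \boldsymbol{\lambda}$; (ii) $\bfp \in \Delta_B$; and (iii) $\lambda_j p_j = 0$ for every $j$.

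The direction CE $\Rightarrow$ KKT is immediate: if $\bm{0}_m \in Z(\bfp^*)$ at a CE price $\bfp^*$, then the budget balance $\inp{\bfp^*}{\hat{\bfx}_i} = B_i$ forces $\sum_j p_j^* = \sum_i B_i$, so $\bfp^* \in \Delta_B$, and choosing $\bfz = \bm{0}_m$, $\mu = 0$, $\boldsymbol{\lambda} = \bm{0}_m$ verifies everything. For the converse, I would take the inner product of stationarity with $\bfp$: complementary slackness gives $\inp{\bfp}{\boldsymbol{\lambda}} = 0$, while the budget identity together with $\sum_j p_j = \sum_i B_i$ gives $\inp{\bfp}{\bfz} = 0$. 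Plugging both into the scalarized stationarity yields $\mu \sum_i B_i = 0$, hence $\mu = 0$ and $\bfz = \boldsymbol{\lambda} \geq \bm{0}_m$. Combined with $\inp{\bfp}{\bfz} = 0$, this forces $z_j p_j = 0$ coordinatewise: for $p_j > 0$ we get market clearing $z_j = 0$ directly, and for $p_j = 0$ we use that any component of an argmax $\bfx_i^*$ along a zero-price direction can be zeroed without changing $g_i(\bfp)$, yielding an optimal demand with $\hat{x}_{ij} = 0$ and hence an element of $Z(\bfp)$ realizing $z_j \leq 0$; combined with $z_j \geq 0$, this gives $z_j = 0$, so $\bm{0}_m \in Z(\bfp)$ and $\bfp$ is a CE price.

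For the finite lower bound, restrict $f$ to $\Delta_B$, where the linear term $-\sum_j p_j$ collapses to the constant $-\sum_i B_i$. The only remaining part is $\sum_i B_i \log g_i(\bfp)$. Under the mild nondegeneracy assumption that for every $\bfp \in \Delta_B$ there exists $\bfx \geq 0$ with $d_i(\bfx) < \infty$ and $\inp{\bfp}{\bfx} > 0$, the function $g_i$ is strictly positive and continuous on the compact set $\Delta_B$, hence bounded below by a positive constant; therefore $\log g_i$ is bounded below and $f|_{\Delta_B}$ has a finite infimum. The main obstacle is the zero-price case of the KKT $\Rightarrow$ CE direction: while complementary slackness instantly resolves chores with positive prices, showing $z_j = 0$ at chores with $p_j = 0$ requires exploiting the multi-valuedness of the demand correspondence to pick a realization consistent with the stationarity equation, a step that rests on the support-function structure of $g_i$ together with the 1-homogeneity of the underlying disutilities.
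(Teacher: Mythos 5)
Your overall skeleton matches the paper's: the same KKT system, the same inner-product-with-$\bfp$ step to obtain $\mu=0$ and $\bfz=\boldsymbol{\lambda}\geq\bm{0}_m$, the immediate CE-to-KKT direction, and a positivity-of-the-gauge-dual argument for the finite lower bound (the paper makes this explicit via $\delta_i=\sup\{\delta>0 \mid d_i(\delta\cdot\bm{1}_m)\leq 1\}>0$, which also shows the ``mild nondegeneracy assumption'' you add is automatic under CCH disutilities, so that part is fine). The genuine gap is your treatment of chores with $p_j=0$. Under \cref{assump:strictly-increasing}, no optimal bundle ever puts positive weight on a zero-price chore (zeroing that coordinate strictly decreases disutility without affecting earnings), so \emph{every} realization $\bfz\in Z(\bfp)$ has $z_j=-1$ at such a chore; there is no realization with $z_j=0$, and the claimed conclusion ``$z_j=0$ at zero-price chores'' is false. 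In addition, the step ``an element of $Z(\bfp)$ realizing $z_j\leq 0$, combined with $z_j\geq 0$, gives $z_j=0$'' mixes two different selections from the correspondence: $z_j\geq 0$ holds for the particular element pinned down by stationarity, while your re-zeroed demand is a different element, and nothing lets you substitute it into the stationarity equation.

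The correct way to close this case—and what the paper does—is to turn the sign information into a contradiction rather than into market clearing: since every $\bfz\in Z(\bfp)$ has $z_j<0$ whenever $p_j=0$ (in fact whenever $p_j\leq\ell_0$, by \cref{lem:property-no-demand-for-zero-payment}), the nonnegativity $\bfz=\boldsymbol{\lambda}\geq\bm{0}_m$ of the stationarity element rules out any zero price at a KKT point; with all prices strictly positive, complementary slackness then forces $\boldsymbol{\lambda}=\bfz=\bm{0}_m$, i.e., $\bm{0}_m\in Z(\bfp)$ and $\bfp$ is an equilibrium price. So your argument is repairable with a short change, but as written the zero-price case is handled incorrectly.
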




\subsubsection{Polynomial-time Convergence to an Approximate CE}
\label{subsub:poly-convergence}
Next, in Section \ref{sec:poly-time}, we show efficient convergence of relative t\^atonnement for the markets under CES disutility functions with $\rho\in(1,\infty).$  The main challenge in proving (fast) convergence of discrete t\^atonnement dynamics  is linked to the dramatic demand changes that can be triggered by a very small change of price; a hallmark of markets under linear disutilities for example. As such, the signal of the normalized excess demand can be very large, even at a price vector which is very close to a stationary point. This makes the dynamics hard to detect how close a stationary point is.  Even more, with the non-convex property of the potential function $f\vert_{H_B}$, it often breaks the monotonicity of the potential function value, which we usually use to guarantee any non-asymptotic convergence. 

Given that we have already established that relative tâtonnement can be interpreted as a \emph{generalized first-order method} on the potential $f\vert_{H_B}$, an effective approach to proving its fast convergence is to establish the \emph{smoothness} of this potential function. To this end, we focus on the second term $\log\!\left( \max_{\bfx_i \ge 0 : d_i(\bfx_i) \le 1} \langle \bfp, \bfx_i \rangle \right)$. The inner expression,
\[
\max_{\bfx_i \ge 0 : d_i(\bfx_i) \le 1} \langle \bfp, \bfx_i \rangle,
\]
represents the \emph{maximum earning potential} of agent~$i$ at a given price vector~$\bfp$, subject to a unit upper bound on total disutility. From a convex analysis perspective, this quantity can be viewed as the \emph{gauge dual} $d^\circ(\bfp) = \max_{\bfx_i \ge 0 : d_i(\bfx_i) \leq 1} \langle \bfp, \bfx_i \rangle$ of the \emph{primal} disutility function~$d_i(\cdot)$, i.e., the \emph{support function} of the unit disutility ball $\{\bfx_i : d_i(\bfx_i) \le 1\}$.\footnote{Given any convex set $K$, the support function $\sigma_K(\cdot)$ is defined as $\sigma_K(\bf p) = \max_{{\bfx} \in K} \langle \bf p, \bf x \rangle$.}  Hence, our first step is to analyze the \emph{smoothness of the gauge dual} $d^{\circ}(\cdot)$ itself, since smoothness of this dual implies smoothness of the potential function~$f\vert_{H_B}$ after applying additional structural arguments (Theorem~\ref{lem:smooth-f-HB}).

\paragraph{Geometric intuition.} The smoothness of the gauge dual can be understood geometrically via the curvauture of the boundary of the unit disutility ball $\{\bfx_i : d_i(\bfx_i) \leq 1\}$. At any price vector, the gauge dual maps to the point on the boundary, where a hyperplane with the normal vector equal to the price vector forms a supporting hyperplane. The way this contact point moves as prices change depends on the local curvature of the boundary: highly rounded regions produce gently varying maximizers, while flatter regions lead to larger jumps, reflecting reduced smoothness of the dual.
\cref{fig:strongly-convex-set-and-smoothness} illustrates this idea: interpreting the arrows as price vectors, each black dot represents the optimal bundle~$x_i$ that maximizes the agent’s earnings under a disutility constraint. For a CES disutility with $\rho = 2$, the contour is smoothly rounded, so the optimal demand~$\bf x(\bf p)$ varies continuously and gently as prices change—indicating a \emph{smooth} gauge dual. In contrast, when $\rho = 3$, the contour flattens near the axes ($x_1 = 0$ or $x_2 = 0$), implying sharper transitions in the demand correspondence and thus reduced smoothness.

\paragraph{Formal argument.} Typically, smoothness of a dual function is obtained by proving the \emph{strong convexity} of the primal convex function---in our setting, the disutility function~$d_i(\cdot)$. However, CES disutilities fail to satisfy this condition; indeed, they are not even \emph{strictly convex}. 
A significant technical bulk of our work in this section lies in proving that, \emph{despite} this lack of strong convexity, the gauge dual remains sufficiently smooth at least in a local region to ensure polynomial-time convergence of the tâtonnement process. 
We establish this by analyzing two distinct regimes of CES disutilities:
\begin{itemize}
    \item \textbf{Case $\rho \leq 2$:} In this case, 
    we notice that the square of the disutility function is strongly convex (Lemma~\ref{lem:squared_rho=2}), and the corresponding unit sublevel set coincides with that of the original disutility. 
    The strong convexity of this level set then guarantees smoothness properties (Corollary~\ref{coro:rho=2}).
    We establish such a smoothness globally for a broader class of CCH disutilities, including many other interesting disutility functions, e.g., $d_i(\bfx_i) = \norm{A \bfx_i}$ where $A$ is a full rank matrix.
    
    \item \textbf{Case $\rho > 2$:} This case is considerably more subtle. The level sets become increasingly flat along coordinate axes, and strong convexity arguments for the first case no longer apply. 

    To handle this difficulty, we identify a \emph{locally smooth region} of the dual domain-- specifically, the region in which all prices are bounded away from zero (lower bounded), and show that the disutility function is well behaved in this domain (Lemma~\ref{lem:smooth-region}). Our analysis proceeds in two stages. In the first stage, starting from an arbitrary initialization that may lie outside the smooth region, we employ a carefully chosen small constant stepsize to guarantee that the tâtonnement iterates enter the locally smooth region within polynomially many iterations (Lemma~\ref{lem:non-smooth-to-smooth}). In the second stage, once the iterates enter the smooth region, we maintain them there using a discrete-time thresholding argument, ensuring that subsequent updates remain stable (Lemma~\ref{lem:non-smooth-to-smooth}). Together, these arguments establish that the tâtonnement dynamics reach an approximate CE after polynomially many iterations.
     
    
\end{itemize}

We believe that our approach for analyzing the convergence of first-order methods on nonconvex, nonsmooth CES functions via the gauge dual—and the structural properties established in the process—may be of independent interest for the design and analysis of optimization algorithms for general $p$-norm objectives.





\begin{figure}
    \centering
    \includegraphics[width=1\linewidth]{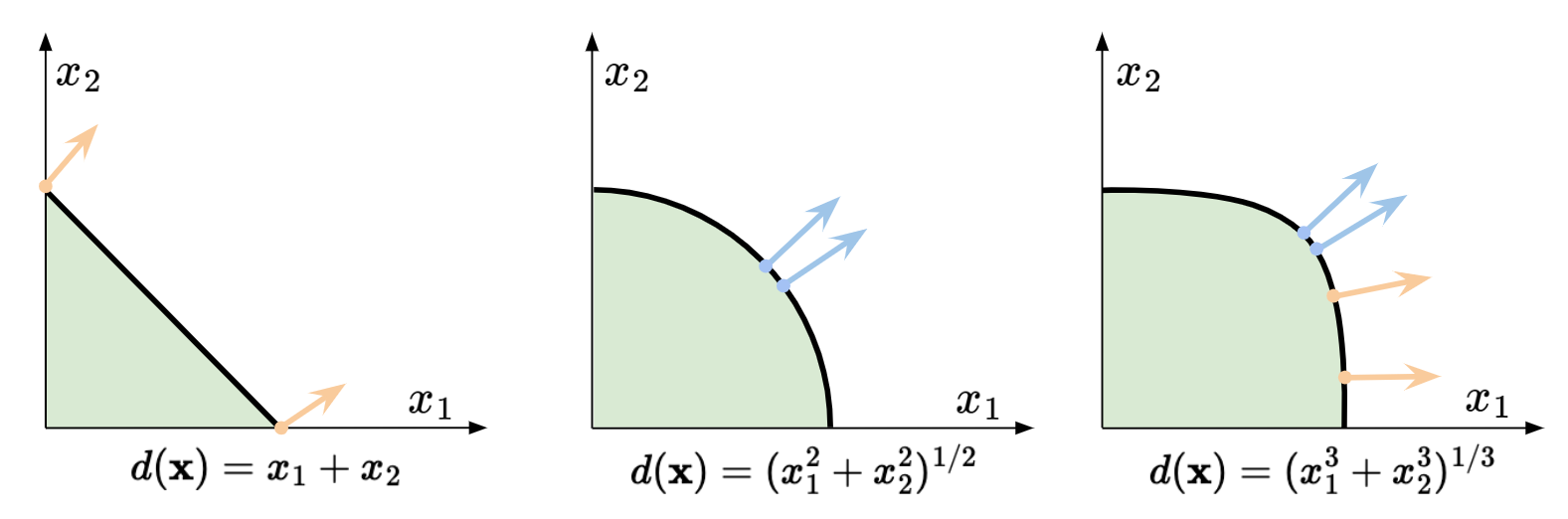}
    \caption{CES disutility functions: $1$-level sets and their smoothness.}
    \label{fig:strongly-convex-set-and-smoothness}
\end{figure}

\subsection{Related Work}

\textbf{Computation of CE.} Computation of CE has been a fundamental part of economics and computation since its outset. 
Earlier work derived various convex programming and linear complementarity problem (LCP) formulations that capture the set of CE in different economic models~\cite{eisenberg1959consensus,nenakov1983one, eaves1975finite}. \cite{devanur2008market} introduced the first polynomial-time algorithm for linear Fisher markets, using a primal-dual approach. Later, \cite{Orlin10} presented the first strongly polynomial-time algorithm for the same setting. A wide range of algorithms have since been explored for computing CE in more general economic models beyond the linear Fisher market, such as the exchange and Arrow-Debreu markets, including interior point methods~\cite{jain2007polynomial,ye2008path}, 
and combinatorial algorithms~\cite{duan2015combinatorial}.

The computation of Competitive Equilibria (CE) in the chores market presents a fundamentally different challenge. This difference primarily stems from the disconnectedness of the set of CEs in contrast to the convexity of CEs in the goods market. In their seminal paper, Bogomolnaia, Moulin, Sandomirskiy, and Yanovskaya~\cite{bogomolnaia2017competitive} remark that \emph{``we expect computational difficulties in problems with many agents
and/or items''}. Therefore, the complexity of finding an exact CE in the chores market is still open. However, there has been a line of work that gives arbitrary good approximations of CE in polynomial time in the chores market: ~\cite{BoodaghiansCM22, ChaudhuryGMM22, chaudhury2024competitive}.~\cite{BranzeiS24} provide a polynomial time algorithm to compute a CE in linear fisher markets when there are constantly many agents or chores.

\vspace{0.15cm} 

\textbf{Economic Dynamics in the Goods Market.}
There has been substantial study on discrete market dynamics from the EconCS community. \cite{codenotti2005market} proved the convergence of discrete t\^atonnement 
to an approximate equilibrium in an exchange economy satisfying \emph{weak gross substitutability (WGS).} \cite{cole2008fast} considered multiplicative {\tatonnement} with artificially upper-bounded excess demands and showed convergence in non-linear \emph{constant elasticity of substitution} (CES) markets satisfying WGS. 
Thereafter, \cite{cheung2012tatonnement} extended this analysis to some non-WGS markets. 
Recently, \cite{nan2024convergence} showed convergence results for additive t\^atonnement in Fisher markets with linear utilities.
There is also research~\cite{cheung2019tatonnement, cole2019balancing,goktas2023tatonnement} studying the convergence of \emph{entropic  t\^atonnement} and its modifications.

Another popular dynamic explored in the goods market is \emph{proportional response dynamics}~\cite{zhang2011proportional}. Here, in every time-step, agents update their \emph{bids} on each good, \emph{proportional} to the utility they derived from the good in the previous round. The goods are then distributed in proportion to the bids, and the per-unit price of a good is set to be the sum of bids on that good. Proportional response dynamics are known to converge to a CE for CES utilities~\cite{zhang2011proportional}.~\cite{birnbaum2011distributed} interpret proportional allocation as mirror descent applied to the convex program for Fisher markets proposed by~\cite{shmyrev2009algorithm}. The full survey of proportional response dynamics is well beyond the scope of the paper and we refer the reader to~\cite{branzei2021proportional} for a detailed survey.

\section{Preliminaries}

Throughout, $\norm{\cdot}$ denotes the Euclidean norm (i.e., $\ell_2$ norm) and $\inp*{\cdot}{\cdot}$ denotes the Euclidean inner product. 
$\mathbb{R}^m_+$ denotes the $m$-dimensional nonnegative orthant and $\mathbb{R}^m_{++}$ denotes the $m$-dimensional positive orthant. 
We use $\textnormal{argmin}$ to denote the set of minimizers. The convex hull is denoted as $\textnormal{Conv}(\cdot)$. We use $[m]$ to denote $\{ 1, \ldots, m \}$. 
For conciseness, we sometimes omit the index set (e.g., $[m]$) in the sum if the sum is taken over the full index set. 
We use $\mathcal{B}_r(x)$ to denote the $\ell_2$ ball centered at $x$ with radius $r$. 
We use $\mathbf{1}_d$ and $\mathbf{0}_d$ to denote $d$-dimensional all-one and all-zero vectors, respectively. 
We use $\lvert \cdot \rvert$ to denote the vector of component-wise absolute values of a vector.

\subsection{Fisher Market with Chores} 

A chores Fisher market consists of $n$ agents $m$ \emph{divisible} chores. Without loss of generality, the supply (available quantity) of each chore is assumed to be one.
In this market, each agent $i$ needs to \emph{earn} a ``budget'' of $B_i > 0$ (which we also refer to as an \emph{earning requirement}). 
Without loss of generality, each chore $j$ is assumed to have unit supply.
The prices for the chores are denoted by a price vector $\bfp \in \mathbb{R}^m_+$.
The disutility of agent $i$ is a convex function $d_i: \mathbb{R}^m_+ \rightarrow \mathbb{R}_{++}$ of her allocated bundle of chores $\bfx_i = (x_{i1}, \ldots, x_{im}) \in \mathbb{R}^m_+$. 
\medskip

\noindent{\em Disutility Functions.} In this paper, we assume that each disutility function $d_i: \mathbb{R}^m_+ \rightarrow \mathbb{R}$ is continuous, convex, and homogeneous  degree one (CCH). Additionally, we will require the following form of strict non-satiation for each coordinate of the disutility function:
\begin{assumption}
    The disutility functions are strictly increasing in each coordinate.
    \label{assump:strictly-increasing}
\end{assumption}


The most common CCH disutility functions are \emph{convex CES disutility} functions: 
\begin{equation*}
    d_i(\bfx_i) = \left( \sum\nolimits_{j = 1}^m d_{ij} x_{ij}^\rho \right)^{\frac{1}{\rho}}, 
\end{equation*}
where $d_{ij} > 0$ for all $i \in [n], j \in [m]$, and $\rho \in [1, \infty)$ is a parameter characterizing the agent's preference curvature.\footnote{If $d_{ij} = 0$ for some $i$, chore $j$ can be allocated to agent $i$ without affecting her disutility. In this case, the chore $j$ can be removed from the market.}

In~\cref{fig:ces}, we plot iso-disutility curves for three convex CES functions with different parameter $\rho$. These plots illustrate that each function captures a different behavioral pattern in how disutility arises from bundles of chores.
\begin{figure}
    \centering
    \includegraphics[width=0.325\linewidth]{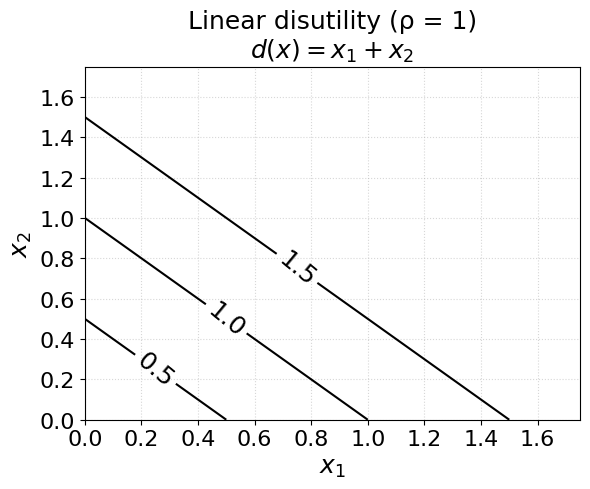}
    \includegraphics[width=0.325\linewidth]{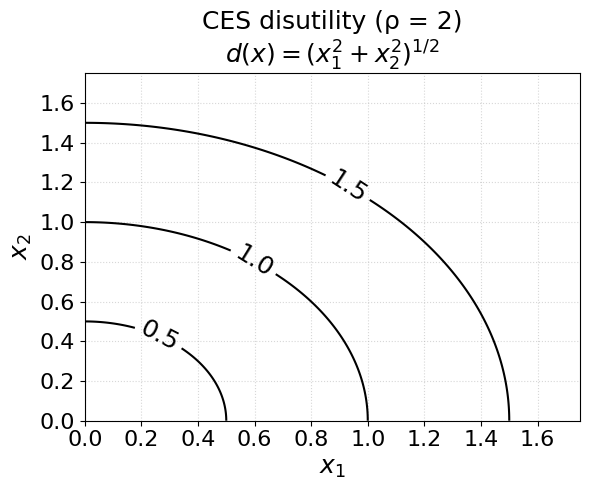}
    \includegraphics[width=0.325\linewidth]{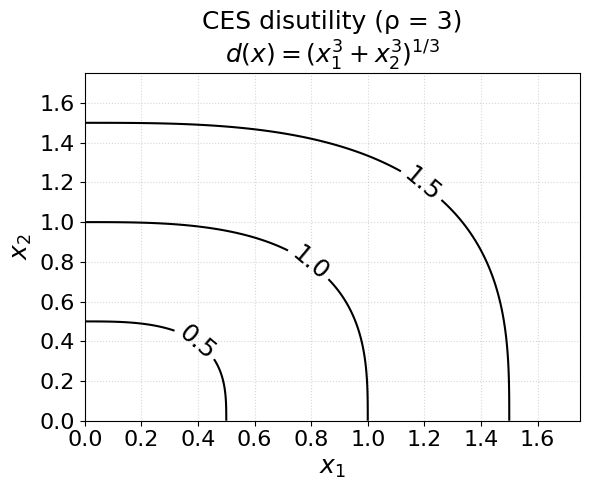}
    
    \caption{Iso-disutility curves for convex CES disutility functions.}
    \label{fig:ces}
\end{figure}

\medskip

\noindent{\em Competitive Equilibrium.}
At any given prices $\bfp$, agents want to minimize their disutility subject to their earnings requirement. 
Agent $i$'s \emph{demand correspondence} $X_i: \mathbb{R}^m_+ \rightrightarrows \mathbb{R}^m_+$ is thus defined by 
\begin{equation*}
    X_i(\bfp) := \textnormal{argmin}_{\bfx_i \in \mathbb{R}^m_+} \left\{ d_i(\bfx_i) \left| \, \inp*{\bfp}{\bfx_i} \geq B_i
    \right. \right\}. 
\end{equation*}


As state in~\cref{sec:intro}, we say a pair of prices and allocation $(\bfp^*, \bfx^*)$ is a competitive equilibrium if it satisfies 
\begin{enumerate}[label=(\roman*)]
    \item \emph{optimal bundles}: $\bfx^*_i \in X_i(\bfp^*)$ for each agent $i \in [n]$ and 
    \item \emph{market clearance}: $\sum_{i=1}^n x^*_{ij} = 1$ for each chore $j \in [m]$.
\end{enumerate}
\medskip

\noindent{\em CES Demand.} For convex CES functions, the demand correspondence may be single- or multi-valued, depending on $\rho$. 
In the linear case ($\rho=1$), an agent may have multiple optimal demand bundles at a given price vector. 
In particular, she selects a convex combination of the chores with \emph{minimum pain-per-buck} $\min_{j \in [m]} d_{ij} / p_j$.
For $\rho \in (1, \infty)$, each agent has a unique demand given any $\bfp \in \RR^m_+ \setminus \{ \mathbf{0}_m \}$. 
\medskip

\noindent{\em Excess Demand.} A crucial quantity for studying market dynamics is the \emph{excess demand} correspondence $Z: \mathbb{R}^m_+ \rightrightarrows \mathbb{R}^m_+$: 
\begin{equation*}
    Z(\bfp) := \left\{ \sum_{i = 1}^n \bfx_{i} - \mathbf{1}_m \, \Big| \, \bfx_{i} \in X_i(\bfp) \; \forall\, i \in [n] \right\}.  
\end{equation*} 
It can be seen that $\bfp^*$ is an equilibrium price if and only if $\bm{0}_m \in Z(\bfp^*)$.

\subsection{Economic Dynamical System, Stability and T\^atonnement} 

In this work, we study t\^atonnement as economic dynamical systems in the context of chores. In this section, we start with the definitions analogous to those introduced by the classical work of ~\cite{arrow1958stability} for the goods case, which are for continuous-time dynamics. The extension to discrete-time dynamics is defined and studied in Section \ref{sec:discrete}.

An economic dynamical system describes a price adjustment process which can be expressed as the following \emph{differential inclusion}: 
\begin{equation}
    \frac{\partial \bfp}{\partial t} \in D(\bfp). 
    \label{econ-differential-inclusion-vec}
\end{equation} 
where $D: \mathbb{R}^m_+ \rightrightarrows \mathbb{R}^m$ is some \emph{set-valued mapping}.

A \emph{price arc} $\bfp: \mathbb{R}_+ \rightarrow \mathbb{R}^m$
is a function (mapping time $t$ to a price vector) that is \emph{absolutely continuous} if 
\begin{equation*}
    \bfp(t) = \bfp(0) + \int_0^t \dot{\bfp}(s) ds \quad \forall\; t \geq 0, 
\end{equation*}
where $\dot{\bfp}(t) = d \bfp(t) / d t$ for almost all $t \geq 0$. 
A price arc $\bfp(t)$\footnote{$\bfp(t)$ is also called a \emph{solution} to the system of the differential inclusion.} is called a continuous-time \emph{trajectory}/\emph{path} of~\cref{econ-differential-inclusion-vec} with $\bfp^0$ as the \emph{initial point} if 
\begin{equation}
    \dot{\bfp}(t) \in D(\bfp(t)) \quad \text{for almost all $t \geq 0$} \quad \text{and} \quad \bfp(0) = \bfp^0. 
    \label{econ-differential-inclusion-solution-trajectory}
\end{equation}
A point $\bfp^* \in \mathbb{R}^m_+$ is called a \emph{stationary point} of~\cref{econ-differential-inclusion-vec} if 
$\bm{0}_m \in D(\bfp^*)$. 

Analogous to the definitions in~\cite{arrow1958stability}, we define the stability of chores CE based on an economic dynamical system. 
Before introducing the concepts of convergence and stability, we first define the affine hull $H_B$ where the sum of all prices is constrained to equal the total budgets in the market, i.e., 
\begin{equation*}
    \Hb = \left\{ \bfp \in \mathbb{R}^m \; \left\vert \; (\mathbf{1}_m)^\top \bfp = \lVert \bfB \rVert_1  \right. \right\}. 
\end{equation*}
We also define $\Delta_B = H_B \cap \RR^m_+$, which we refer to as the \emph{price simplex}.
A fact for Fisher markets is that equilibrium prices always lie in $\Delta_B$. 
In the chores case, we will consider a slightly weaker stability notion based on $\Delta_B$, compared to analogous definitions in~\cite{arrow1958stability}. 
In particular, we consider any trajectory of the dynamical system starting from any initial point within $\Delta_B$, instead of any price in the nonnegative orthant. 

\begin{definition}[Local stability of chores CE]
\label{def:stability}
    An equilibrium price $\bfp^*$ is \emph{locally stable} w.r.t. dynamics~\cref{econ-differential-inclusion-vec} if there exists a non-empty neighborhood $N(\bfp^*) \subset \mathbb{R}^m_+$ of $\bfp^*$ such that for \emph{any} initial point $\bfp(0) \in N(\bfp^*) \cap \Delta_B$ every trajectory of~\cref{econ-differential-inclusion-vec} converges to $\bfp^*$, i.e., 
    $
        \lim_{t \rightarrow \infty} \bfp(t) \rightarrow \bfp^*. 
    $
\end{definition} 

An equilibrium price is said to be locally unstable if it is not locally stable. 
Recall that one of the main characteristics of chores CE is that there may exist multiple disconnected equilibria. 
As such, we should not expect for there to be a single equilibrium that all prices will converge to, but the dynamical system as a whole may still be endowed with the following important stability property. 

\begin{definition}[Stability of dynamical system]
    A dynamical system~\cref{econ-differential-inclusion-vec} is \emph{stable} if 
    for every $\bfp(0) \in \Delta_B$ and every trajectory of~\cref{econ-differential-inclusion-vec}, there is some stationary point $\bfp^*$ such that $\lim_{t \rightarrow \infty} \bfp(t) \rightarrow \bfp^*$. 
\end{definition}
In short, a dynamical system is said to be stable if the continuous-time dynamics converge for \emph{any} initial point.

\medskip

\noindent{\bf T\^atonnement.} The most natural choice for a price adjustment process is setting $D$ in~\cref{econ-differential-inclusion-vec} proportional to the \emph{(aggregate) excess demand} $Z(\bfp)$. 
This type of price adjustment process is commonly known as t\^atonnement. 

In chores markets, items become more desired by agents as prices increase. 
Correspondingly, we shall decrease (resp. increase) the price of an item if there is positive excess demand (resp. negative excess demand or excess supply). 
Thus, a natural economic dynamics is 
\begin{equation}
    \frac{d \bfp}{d t} \in -Z(\bfp). 
    \label{naive-chores-tatonnement}
    \tag{Na\"ive t\^atonnement}
\end{equation}
We refer to these dynamics as \emph{na\"ive t\^atonnement}. 
In words, the dynamics simply decrease prices proportional to excess demand, and vice versa for excess supply in an additive manner. 
Because agents are indifferent among their optimal bundles given a price vector, we will sometimes specify the dynamics for a specific tie-breaking rule.
By definition, it is direct to show that any stationary point of~\eqref{naive-chores-tatonnement} corresponds to a competitive equilibrium.

\subsection{Nonsmooth Analysis Basics}
As mentioned in~\cref{sec:intro}, our analysis requires the use of some basic tools from nonsmooth analysis. 

A function $f: \mathbb{R}^n \rightarrow \mathbb{R}$ is called \emph{locally Lipschitz continuous} (or \emph{locally Lipschitz}) if for any compact set $U \subset \mathbb{R}^n$ there exists a (local) Lipschitz constant $L_U$ such that 
\begin{equation*}
    \lvert f(\bfx) - f(\bfy) \rvert \leq L_U \lVert \bfx - \bfy \rVert \quad \text{for any } \bfx, \bfy \in U. 
\end{equation*}
Rademacher's theorem guarantees that a locally Lipschitz function is differentiable \emph{almost everywhere}. 
For a locally Lipschitz function $f$, the \emph{Clarke subdifferential} of $f$ at any point $\bfx$ is 
\begin{equation*}
    \partial f(\bfx) := \textnormal{Conv}\left( \left\{ \lim_{i \rightarrow \infty} \nabla f(\bfx_i): \bfx_i \rightarrow \bfx, \bfx_i \notin \Omega, \bfx_i \notin \Omega_f \right\} \right), 
\end{equation*}
where $\Omega$ is an arbitrary measure-zero subset of $\mathbb{R}^n$, and $\Omega_f$ is the set of points in $\mathbb{R}^n$ at which $f$ is not differentiable~\cite[Theorem 8.1]{clarke2008nonsmooth}. The clarke subdifferential is known to be a nonempty, compact, convex set for each $\bfx \in \mathbb{R}^n$. 
An element in $\partial f(\bfx)$ is called a \emph{Clarke generalized gradient} of $f$ at $\bfx$. 

Next we define \emph{subdifferentially regular} functions. We say a locally Lipschitz function $f$ is \emph{subdifferentially regular} at a point $\bfx$ if every Clarke generalized gradient $\mathbf{g} \in \partial f(\bfx)$ yields an affine minorant of $f$ up to first-order~\cite[Definition 5.3]{davis2020stochastic}: 
\begin{equation*}
    f(\bfy) \geq f(\bfx) + \inp*{\mathbf{g}}{\bfy - \bfx} + o\left( \norm{\bfy - \bfx} \right) \quad \text{as } \bfy \rightarrow \bfx. 
\end{equation*} 
A function $f$ is said to be subdifferentially regular if it satisfies subdifferential regularity at every point in its domain. 
A nice property of a subdifferentially regular function is that it does not contain \emph{downward facing cusps} in the graph of the function; Figure \ref{fig:subdifferential-regularity} provides a pictorial explanation of downward (and upward) facing cusps, and (non) regularity. As simple examples, $f(x) = \lvert x \rvert$ is subdifferentially regular and $f(x) = -\lvert x \rvert$ is not. 
The subdifferentially regular functions enjoy many attractive properties~\cite{li2020understanding}.

\section{Relative t\^atonnement}  




In this section, we first show that the typical, and most natural, t\^atonnement dynamics given in (\ref{naive-chores-tatonnement}) does not converge, in a strong sense, for the case of chores markets. We then propose new (still additive) t\^atonnement dynamics by replacing the excess demand with the relative excess demand, and show that this new dynamics converge, point-wise, to a competitive equilibrium. 
In the process, we obtain a new program whose KKT points correspond to chores CE for general CCH disutility functions, which is a result of broader interest.

\subsection{Non-convergence of Na\"ive t\^atonnement}
\label{sec: naive unstable}

We first study the continuous-time na\"ive t\^atonnement dynamics. 
The following lemma shows that the na\"ive t\^atonnement dynamics can diverge away from the set of CE, no matter which tie-breaking rule we use. 
In the subsequent section, we prove that the discrete-time version of the na\"ive t\^atonnement inherits this divergence behavior (see~\cref{lem:discrete-time-naive-ttm-divergence}).


\begin{lemma}
    For any $\rho \in [1, \infty)$, 
    there is a $\rho$-CES chores Fisher market with a unique CE price vector $\bfp^*$, 
    such that for every initial price vector $\bfp^0 \in \{ \bfp \in \RR^m_+ \mid \sum_{j=1}^m p_j \geq \sum_{i=1}^n B_i, \bfp \neq \bfp^* \}$, the continuous-time na\"ive t\^atonnement dynamics equipped with \emph{any} tie-breaking rule diverges from $\bfp^*$.
    \label{lem:continuous-time-naive-ttm-divergence}
\end{lemma}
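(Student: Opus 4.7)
The plan is to exhibit a symmetric family of $\rho$-CES markets on which $\phi(\bfp) := \sum_{j} p_j$ serves as a strict Lyapunov function whose value strictly \emph{increases} along every na\"ive t\^atonnement trajectory that starts away from $\bfp^*$. Since every CE satisfies $\sum_j p_j = \sum_i B_i$, a strict increase of $\phi$ forces the trajectory to leave, and never return to, the equilibrium hyperplane, which rules out convergence to $\bfp^*$.

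Concretely I would pick $n = m = 2$, budgets $B_1 = B_2 = \tfrac{1}{2}$, and the identical disutility $d_i(\bfx_i) = (x_{i1}^{\rho} + x_{i2}^{\rho})^{1/\rho}$; the CCH hypothesis and Assumption~\ref{assump:strictly-increasing} are immediate. The first task is to verify that $\bfp^* = (\tfrac{1}{2}, \tfrac{1}{2})$ is the \emph{unique} CE. For $\rho > 1$, demand is single-valued and given by $x_{ij} = B_i\, p_j^{q} / \sum_k p_k^{q+1}$ with $q = 1/(\rho - 1)$, so market clearing forces all $p_j^q$ to be equal, which together with $\sum_j p_j = 1$ yields $\bfp^*$. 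For $\rho = 1$, a short pain-per-buck analysis shows that whenever $p_1 \neq p_2$ both agents pile onto the cheaper chore and the expensive one fails to clear; hence again $\bfp^*$ is the only CE.

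The central step is the bound
\[
\sum_{i=1}^{n} \sum_{j=1}^{m} x_{ij}(\bfp) \;<\; m \qquad \text{for every } \bfp \in \{\bfq \in \RR^m_+ : \sum_{j=1}^m q_j \ge 1\} \setminus \{\bfp^*\},
\]
taken under an arbitrary selection from the demand correspondence. Since $\tfrac{d}{dt}\phi(\bfp(t)) = -\sum_j z_j = m - \sum_{i,j} x_{ij}(\bfp(t))$ for almost every $t$, this inequality is precisely the strict Lyapunov property we need. For $\rho > 1$ I would first establish, via Chebyshev's sum inequality applied to the similarly ordered sequences $(p_j)$ and $(p_j^q)$, the bound $\sum_j p_j^q \le m \sum_j p_j^{q+1}$ on the simplex $\sum_j p_j = 1$, with equality iff all $p_j$ coincide; a scaling argument then promotes this to strict inequality whenever $\sum_j p_j > 1$. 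For $\rho = 1$ the same bound follows by direct case analysis, and because the \emph{aggregate} $\sum_{i,j} x_{ij}(\bfp)$ is uniquely determined even at tie points, the whole argument is automatically robust to every tie-breaking rule.

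The main obstacle I anticipate is the Chebyshev step: the exponents $q$ and $q+1$ straddle $1$ as $\rho$ crosses $2$, so a direct Jensen argument is not tight enough to deliver both the bound and its equality case uniformly in $\rho$, and the correct monotonicity-based tool is Chebyshev's inequality. Once this is secured, the finish is short: along any trajectory with $\bfp^0 \neq \bfp^*$ and $\sum_j p_j^0 \geq 1$, the map $t \mapsto \phi(\bfp(t))$ is absolutely continuous with strictly positive time derivative off $\bfp^*$, so $\phi$ strictly exceeds $1$ for all $t > 0$ and can never fall back to $1$, which excludes convergence of $\bfp(t)$ to $\bfp^*$.
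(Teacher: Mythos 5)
Your proposal is correct and is essentially the paper's own argument: the instance is the same up to splitting one unit-budget agent into two half-budget copies (aggregate demand is identical), the Lyapunov function is the same $\phi(\bfp)=\sum_j p_j$, and the crux is the same inequality $\frac{p_1^{q}+p_2^{q}}{p_1^{q+1}+p_2^{q+1}}<2$ on $\{\sum_j p_j\ge 1\}\setminus\{\bfp^*\}$ with $q=1/(\rho-1)$; the only difference is that you verify it via Chebyshev's sum inequality on the simplex plus a $1$-homogeneous scaling step, whereas the paper uses a direct sign/case analysis of $(2p_1-1)p_1^{q}+(2p_2-1)p_2^{q}>0$ — both are valid, and your $\rho=1$ tie-robustness observation (the aggregate demand $B/\max\{p_1,p_2\}$ is tie-invariant) matches the paper's. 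One small slip in your uniqueness sketch for $\rho=1$: in a chores market agents flock to the \emph{higher}-priced chore (it minimizes pain-per-buck), so it is the cheaper chore that receives no demand and fails to clear; the conclusion that no CE has $p_1\neq p_2$ is unaffected.
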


\begin{proof}
    Consider an instance with a single agent and 2 chores. 
    The agent's disutility is a convex CES function of the two chores: 
    $d_1(\bfx_1) = (x_{11}^\rho + x_{12}^\rho)^{1/\rho}$.
    The agent has an earning requirement of $B_1 = 1$.
    It is easy to verify that this market has a unique CE with prices $\bfp^* = (\frac{1}{2}, \frac{1}{2})$ and allocation $\bfx^*_1 = (1, 1)$.
    We will show that the sum of prices diverges to $+\infty$.

    We first consider $\rho = 1$, i.e., a linear chores Fisher market. 
    Consider a price vector $\bfp \in \RR^2_+$.
    The excess demand mapping at $\bfp$ is 
    \begin{equation*}
        \begin{aligned}
            Z(\bfp) = \begin{cases}
            \{ \big( \frac{1}{p_1} - 1, -1 \big) \} & p_1 > p_2 \geq 0 \\ 
            \{ \lambda \big( \frac{1}{p} - 1, - 1 \big) + (1 - \lambda) \big( - 1, \frac{1}{p} - 1 \big) \mid \lambda \in [0, 1] \} & p_1 = p_2 = p > 0 \\ 
            \{ \big( - 1, \frac{1}{p_2} - 1 \big) \} & p_2 > p_1 \geq 0. 
        \end{cases}
        \end{aligned}
    \end{equation*}

Then, one can show that the sum of prices follows a dynamical system as follows: 
\begin{equation*}
    \frac{d}{d t} (p_1 + p_2) = 2 - \frac{1}{\max\{ p_1, p_2 \}}. 
\end{equation*}

Note that when $p_1 = p_2$, $\frac{d}{d t}(p_1 + p_2)$ turns out to be invariant under the choice of tie-breaking rule.
Conditioning on $p_1 + p_2 \geq 1$, we have $\max\{ p_1, p_2 \} \geq \frac{1}{2}\left( \max\{ p_1, p_2 \} + \min\{ p_1, p_2 \} \right) = \frac{1}{2}\left( p_1 + p_2 \right) \geq \frac{1}{2}$, and the equality holds only if $p_1 = p_2 = \frac{1}{2}$, i.e., $\bfp = \bfp^*$.
Hence, 
\begin{equation*}
    \frac{d}{d t} (p_1 + p_2) > 0 \hspace{30pt} \forall\; \bfp \in \{ \bfp \in \RR^2_+ \,|\, p_1 + p_2 \geq 1, \bfp \neq \bfp^* \}.
\end{equation*}
This implies that the continuous-time na\"ive t\^atonnement dynamics diverge away from $\bfp^*$.

We now consider $1 < \rho < \infty$. 
In this case, the excess demand set is a singleton for any $\bfp \in \RR^2_{+}$. By calculation, we have that the prices follow a dynamical system as follows: 
\begin{equation*}
    \frac{d}{d t} (p_1) = 1 - \frac{p_1^{1/(\rho - 1)}}{p_1^{\rho / (\rho - 1)} + p_2^{\rho / (\rho - 1)}}, \hspace{10pt} \frac{d}{d t} (p_2) = 1 - \frac{p_2^{1/(\rho - 1)}}{p_1^{\rho / (\rho - 1)} + p_2^{\rho / (\rho - 1)}}
\end{equation*}
and hence 
\begin{equation*}
    \frac{d}{d t}(p_1 + p_2) = 2 - \frac{p_1^{1/(\rho - 1)} + p_2^{1/(\rho - 1)}}{p_1^{\rho / (\rho - 1)} + p_2^{\rho / (\rho - 1)}}. 
\end{equation*}

Next, we show that 
\begin{equation*}
    \frac{p_1^{1/(\rho - 1)} + p_2^{1/(\rho - 1)}}{p_1^{\rho / (\rho - 1)} + p_2^{\rho / (\rho - 1)}} < 2 \hspace{30pt} \forall\; \bfp \in \{ \bfp \in \RR^2_+ \,|\, p_1 + p_2 \geq 1, \bfp \neq \bfp^* \}, 
\end{equation*}
which is equivalent to 
\begin{equation}
    (2 p_1 - 1) p_1^{1/(\rho - 1)} + (2 p_2 - 1) p_2^{1/(\rho - 1)} > 0 \hspace{30pt} \forall\; \bfp \in \{ \bfp \in \RR^2_+ \,|\, p_1 + p_2 \geq 1, \bfp \neq \bfp^* \},  
    \label{inequality-to-show-continuous-time}
\end{equation}
To show~\cref{inequality-to-show-continuous-time}, we discuss the three cases:
(1) if $2p_1 - 1 \geq 0$ and $2p_2 - 1 \geq 0$, then \cref{inequality-to-show-continuous-time} trivially holds. 
Otherwise, (2) if $2p_1 - 1 < 0$, then we have $2p_2 - 1 > 0$ and $| 2p_2 - 1 | \geq | 2p_1 - 1 |$ because $2p_1 - 1 + 2p_2 - 1 \geq 0$. 
Also, we have $p_2^{1/(\rho - 1)} > p_1^{1/(\rho - 1)} \geq 0$ because $p_2 > p_1 \geq 0$.
Hence, \cref{inequality-to-show-continuous-time} holds in this case;  
(3) if $2p_2 - 1 < 0$, \cref{inequality-to-show-continuous-time} also holds by a symmetric argument. 
Therefore, $\frac{d}{d t}(p_1 + p_2) > 0$ for all $\bfp \in \{ \bfq \in \RR^2_+ \, | \, q_1 + q_2 \geq 1, \bfq \neq \bfp^* \}$. 
Therefore, similar to the case of $\rho = 1$, 
the continuous-time na\"ive t\^atonnement dynamics diverge away from $\bfp^*$.
\end{proof}

\begin{figure}[t]
    \centering
    \includegraphics[width=0.7\linewidth]{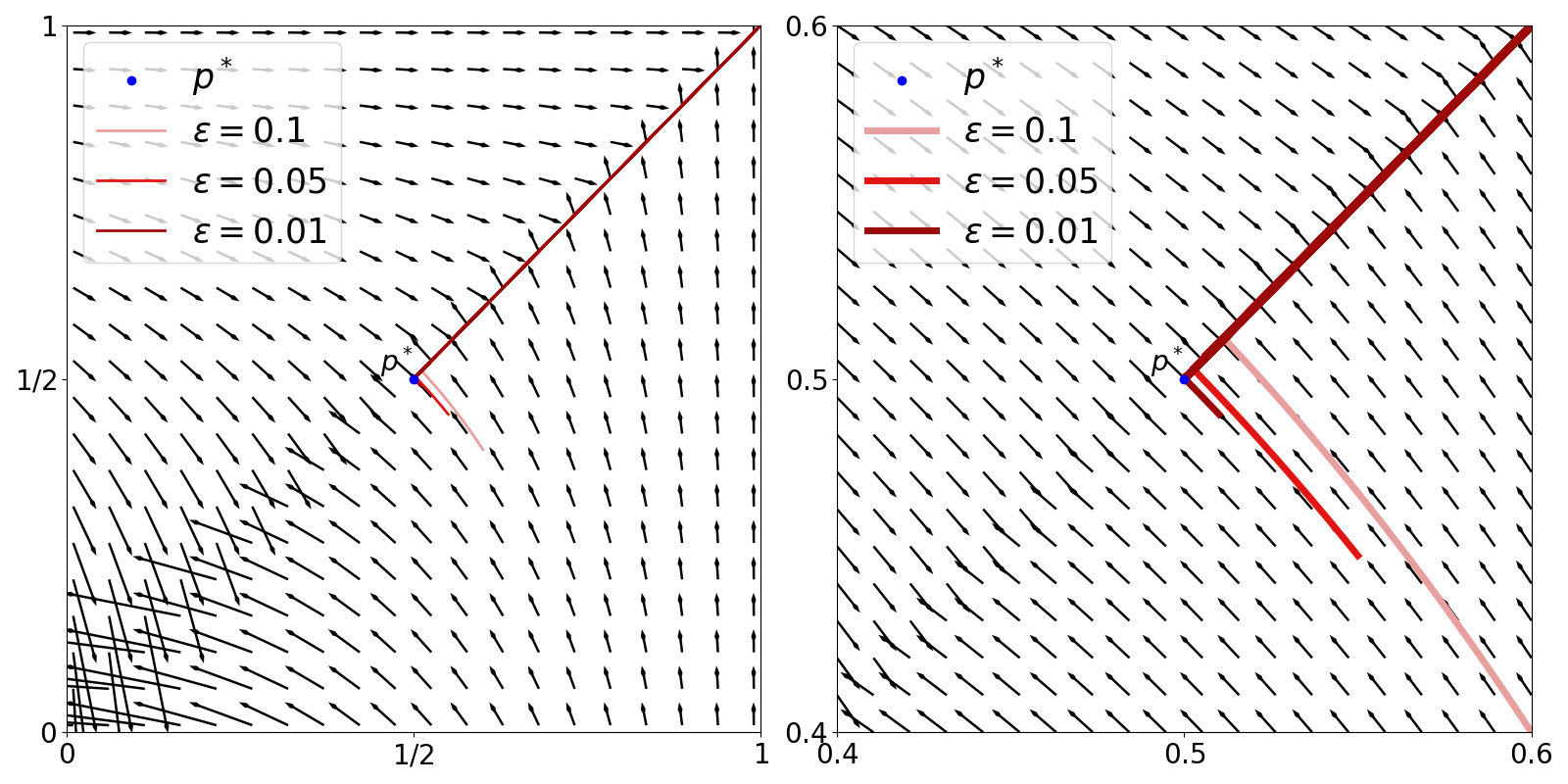}
    \caption{
    A trajectory of na\"ive t\^atonnement starting from a neighborhood of a CE. In the right plot, we zoom into the neighborhood. This shows na\"ive t\^atonnement is unstable in the linear chores market. }
    \label{fig:naive-chores-t\^atonnement}
\end{figure}
In Section \ref{sec:discrete} (Lemma \ref{lem:discrete-time-naive-ttm-divergence}) we show a similar (strong) divergence result for the discrete-time counterpart of \eqref{naive-chores-tatonnement}. Note that, Lemmas \ref{lem:continuous-time-naive-ttm-divergence} and \ref{lem:discrete-time-naive-ttm-divergence} imply divergence of \eqref{naive-chores-tatonnement} in a strong sense:  divergence of both continuous and discrete-time dynamics, no matter the step-sizes, no matter the starting point, and no matter if point-wise or time-average ({\em ergodic}).
These results show that, unlike the case of Fisher markets with goods, t\^atonnement dynamics are unstable in the chores case. 
This difference can be interpreted as follows. 
In the goods case, a lower price leads to higher excess demand, as buyers end up buying more of the goods. 
As such, the excess demand serves as a ``correct'' indicator for the direction of price adjustment in the goods case. 
However, for chores a lower price can lead to either a higher or lower excess demand, depending on the situation. 
On the one hand, lowering the price may cause an agent to want to perform a different chore, thus decreasing demand as expected. On the other hand, if the price is lowered, but the buyer still prefers to perform the chore, then their demand will go \emph{up}, because now they need to do more of the chore in order to satisfy their earning requirement.
It follows from the above that excess demand increases can be triggered by both price decreases and price increases. 
As a result, the pure excess demand might be a ``deceptive'' signal for the chores t\^atonnement process. 

Another insight from our examples is that the excess demand cannot calibrate prices if the price vector deviates from the budget-constrained price space $H_B$. Let $\bfp^*$ be any CE in an arbitrary chores market instance. 
Let us consider a slightly inflated price $\bfp^\uparrow = (1 + \delta) \bfp^*$ for some $\delta > 0$. 
Then, one possible excess demand vector at $\bfp^\uparrow$ (there may be others due to tie-breaking) is the vector where every entry is $-\frac{\delta}{1 + \delta}$, which leads to a price-adjustment direction pointing away from $\bfp^*$. 
Similarly, the excess demand of the prices $\bfp^\downarrow = (1 - \delta) \bfp^*$ can indicate the wrong direction as well. 
This explains why we need the initial price to be within the linear manifold $H_B$. 
However, as shown by the above instance, even starting from a $\bfp^0 \in H_B$, the trajectory of na\"ive chores t\^atonnement can deviate from $H_B$. 
In the next section, we introduce a modified excess demand function which will enable t\^atonnement to avoid this issue. 

\subsection{Convergence of Relative T\^atonnement} 
\label{subsec:Relative chores t\^atonnement is stable}

As shown above, na\"ive chores t\^atonnement is not stable and generally does not converge to CE in chores markets. 
We now show that it is possible to find a new dynamical system that is stable, while still being a fairly natural price-adjustment process for a chores market.

Our new t\^atonnement process will be based on the \emph{relative excess demand} $\tilde{Z}(\bfp): \mathbb{R}^m_+ \rightrightarrows \mathbb{R}^m$ defined as 
\begin{equation*}
    \tilde{Z}(\bfp) := \left\{ \bfz - \frac{1}{m} \mathbf{1}_m^\top \bfz \; \Big\vert \; \bfz \in Z(\bfp) \right\}.   
\end{equation*} 
The benefit of employing this notion as a price-adjustment indicator is clear: the net excess demand of a chore only reflects the relative level of the excess demand compared to the averaged excess demand of the other chores. This will make the scenario in \cref{sec: naive unstable}, where both chores are overdemanded yet the price goes up, impossible. 
Mathematically, this ensures that the price-adjustment signal lies in the subspace parallel to $\Hb$, thereby ensuring that if the prices are already in $\Hb$ then they remain in $\Hb$ after the price update.

By replacing $Z(\bfp)$ in~(\ref{naive-chores-tatonnement}) with $\tilde{Z}(\bfp)$, we have a new dynamical system: 
\begin{equation}
    \frac{d \bfp}{d t} \in -\tilde{Z}(\bfp). 
    \label{chores-tatonnement}
    \tag{Relative t\^atonnement}
\end{equation}
We call these dynamics (relative) chores t\^atonnement. 
We next show that this dynamical system captures chores CE as its stationary points and is stable in the sense of \cref{def:stability}. 

First, we show that any stationary point of \textnormal{(\ref{chores-tatonnement})} that lies in the affine hull $\Hb$ corresponds to a chores CE. 
\begin{lemma}
    Any price vector $\bfp^* \in \Hb$ which is a stationary point of \textnormal{(\ref{chores-tatonnement})} corresponds to a chores CE. Conversely, any equilibrium price vector in a chores CE is a stationary point of \textnormal{(\ref{chores-tatonnement})} in $\Hb$.
    \label{key-lem:stationary-point-in-affine-hull-is-CE}
\end{lemma}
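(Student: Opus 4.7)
\smallskip
\noindent\textbf{Proof proposal.} The plan is to exploit a Walras-type identity together with the fact that $\bfp^* \in \Hb$ in order to upgrade a stationary point of the relative excess demand into a zero of the raw excess demand. Concretely, note that $\bm{0}_m \in \tilde{Z}(\bfp^*)$ means exactly that there exists $\bfz \in Z(\bfp^*)$ with $\bfz = \tfrac{1}{m}(\mathbf{1}_m^\top \bfz)\mathbf{1}_m$, i.e.\ $\bfz = c\mathbf{1}_m$ for some scalar $c$. The goal is to show $c=0$.

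To do so, I first establish the identity $\inp{\bfp^*}{\bfz} = \sum_i B_i - \sum_j p^*_j$ for every $\bfz \in Z(\bfp^*)$. This uses the demand-side observation that at an optimal bundle $\bfx_i \in X_i(\bfp^*)$, the earning constraint binds with equality, i.e.\ $\inp{\bfp^*}{\bfx_i} = B_i$. This follows from Assumption~\ref{assump:strictly-increasing} (strict monotonicity of $d_i$ in each coordinate): any slack would let us shrink some coordinate $x_{ij}$ with $p^*_j>0$ and strictly reduce disutility while preserving feasibility. Summing over $i$ and subtracting $\sum_j p^*_j \cdot 1$ gives the identity. Since $\bfp^* \in \Hb$, the right-hand side equals zero, so $\inp{\bfp^*}{\bfz}=0$. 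Plugging $\bfz = c\mathbf{1}_m$ yields $c\lVert \bfB \rVert_1 = 0$, hence $c=0$ and $\bfz = \bm{0}_m$, which is exactly the CE condition $\bm{0}_m \in Z(\bfp^*)$.

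For the converse, suppose $(\bfp^*,\bfx^*)$ is a chores CE. Then $\bm{0}_m \in Z(\bfp^*)$, and since $\bm{0}_m - \tfrac{1}{m}(\mathbf{1}_m^\top \bm{0}_m)\mathbf{1}_m = \bm{0}_m$, we also have $\bm{0}_m \in \tilde{Z}(\bfp^*)$, so $\bfp^*$ is a stationary point of \eqref{chores-tatonnement}. It remains to check $\bfp^* \in \Hb$: by market clearance $\sum_i x^*_{ij}=1$ for all $j$, and by the binding-budget observation above, $\sum_j p^*_j = \sum_j p^*_j \sum_i x^*_{ij} = \sum_i \inp{\bfp^*}{\bfx^*_i} = \sum_i B_i = \lVert \bfB \rVert_1$.

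The only nontrivial step is the binding-budget identity; everything else is algebraic bookkeeping. I expect the main obstacle to be justifying that each agent's spending is \emph{exactly} $B_i$ despite the minimization problem having an inequality constraint and the disutility potentially being non-differentiable; the argument relies crucially on Assumption~\ref{assump:strictly-increasing} to rule out strict inequality in the earning constraint at any optimal bundle.
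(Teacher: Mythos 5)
Your proof is correct and takes essentially the same route as the paper's: both reduce stationarity in $\Hb$ to a constant excess demand $\bfz = c\mathbf{1}_m$ and then use the binding earning constraints (a Walras-type identity, which the paper encodes via the earned amounts $b_{ij}$ with $\sum_j b_{ij}=B_i$) together with $\sum_j p^*_j = \lVert \bfB \rVert_1$ to force $c=0$. Your explicit justification of the binding constraint via Assumption~\ref{assump:strictly-increasing} and your verification that CE prices lie in $\Hb$ (which the paper invokes as a standard fact) are only presentational differences.
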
 
\begin{proof}[Proof of~\cref{key-lem:stationary-point-in-affine-hull-is-CE}]
    If $\bfp^*$ corresponds to a CE, then we have $\mathbf{0}_m \in Z(\bfp^*)$ and thus $\mathbf{0}_m \in \tilde{Z}(\bfp^*)$. 
    Conversely, if $\mathbf{0}_m \in \tilde{Z}(\bfp^*)$ and $\bfp^* \in \Hb$, 
    then there exists $\bfz = (z_1, z_2, \ldots, z_m)$ such that $z_j - \frac{1}{m} \mathbf{1}_m^\top \bfz = 0$ for all $j \in [m]$, which implies $z_1 = \cdots = z_m = c$ where $c$ is a constant. 
    By the definition of excess demand and agent optimality, there exist nonnegative $\{ b_{ij} \}_{i \in [n], j \in [m]}$ such that 
    $\sum_{j = 1}^m b_{ij} = B_i \; \forall\, i \in [n]$ and 
    $\sum_{i=1}^n b_{ij} = (1 + c) p^*_j \; \forall\, j: p^*_j > 0$. 
    The latter follows from rearranging $\frac{\sum_{i=1}^n b_{ij}}{p_j^*} - 1 = z_j = c \; \forall\, j$ for $j$ such that $p^*_j > 0$.
    Also, $b_{ij} = 0$ for all $j$ implies $p^*_j = 0$ for all $i \in [n]$. 
    This leads to 
    \begin{equation*}
        \sum_{i=1}^n B_i = \sum_{i=1}^n \sum_{j=1}^m b_{ij} = \sum_{j: p^*_j > 0} \sum_{i=1}^n b_{ij} = (1 + c) \sum_{j: p^*_j > 0} p^*_j = (1 + c) \sum_{j=1}^m p^*_j. 
    \end{equation*}
    Since $\bfp^* \in H_B$, we must have $c = 0$. 
    Hence, we have $\bm{0}_m \in Z(\bfp^*)$, 
    which means that $\bfp^*$ is an equilibrium price vector. 
\end{proof}

Next, we show that any \textnormal{(\ref{chores-tatonnement})} trajectory that is initialized from $\Delta_B$ stays within $H_B$.
In fact, for any continuous-time trajectory, we can ensure that it stays within $\Delta_B$.
To show this result, 
we define the following useful modulus for each agent $i$: 
\begin{equation}
    \nu_i := \min \left\{\frac{d'_{ij}}{d'_{ij'}}\ \bigg|\ {j, j' \in [m]},  \tilde{x}_{ij} \geq  \frac{1}{2 n m}, \tilde{x}_{ij'} \leq \frac{2m B_i}{\norm*{\bfB}_1},  \bfd'_i \in \partial d_i(\tilde{\bfx}_i) \right\}
    \label{eq:nu defn}
\end{equation}
where $\partial d_i(\tilde{\bfx}_i)$ is the subdifferential of $d_i$ at $\tilde{\bfx}_i$.
In short, this modulus $\nu_i$ measures the minimum ratio of the marginal disutilities across chores, when the agent's allocation for each chore is lower and upper bounded. 



For each agent $i$,  $\nu_i > 0$ is guaranteed if $d_i$ satisfies~\cref{assump:strictly-increasing}. 
For example, 
for linear disutilities $d_i(\bfx_i) = \sum_{j=1}^m d_{ij} x_{ij}$, $\partial d_i(\bfx_i)$ is independent of $\bfx_i$ and $\nu_i$ is simply $\min_{j, j' \in [m]} \frac{d_{ij}}{d_{ij'}} > 0$. 
For CES disutility function with $\rho \in (1, \infty)$, i.e., $d_i(\bfx_i) = (\sum_{j=1}^m d_{ij} x_{ij}^\rho)^{1/\rho}$, this modulus is 
\begin{equation}
    \nu_i = \min_{j, j' \in [m]} \frac{d_{ij}}{d_{ij'}} \left( \frac{\norm*{\bfB}_1}{4 n m^2 B_i} \right)^{\rho - 1} > 0. 
    \label{eq:nui defn CES}
\end{equation}

Let 
\begin{equation}
    \ell_0 := \frac{\norm*{\mathbf{B}}_1}{3m} \min_{i \in [n]} \nu_i.
    \label{eq:def-ell-0}
\end{equation}
In a chores Fisher market with CCH disutilities satisfying~\cref{assump:strictly-increasing}, $\ell_0 > 0$.

\begin{lemma}
    In a chores Fisher market with CCH disutilities satisfying~\cref{assump:strictly-increasing}, given any $\bfp \in \mathbb{R}^m_+$ such that $\max_{j \in [m]} p_{j} \geq \frac{\norm*{\mathbf{B}}_1}{2m}$, for any chore $j$, if $p_j \leq \ell_0$ then 
    $z_j < \frac{1}{2m} - 1 < 0$ for all $\bfz \in Z(\bfp)$.
    Additionally assuming $\max_{j \in [m]} p_{j} \leq \tfrac{3}{2} \norm{\bfB}_1$, 
    if $p_j \leq \ell_0$ then $\tilde{z}_j < -\frac{1}{6m} < 0$ for all $\tilde{z} \in \tilde{Z}(\bfp)$.
    \label{lem:property-no-demand-for-zero-payment}
    \label{lem:excess-demand-barrier}
\end{lemma}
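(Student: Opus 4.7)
I will reduce the first claim to a per-agent demand bound: for every agent $i$ and every optimal bundle $\bfx_i \in X_i(\bfp)$, if $p_j \leq \ell_0$ then $x_{ij} < \frac{1}{2nm}$. Summing this across the $n$ agents yields $\sum_i x_{ij} < \frac{1}{2m}$, and hence $z_j = \sum_i x_{ij} - 1 < \frac{1}{2m} - 1 < 0$ for every selection $\bfz \in Z(\bfp)$.

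To establish the per-agent bound, I fix $i$ and any $\bfx_i \in X_i(\bfp)$. By \cref{assump:strictly-increasing} the earning constraint $\inp*{\bfp}{\bfx_i} \geq B_i$ is tight at the optimum, so the KKT conditions produce a subgradient $\bfd'_i \in \partial d_i(\bfx_i)$ and a multiplier $\lambda > 0$ with $d'_{ij'} = \lambda p_{j'}$ whenever $x_{ij'} > 0$ and $d'_{ij'} \geq \lambda p_{j'}$ otherwise. Let $j^* \in \textnormal{argmax}_{j'} p_{j'}$; the hypothesis $p_{j^*} \geq \frac{\norm*{\bfB}_1}{2m}$ together with the budget identity $p_{j^*} x_{ij^*} \leq B_i$ gives $x_{ij^*} \leq \frac{2m B_i}{\norm*{\bfB}_1}$. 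Now suppose for contradiction that $x_{ij} \geq \frac{1}{2nm}$ for some $j$ with $p_j \leq \ell_0$. The pair $(j, j^*)$ then satisfies both threshold conditions in \eqref{eq:nu defn} at the allocation $\bfx_i$, so by definition of the modulus $\frac{d'_{ij}}{d'_{ij^*}} \geq \nu_i$. The KKT relations (using $x_{ij} > 0$) give $\frac{d'_{ij}}{d'_{ij^*}} \leq \frac{p_j}{p_{j^*}}$. Combining with \eqref{eq:def-ell-0},
\begin{equation*}
p_j \;\geq\; \nu_i\, p_{j^*} \;\geq\; \frac{\nu_i \norm*{\bfB}_1}{2m} \;>\; \frac{\nu_i \norm*{\bfB}_1}{3m} \;\geq\; \ell_0,
\end{equation*}
contradicting $p_j \leq \ell_0$. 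This proves the per-agent bound, and hence the first claim.

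For the second claim, the added hypothesis $\max_{j'} p_{j'} \leq \frac{3}{2}\norm*{\bfB}_1$ gives a lower bound on aggregate demand. Summing the tight earning identities over agents, $\norm*{\bfB}_1 = \sum_{j'} p_{j'}\bigl(\sum_i x_{ij'}\bigr) \leq \frac{3}{2}\norm*{\bfB}_1 \sum_{j'}\sum_i x_{ij'}$, whence $\sum_{j'}\sum_i x_{ij'} \geq \frac{2}{3}$ and so $\frac{1}{m}\mathbf{1}_m^\top \bfz \geq \frac{2}{3m} - 1$. Substituting this together with $z_j < \frac{1}{2m} - 1$ into $\tilde z_j = z_j - \frac{1}{m}\mathbf{1}_m^\top \bfz$ yields $\tilde z_j < \frac{1}{2m} - \frac{2}{3m} = -\frac{1}{6m}$, as desired.

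The step I expect to demand the most care is the KKT manipulation in the contradiction argument: each $d_i$ is only convex and $1$-homogeneous, so it need not be differentiable at the (possibly boundary) bundle $\bfx_i$, and I must work with an appropriately chosen subgradient of $d_i$ and interpret $\nu_i$ via \eqref{eq:nu defn}. Strict non-satiation forces $\lambda > 0$, hence $d'_{ij^*} \geq \lambda p_{j^*} > 0$, so the ratios $d'_{ij}/d'_{ij^*}$ are well defined and the comparison with $\nu_i$ is legal; this is what makes the whole chain of inequalities go through uniformly in the choice of optimal bundle and subgradient.
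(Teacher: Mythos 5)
Your proposal is correct and takes essentially the same approach as the paper: it bounds each agent's demand for any chore priced at most $\ell_0$ by comparing it with the maximum-price chore through the modulus $\nu_i$ (you express the agent-optimality step via explicit KKT multipliers, while the paper phrases it as a pain-per-buck comparison, but this is the same first-order condition), and then derives the second claim from the same lower bound on the mean excess demand obtained by summing the tight earning identities.
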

\begin{proof}
    Given the conditions in the first part of the statement, 
    suppose that there is an excess demand vector $\bfz \in Z(\bfp)$ such that 
    $z_j \geq \frac{1}{2m} - 1$, then there has to be an agent $i$ who demands at least $\frac{1}{2 n m}$ of chore $j$, i.e., there exists $i \in [n]$ and 
    \begin{equation*}
        \bfx'_i \in \textnormal{argmin}_{\bfy_i \in \mathbb{R}^m_+} \left\{ d_i(\bfy_i) \left| \, \inp*{\bfp}{\bfy_i} \geq B_i
        \right. \right\}
    \end{equation*}
    such that $x'_{ij} \geq \frac{1}{2 n m}$. 
    Let $\bar{j}$ be the chore with the maximum price $p_{\bar{j}} > \frac{\norm*{\mathbf{B}}_1}{2m}$. 
    Observe that $x'_{i \bar{j}} \leq \frac{B_i}{p_{\bar{j}}} < \frac{2m B_i}{\norm*{\bfB}_1}$. 
    Then, we have for any $\bfd'_i \in \partial d_i(\bfx'_i)$, $d'_{ij} > 0$ and $d'_{i\bar{j}} < \infty$ since $\min_{i \in [n]} \nu_i > 0$, and 
    \begin{equation*}
        \frac{p_j}{d'_{ij}} \leq \frac{\ell_0}{d'_{ij}} = \frac{\frac{\norm*{\mathbf{B}}_1}{3m} \min_{i \in [n]} \nu_i}{d'_{i\bar{j}} \cdot ({d'_{ij}}/{d'_{i\bar{j}}})} \overset{\textnormal{(a)}}{<} \frac{\norm{\bfB}_1}{2m d'_{i\bar{j}}} \leq \frac{p_{\bar{j}}}{d'_{i\bar{j}}}, 
    \end{equation*}
    where (a) is by the definition of $\nu_i$ in \cref{eq:nu defn}.
    This contradicts the optimality of agent $i$. 

    For the second part, if additionally we have $\max_{j \in [m]} p_{j} \leq \tfrac{3}{2} \norm{\bfB}_1$, 
    then for each $\bfz \in Z(\bfp(t))$, 
    letting $b_{ij}$ denotes the money agent $i$ earns from chore $j$ corresponding to $\bfz$, 
    the mean of the excess demand is lower bounded by 
    \begin{equation}
        \frac{1}{m} \sum_{j=1}^m z_{j} = 
        \frac{1}{m} \sum_{j = 1}^m \left( \sum_{i=1}^n \frac{b_{ij}}{p_j} - 1 \right) \geq \frac{1}{m} \sum_{j = 1}^m \left( \frac{2}{3\norm{\bfB}_1} \sum_{i=1}^n b_{ij} - 1 \right) = \frac{2\sum_{i=1}^n\sum_{j = 1}^m b_{ij}}{3m\norm{\bfB}_1}  - 1 = \frac{2}{3m} - 1.  
        \label{eq:excess-demand-mean-lower-bound}
    \end{equation}
    This implies that for any $\tilde{\bfz}\in \tilde{Z}(\bfp(t))$, $\tilde{z}_j = z_j - \frac{1}{m}\sum_{j'=1}^m z_{j'} < (\frac{1}{2m} - 1) - (\frac{2}{3m} - 1) = - \frac{1}{6m} < 0$ for all $j \in [m]$, where $\bfz$ is an excess demand vector supporting $\tilde{\bfz}$.
\end{proof}

\begin{lemma}
    In a chores Fisher market with CCH disutilities satisfying~\cref{assump:strictly-increasing}, 
    for any $\bfp \in \mathbb{R}^m_+$ such that $\max_{j \in [m]} p_{j} \geq \frac{\norm*{\mathbf{B}}_1}{2m}$, we have $\norm{\tilde{\bfz}} < \infty$ for all $\tilde{\bfz} \in \tilde{Z}(\bfp)$.
    \label{lem:finite-rate}
\end{lemma}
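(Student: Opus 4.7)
The plan is to bound $\norm{\bfz}$ for an arbitrary $\bfz \in Z(\bfp)$ coordinate-by-coordinate, and then use the fact that $\tilde{\bfz}$ is the orthogonal projection of $\bfz$ onto the hyperplane $\{\mathbf{v} : \mathbf{1}_m^\top \mathbf{v} = 0\}$ so that $\norm{\tilde{\bfz}} \le \norm{\bfz}$.

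First I would fix $\bfz \in Z(\bfp)$ together with its supporting bundles $\bfx_i \in X_i(\bfp)$. The key preliminary observation is that the earning constraint is tight at the optimum: since $d_i$ is CCH and strictly increasing in each coordinate (\cref{assump:strictly-increasing}), any feasible bundle with $\inp{\bfp}{\bfx_i} > B_i$ can be scaled down by some $\lambda < 1$, which by 1-homogeneity strictly lowers the disutility (note $d_i(\bfx_i) > 0$ since $\bfx_i \neq \mathbf{0}$ is forced by $B_i > 0$). Hence $\inp{\bfp}{\bfx_i} = B_i$, which immediately gives the pointwise bound $p_j x_{ij} \le B_i$ for every $i, j$.

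Next I would split $[m]$ into the large-price set $J_+ = \{j : p_j > \ell_0\}$ and the small-price set $J_- = \{j : p_j \le \ell_0\}$ (recall $\ell_0 > 0$ under \cref{assump:strictly-increasing}). For $j \in J_+$, the bound above yields $x_{ij} \le B_i/\ell_0$, hence $0 \le \sum_i x_{ij} \le \norm{\bfB}_1/\ell_0$ and therefore $|z_j| \le \max\bigl(1,\, \norm{\bfB}_1/\ell_0 - 1\bigr)$. For $j \in J_-$, the hypothesis $\max_{j'} p_{j'} \ge \norm{\bfB}_1/(2m)$ allows us to invoke \cref{lem:excess-demand-barrier}, which gives $-1 \le z_j < \tfrac{1}{2m} - 1$ and thus $|z_j| \le 1$.

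Combining the two cases,
\begin{equation*}
    \norm{\bfz} \le \sqrt{m}\,\max\!\left(1,\, \tfrac{\norm{\bfB}_1}{\ell_0} - 1\right) < \infty,
\end{equation*}
and the projection inequality $\norm{\tilde{\bfz}} \le \norm{\bfz}$ concludes the proof. I do not expect any substantive obstacle here: the only subtlety is verifying that the earning constraint binds (using 1-homogeneity together with \cref{assump:strictly-increasing} to rule out $d_i(\bfx_i)=0$), and ensuring that the case $p_j = 0$ is subsumed by $J_-$ and therefore handled by \cref{lem:excess-demand-barrier} rather than by the divide-by-$p_j$ argument.
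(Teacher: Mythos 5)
Your proposal is correct and follows essentially the same route as the paper's proof: split the chores according to whether $p_j \leq \ell_0$ or $p_j > \ell_0$, invoke \cref{lem:excess-demand-barrier} for the low-price chores, and use the budget bound $z_j \leq \norm{\bfB}_1/p_j - 1$ for the high-price ones before passing from $\bfz$ to $\tilde{\bfz}$. The only difference is that you spell out details the paper leaves implicit (the binding earning constraint and the projection inequality $\norm{\tilde{\bfz}} \leq \norm{\bfz}$), which is fine.
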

\begin{proof}
    Since $\ell_0 > 0$, 
    by~\cref{lem:excess-demand-barrier},
    if $p_j \leq \ell_0$ then 
    $-1 \leq z_j < \frac{1}{2m} - 1$ thus $| z_j | < \infty$ for all $\bfz \in Z(\bfp)$. 
    On the other hand, if $p_j \geq \ell_0$, then $z_j \leq \frac{\norm{\bfB}_1}{p_j} \leq \frac{\norm{\bfB}_1}{\ell_0} < \infty$.
    It then follows that $\norm{\tilde{\bfz}} < \infty$ for all $\tilde{\bfz} \in \tilde{Z}(\bfp)$. 
    \qedhere
\end{proof}

\begin{lemma}
    In a chores Fisher market with CCH disutilities satisfying~\cref{assump:strictly-increasing}, 
    \emph{any} trajectory $\bfp(t)$ of the dynamical system \textnormal{(\ref{chores-tatonnement})} starting from any initial point $\bfp(0) \in \Delta_B$ satisfies that 
    $\bfp(t) \in \Delta_B$ for all $t \geq 0$. 
    \label{lem:well-defined-equilibrium}
\end{lemma}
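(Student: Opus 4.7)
The plan is to verify the two defining properties of $\Delta_B = H_B \cap \mathbb{R}^m_+$ separately along the trajectory: first the affine constraint $\mathbf{1}_m^\top \bfp(t) = \norm*{\bfB}_1$, and then the nonnegativity $\bfp(t) \geq 0$.

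First I would show the affine constraint is preserved. By the very definition of $\tilde{Z}$, every $\tilde{\bfz} \in \tilde{Z}(\bfp)$ satisfies $\mathbf{1}_m^\top \tilde{\bfz} = 0$ since subtracting the mean of $\bfz$ from each coordinate zeros out its sum. As $\bfp(\cdot)$ is absolutely continuous with $\dot{\bfp}(t) \in -\tilde{Z}(\bfp(t))$ for almost every $t$, it follows that $\frac{d}{dt}\mathbf{1}_m^\top \bfp(t) = 0$ a.e., and integration yields $\mathbf{1}_m^\top \bfp(t) = \mathbf{1}_m^\top \bfp(0) = \norm*{\bfB}_1$ for every $t \geq 0$. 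Hence $\bfp(t) \in \Hb$ throughout.

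Second, I would argue nonnegativity via a barrier argument based on \cref{lem:excess-demand-barrier}. Because $\bfp(t) \in \Hb$, as long as the coordinates are nonnegative we have $\max_j p_j(t) \in [\norm*{\bfB}_1/m,\; \norm*{\bfB}_1]$, which in particular satisfies both $\max_j p_j(t) \geq \norm*{\bfB}_1/(2m)$ and $\max_j p_j(t) \leq (3/2)\norm*{\bfB}_1$. These are exactly the hypotheses required by the second part of \cref{lem:excess-demand-barrier}, which then yields $\tilde{z}_j < -1/(6m)$ for every $\tilde{\bfz} \in \tilde{Z}(\bfp(t))$ and every coordinate $j$ with $p_j(t) \leq \ell_0$. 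Now define $\tau := \inf\{t \geq 0 : p_j(t) < 0 \text{ for some } j\}$ and suppose for contradiction $\tau < \infty$. By continuity some coordinate $j_0$ satisfies $p_{j_0}(\tau) = 0$, and by definition of $\tau$ we have $\bfp(t) \geq 0$ for all $t \leq \tau$. Continuity of $p_{j_0}(\cdot)$ with $p_{j_0}(\tau) = 0 < \ell_0$ yields $\delta > 0$ with $p_{j_0}(t) \leq \ell_0$ on $[\tau - \delta, \tau]$. On this interval the barrier bound applies, so $\dot{p}_{j_0}(t) = -\tilde{z}_{j_0}(t) > 1/(6m)$ for a.e.\ $t \in [\tau - \delta, \tau]$. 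Integrating gives $p_{j_0}(\tau) \geq p_{j_0}(\tau - \delta) + \delta/(6m) > 0$, contradicting $p_{j_0}(\tau) = 0$. Combining with the first step, $\bfp(t) \in \Delta_B$ for all $t \geq 0$.

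The main obstacle is handling the set-valued nature of $\tilde{Z}$ carefully: the bound $\tilde{z}_{j_0} < -1/(6m)$ must hold for \emph{every} measurable selection $\tilde{\bfz}(t) \in \tilde{Z}(\bfp(t))$ that the trajectory could pick along the barrier interval, not merely a single one, and this is precisely why it matters that \cref{lem:excess-demand-barrier} is stated uniformly over $\tilde{Z}(\bfp)$. A related subtlety is that one cannot invoke the barrier at $\tau$ itself unless the two-sided bound on $\max_j p_j(\cdot)$ is already in force on a left neighborhood of $\tau$; this is handled by bootstrapping the $\Hb$-invariance from the first step together with nonnegativity inherited from the definition of $\tau$ for $t \leq \tau$.
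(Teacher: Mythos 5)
Your proposal is correct and follows essentially the same route as the paper: preserve membership in $\Hb$ by integrating $\sum_j \tilde z_j = 0$ along the absolutely continuous arc, then use the uniform barrier bound of \cref{lem:excess-demand-barrier} to rule out any coordinate reaching zero. Your nonnegativity step is in fact a slightly more careful rendition than the paper's (which asserts differentiability and a positive derivative at the hitting time), since you integrate the a.e.\ bound $\dot p_{j_0} > 1/(6m)$ over a left neighborhood of the first hitting time $\tau$, which is fully justified by absolute continuity.
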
 
\begin{proof}
    Consider a trajectory $\bfp(t)$ of \textnormal{(\ref{chores-tatonnement})} starting from any initial point $\bfp(0) = \bfp^0 \in \Delta_B$. 
    We first show that the trajectory will stay within the affine hull $\Hb$ by contradiction.
    Suppose that there is a $t_0$ such that the price trajectory leaves $H_B$ for the first time, i.e., 
    \begin{equation*}
        t_0 := \inf\{ t > 0 \mid \bfp(t) \notin H_B \}. 
    \end{equation*}
    
    Then, by~\cref{lem:finite-rate}, $\norm{\dot{\bfp}(t)} < \infty$ for $t \in [0, t_0]$. 
    As $\bfp(t)$ is an absolutely continuous curve, we have 
    \begin{equation} 
        \sum_{j \in [m]} p_j(t_0) = \sum_{j \in [m]} p_j(0) + \sum_{j \in [m]} \int_0^{t_0} \dot{p}_j(s) d s = \sum_{j \in [m]} p_j^0 + \int_0^{t_0} \sum_{j \in [m]} \dot{p}_j(s) d s = \norm{\bfB}_1, \; \forall\, t \in [0, t_0] 
        \label{eq:trajectory-in-HB-general-convex}
    \end{equation} 
    where the last equality follows from $\sum_{j=1}^m \tilde{z}_j = 0 \;\forall\, \tilde{z} \in \tilde{Z}(\bfp(t))$, and $\bfp^0 \in H_B$. 
    This means $\bfp(t_0) \in H_B$, which contradicts the definition of $t_0$.

    Next we show that $\bfp(t) \geq 0$ for all $t \geq 0$. 
    This can be seen from the following argument. 
    Consider any $t^0_j$ such that $p_j(t^0_j) = 0$.
    By~\cref{lem:excess-demand-barrier,lem:finite-rate}, 
    $p_j(t)$ is differentiable and $\dot{p}_j(t^0_j) > 0$, which implies that 
    $p_j(t^0_j + h) > p_j(t^0_j) = 0$ for any sufficiently small $h > 0$.
    Hence, the trajectory cannot cross from nonnegative to negative values.
    Since $\bfp(0) \geq 0$, we can conclude that $\bfp(t) \geq 0$ for all $t \geq 0$. 
    \qedhere

\end{proof}

Next, we study the convergence of the dynamical system \textnormal{(\ref{chores-tatonnement})} by using a Lyapunov potential function. 
Consider the following nonsmooth nonconvex function: 
\begin{equation}\label{eq:potential-CCH}
\begin{array}{c}
    f(\bfp) = - \displaystyle\sum_{j = 1}^m p_j + \sum_{i = 1}^n B_i \log\left( \max_{\bfx_i \geq 0: d_i(\bfx_i) \leq 1} \inp{\bfp}{\bfx_i} \right) = - \sum_{j = 1}^m p_j + \sum_{i = 1}^n B_i \max_{\bfx_i \geq 0: d_i(\bfx_i) \leq 1} \log\left( \inp{\bfp}{\bfx_i} \right). 
    \end{array} 
\end{equation}
We denote the restriction of function $f$ to the affine hull $H_B$ by $f\vert_{H_B}$. 
We call $f\vert_{H_B}: \Delta_B \rightarrow \mathbb{R}$ \emph{Chores potential function}.
The idea for this potential function comes from the ``redundant dual'' of the Eisenberg-Gale nonconvex program for chores~\cite{chaudhury2024competitive}. 
Our potential function significantly generalizes the range of disutility functions. 
For convex CES functions, \cref{eq:potential-CCH} reduces to the following closed-form expressions.

For the linear disutility case, 
\begin{equation}
    f(\bfp) := - \sum_{j = 1}^m p_j + \sum_{i = 1}^n B_i \log\left( \max_{j \in [m]} \frac{p_j}{d_{ij}} \right) 
    = - \sum_{j = 1}^m p_j + \sum_{i = 1}^n B_i \max_{j \in [m]} \left\{ \log\left( \frac{p_j}{d_{ij}} \right) \right\}. 
    \label{eq:potential}
\end{equation}

For the CES disutility function with $\rho \in (1, \infty)$ 
\begin{equation*}
    f(\bfp) := - \sum_{j = 1}^m p_j + \sum_{i = 1}^n \frac{B_i}{\sigma} \log\left( \sum_{j=1}^m d_{ij}^{1-\sigma} p_j^\sigma \right), \; \text{ where } \sigma = \frac{\rho}{\rho - 1}. 
\end{equation*}

We will need the following classical results. 
\begin{proposition}[{\cite[Example 7.28, Exercise 8.31, Theorem 10.31]{rockafellar2009variational}}]
    Let 
    $f^T(\bfx) = \max_{t \in T} f_t(\bfx)$ where $T$ is a compact (i.e., closed and bounded) index set and $f_t$ is smooth for all $t \in T$). 
    Then, we have 
    \begin{itemize}
        \item 
        $f^T(\bfx)$ is subdifferentially regular; 
        \item 
        $\partial f^T(\bfx) = \textnormal{Conv}\left(\{ \nabla f_t(\bfx) \mid f_t(\bfx) = \max_{t' \in T} f_{t'}(\bfx) \}\right) 
        $. 
    \end{itemize}
    \label{prop:clarke-subdifferential-max}
\end{proposition}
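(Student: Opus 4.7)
My plan is to establish Proposition~\ref{prop:clarke-subdifferential-max} via the classical ``Danskin plus support-function duality'' template. I would split the argument into three steps: (i) show $f^T$ is locally Lipschitz; (ii) compute the one-sided directional derivative of $f^T$ via an active-index argument and show it coincides with the Clarke generalized directional derivative, which is exactly subdifferential regularity; (iii) identify $\partial f^T(\bfx)$ by reading off the unique convex compact set whose support function is this directional derivative.

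For step (i), I fix any compact neighborhood $U$ of $\bfx$. Because each $f_t$ is smooth and $T$ is compact, the joint continuity assumptions implicit in the statement give a uniform bound $L := \sup_{t \in T,\, \bfy \in U} \norm{\nabla f_t(\bfy)} < \infty$. Hence each $f_t$ is $L$-Lipschitz on $U$, and therefore so is the pointwise supremum $f^T = \max_{t \in T} f_t$.

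For step (ii), let $T(\bfx) := \{ t \in T : f_t(\bfx) = f^T(\bfx) \}$, which is nonempty and compact. For a direction $\bfd$ and $\tau > 0$, pick any $t_\tau \in T(\bfx + \tau \bfd)$; by upper semicontinuity of the argmax correspondence (Berge's maximum theorem), every cluster point of $\{t_\tau\}$ as $\tau \downarrow 0$ lies in $T(\bfx)$. Using a first-order Taylor expansion of $f_{t_\tau}$ around $\bfx$ together with the optimality of $t_\tau$, I would derive
\begin{equation*}
    f^{T\prime}(\bfx; \bfd) \;=\; \lim_{\tau \downarrow 0} \frac{f^T(\bfx + \tau \bfd) - f^T(\bfx)}{\tau} \;=\; \max_{t \in T(\bfx)} \inp{\nabla f_t(\bfx)}{\bfd}.
\end{equation*}
Applying the same reasoning to base points $\bfy \to \bfx$ yields
\begin{equation*}
    f^{T\circ}(\bfx; \bfd) \;=\; \limsup_{\bfy \to \bfx,\, \tau \downarrow 0} \frac{f^T(\bfy + \tau \bfd) - f^T(\bfy)}{\tau} \;\leq\; \max_{t \in T(\bfx)} \inp{\nabla f_t(\bfx)}{\bfd},
\end{equation*}
since any limiting active index at $\bfy$ must land in $T(\bfx)$ by upper semicontinuity and continuity of $\nabla f_t$. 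The reverse inequality $f^{T\circ}(\bfx; \bfd) \geq f^{T\prime}(\bfx; \bfd)$ is automatic, so the two coincide, which is precisely Clarke regularity.

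For step (iii), I invoke the fact that the Clarke subdifferential $\partial f^T(\bfx)$ is the unique nonempty, convex, compact set whose support function on $\bfd$ is $f^{T\circ}(\bfx; \bfd)$. Since $\max_{t \in T(\bfx)} \inp{\nabla f_t(\bfx)}{\bfd}$ is exactly the support function of $\textnormal{Conv}\{\nabla f_t(\bfx) : t \in T(\bfx)\}$, and this convex hull is compact (as $T(\bfx)$ is compact and $\nabla f_t$ is continuous in $t$), the desired formula for $\partial f^T(\bfx)$ follows. The main obstacle I anticipate is the technical core of step (ii): carefully controlling cluster points of the active-index sets $T(\bfy)$ as $\bfy \to \bfx$ to establish upper semicontinuity of $\bfy \mapsto T(\bfy)$ and then pass to the limit inside the $\limsup$. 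Without the compactness of $T$ and the joint continuity of $(\bfy, t) \mapsto \nabla f_t(\bfy)$, both the Danskin-type identity and the regularity conclusion can fail, so these hypotheses need to be used decisively.
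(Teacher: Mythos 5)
The paper does not prove this proposition at all---it is imported verbatim from Rockafellar--Wets (Example 7.28, Exercise 8.31, Theorem 10.31), whose own development runs through subderivatives and regular (Fr\'echet) subgradients. Your proposal is a correct, self-contained alternative: a Danskin-type argument that computes the one-sided directional derivative $\max_{t \in T(\bfx)} \inp{\nabla f_t(\bfx)}{\bfd}$ via upper semicontinuity of the active-index map, shows it coincides with the Clarke generalized directional derivative, and then identifies $\partial f^T(\bfx)$ by support-function uniqueness; this is the classical Clarke-style route and each step is sound. Two points are worth making explicit if you write it up. First, as you yourself note, the statement as printed only assumes each $f_t$ is smooth; the argument genuinely needs joint continuity of $(t,\bfy) \mapsto f_t(\bfy)$ and $(t,\bfy) \mapsto \nabla f_t(\bfy)$ on $T \times U$ (these are hypotheses of the cited Rockafellar--Wets results), both for the uniform Lipschitz bound and, in the upper-bound half of step (ii), to control the remainder uniformly in the index---a mean-value-theorem formulation of the ``Taylor expansion of $f_{t_\tau}$'' is the clean way to avoid a $t$-dependent $o(\tau)$. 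Second, your step (ii) establishes Clarke regularity in the directional-derivative sense, whereas the paper defines subdifferential regularity by the affine-minorant condition of Davis et al.; for locally Lipschitz functions the two are equivalent (Clarke regularity is equivalent to the Clarke subdifferential coinciding with the Fr\'echet subdifferential, whose membership condition is exactly that minorant inequality), so a one-line remark closing that bridge completes the proof. Compared with the textbook citation, your route is more elementary and gives the formula and regularity in one stroke, at the cost of invoking Berge's maximum theorem and the support-function characterization of $\partial f$.
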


Another tool we will use is related to consumer theory, which has been well established for goods. 
\cite{goktas2021consumer} exploits two closely related problems: the utility
maximization problem and the expenditure minimization problem (EMP) in consumer theory
to study Fisher markets with goods.
Here, 
we build up such relations for Fisher markets with chores, with self-contained proofs. 

Given a price vector $\bfp \in \mathbb{R}^m_+$, for the $i$-th agent, we consider the disutility minimization problem (DMP)
\begin{equation}
    \begin{aligned}
        \min_{\bfx_i \geq 0: \inp{\bfp}{\bfx_i} \geq B_i} d_i(\bfx_i)
    \end{aligned}
    \tag{DMP}
    \label{eq:DMP}
\end{equation}
and the earning maximization problem (EMP)
\begin{equation}
    \begin{aligned}
        \max_{\bfx_i \geq 0: d_i(\bfx_i) \leq D_i} \inp{\bfp}{\bfx_i} 
    \end{aligned}
    \tag{EMP}
    \label{eq:EMP}
\end{equation} 
where $B_i$ is the known earning requirement and $D_i$ is 
some disutility upper bound (unknown in the Fisher market setting).
Regarding $B_i$ and $D_i$ as input parameters for the two problems, we define the following terms: 
\begin{equation*}
    g_i(\bfp, B_i) = \min_{\bfx_i \geq 0: \inp{\bfp}{\bfx_i} \geq B_i} d_i(\bfx_i) \hspace{30pt} \text{and} \hspace{30pt} x_i^*(\bfp, B_i) = \textnormal{argmin}_{\bfx_i \geq 0: \inp{\bfp}{\bfx_i} \geq B_i} d_i(\bfx_i); 
\end{equation*}
\begin{equation*}
    h_i(\bfp, D_i) = \max_{\bfx_i \geq 0: d_i(\bfx_i) \leq D_i} \inp{\bfp}{\bfx_i} \hspace{30pt} \text{and} \hspace{30pt} y_i^*(\bfp, D_i) = \textnormal{argmax}_{\bfx_i \geq 0: d_i(\bfx_i) \leq D_i} \inp{\bfp}{\bfx_i}. 
\end{equation*}

Next, we present the following intuitive but useful identities. 
\begin{proposition}
    If the disutilities are continuous and strictly increasing in some coordinate, then we have 
    \begin{enumerate}
        \item 
        If $D_i = \min_{\bfx_i \geq 0: \inp{\bfp}{\bfx_i} \geq B_i} d_i(\bfx_i)$, then 
        $B_i = \max_{\bfx_i \geq 0: d_i(\bfx_i) \leq D_i} \inp{\bfp}{\bfx_i}$; 
        \item $y_i^*(\bfp, D_i) = x_i^*(\bfp, B_i)$; 
        \item If the disutilities are $1$-homogeneous, $h_i(\bfp, D_i)$ and $y_i^*(\bfp, D_i)$\footnote{A set-valued function $F(\bfx)$ is said to be $1$-homogeneous if $F(a \bfx) = \{ a \mathbf{f} | \mathbf{f} = F(\bfx) \}$ for any $a > 0$.} are $1$-homogeneous in $D_i$. 
    \end{enumerate}
    \label{prop:customer-theory}
\end{proposition}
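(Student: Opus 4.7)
The plan is to prove a classical primal-dual identity between \eqref{eq:DMP} and \eqref{eq:EMP}, in parallel with the utility-maximization / expenditure-minimization duality from standard consumer theory but with the roles of objective and constraint interchanged. I would establish item~(1) first, as a value equality; items~(2) and~(3) then follow quickly as corollaries of item~(1) combined with minor additional arguments.

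For item~(1), pick any $\bfx^* \in x_i^*(\bfp, B_i)$, so $d_i(\bfx^*) = D_i$ and $\inp{\bfp}{\bfx^*} \geq B_i$. The key step is to show that the budget constraint binds at any DMP optimizer: if $\inp{\bfp}{\bfx^*} > B_i$, then by strict monotonicity of $d_i$ in some coordinate $k$ (with $\bfx^*_k > 0$ and $p_k$ finite), continuity of $d_i$ lets me decrease $\bfx^*_k$ by a sufficiently small amount to obtain a still-feasible DMP point with strictly smaller disutility, contradicting optimality. Hence $\inp{\bfp}{\bfx^*} = B_i$, so $\bfx^*$ is EMP-feasible (since $d_i(\bfx^*) = D_i$) and attains earning $B_i$, giving $h_i(\bfp, D_i) \geq B_i$. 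For the reverse inequality, suppose some EMP-feasible $\bfy$ satisfies $\inp{\bfp}{\bfy} > B_i$; then $\bfy$ is DMP-feasible and $d_i(\bfy) \leq D_i$ combined with DMP optimality forces $d_i(\bfy) = D_i$, which re-enters the same structural setup, and the same tightening-perturbation argument gives a contradiction. Thus $h_i(\bfp, D_i) = B_i$.

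For item~(2), the forward inclusion is immediate from item~(1): any $\bfx^* \in x_i^*(\bfp, B_i)$ is EMP-feasible and attains the optimal EMP value $B_i$, hence $\bfx^* \in y_i^*(\bfp, D_i)$. Conversely, any $\bfy^* \in y_i^*(\bfp, D_i)$ has $\inp{\bfp}{\bfy^*} = B_i$, so $\bfy^*$ is DMP-feasible; combined with $d_i(\bfy^*) \leq D_i = g_i(\bfp, B_i)$, the definition of $g_i$ forces $d_i(\bfy^*) = D_i$, so $\bfy^*$ attains the DMP minimum and lies in $x_i^*(\bfp, B_i)$. For item~(3), assume $d_i$ is $1$-homogeneous. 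For any $a > 0$ and any $\bfy^* \in y_i^*(\bfp, D_i)$, the scaled vector $a\bfy^*$ satisfies $d_i(a\bfy^*) = aD_i$ and achieves earning $a\inp{\bfp}{\bfy^*} = a\,h_i(\bfp, D_i)$. Optimality under the scaled constraint $d_i(\cdot)\leq aD_i$ follows by rescaling: any feasible $\tilde{\bfy}$ for the scaled EMP induces a feasible point $\tilde{\bfy}/a$ for the original EMP with earning $\inp{\bfp}{\tilde{\bfy}}/a$, which cannot exceed $h_i(\bfp, D_i)$. Hence $a\bfy^* \in y_i^*(\bfp, aD_i)$ and $h_i(\bfp, aD_i) = a\,h_i(\bfp, D_i)$.

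The main obstacle is the tightening-perturbation step in item~(1): when the strict-monotonicity direction $k$ happens to have $\bfx^*_k = 0$, literally decreasing the $k$th coordinate is not available. I would resolve this by invoking \cref{assump:strictly-increasing} (strict monotonicity in \emph{every} coordinate) and the observation that, since $\inp{\bfp}{\bfx^*} \geq B_i > 0$, at least one coordinate $j$ with $p_j > 0$ satisfies $\bfx^*_j > 0$; strict monotonicity then still furnishes a descent perturbation along coordinate $j$. Once this perturbation step is nailed down rigorously, the remaining arguments are routine bookkeeping around the DMP and EMP definitions.
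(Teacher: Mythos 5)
Your proposal is correct and takes essentially the same route as the paper's proof: show the earning constraint binds at any DMP optimizer (hence the EMP value equals $B_i$), deduce the solution-set identity from the value identity, and obtain $1$-homogeneity by rescaling feasible points. The only divergence is the micro-argument for bindingness---the paper shrinks the whole bundle radially to $a\bfx_i'$ with $a\in(0,1)$, whereas you decrease a single coordinate with $p_j>0$ and $x^*_{ij}>0$, explicitly invoking \cref{assump:strictly-increasing} to cover the case where the strictly-increasing coordinate is zero (a point the paper's radial-scaling step glosses over); both work, and the only blemish is that in item~(3) you should write $d_i(a\bfy^*)\leq aD_i$ rather than $=aD_i$, feasibility being all that is needed there.
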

\begin{proof}
    For the first point, 
    $\max_{\bfx_i \geq 0: d_i(\bfx_i) \leq D_i} \inp{\bfp}{\bfx_i} \geq B_i$ trivially holds because any $\bfx'_i \in x_i^*(\bfp, B_i)$ by definition satisfies $d_i(\bfx_i') \leq D_i$ and $\inp{\bfp}{\bfx_i'} \geq B_i$. 
    Suppose that there is a $\bfx_i' \in x_i^*(\bfp, B_i)$ such that $\inp{\bfp}{\bfx_i'} > B_i$, then there exists a $\bar{\bfx}_i = a \bfx_i'$ for some $a \in (0, 1)$ such that $\inp{\bfp}{\bar{\bfx}_i} = B_i$. 
    Because the disutility function is strictly increasing in some $x_{ij}$, 
    it holds that $d_i(\bar{\bfx}_i) < d_i(\bfx_i') \leq D_i$. 
    This contradicts the definition of $D_i$. Therefore, $B_i = \max_{\bfx_i \geq 0: d_i(\bfx_i) \leq D_i} \inp{\bfp}{\bfx_i}$. 

    For the second point, we first show that $y_i^*(\bfp, D_i) \subseteq x_i^*(\bfp, B_i)$. 
    For any $\bfy_i' \in y_i^*(\bfp, D_i)$, we have $d_i(\bfy_i') \leq D_i$ because of the feasibility of~\eqref{eq:EMP} and $\inp{\bfp}{\bfy_i'} = B_i$ by the first point. 
    Combining this with $D_i = \min_{\bfx_i \geq 0: \inp{\bfp}{\bfx_i} \geq B_i} d_i(\bfx_i)$, 
    we have $\bfy_i'$ has to be a minimizer of~\eqref{eq:DMP} hence $y_i^*(p, D_i) \subseteq x_i^*(p, B_i)$. 
    Similarly, any $\bfx_i' \in x_i^*(\bfp, B_i)$ satisfies $\inp{\bfp}{\bfx_i'} \geq B_i$ by the feasibility of~\eqref{eq:DMP} and $d_i(\bfx_i') = D_i$ by definition. Combining this with the first point, we have $\bfx_i'$ has to be a maximizer of~\eqref{eq:EMP}. 
    Therefore, $y_i^*(\bfp, D_i) \supseteq x_i^*(\bfp, B_i)$.

    For the third point, 
    first we have for any $a > 0$ 
    \begin{equation*}
        h_i(\bfp, a) = 
        \max_{\bfx_i \geq 0: d_i(\bfx_i) \leq a} \inp{\bfp}{\bfx_i} = a \max_{\bfx_i \geq 0: d_i\left(\frac{\bfx_i}{a}\right) \leq 1} \inp*{p}{\frac{\bfx_i}{a}} = a \max_{\bfy_i \geq 0: d_i(\bfy_i) \leq 1} \inp{\bfp}{\bfy_i} = a h_i(\bfp, 1). 
    \end{equation*}
    Then, we show that $y^*(\bfp, a) = a \odot y^*(\bfp, 1)$ for all $a > 0$, where $\odot$ denotes the element-wise scalar multiplication for a set. 
    This can be shown as follows. Given $a > 0$ and any $\bfy'(1) \in y^*(\bfp, 1)$, $a \bfy'(1)$ is a feasible solution to $\max_{\bfx_i \geq 0: d_i(\bfx_i) \leq a} \inp{\bfp}{\bfx_i}$ that gives the maximum value $\inp{\bfp}{a \bfy'(1)} = a h(\bfp, 1) = h(\bfp, a)$, 
    which means $a \bfy'(1) \in y^*(\bfp, a)$. Conversely, for any $\bfy'(a) \in y^*(\bfp, a)$, $\frac{1}{a}\bfy'(a)$ is a maximizer of $\max_{\bfx_i \geq 0: d_i(\bfx_i) \leq 1} \inp{\bfp}{\bfx_i}$. 
    This leads to $\frac{1}{a} \bfy'(a) \in y^*(\bfp, 1)$. 
\end{proof}

\begin{lemma}
    If $d_i$ is a CCH function satisfying \cref{assump:strictly-increasing} for all $i \in [n]$, 
    and 
    $f: \mathbb{R}^m_+ \rightarrow \mathbb{R}$ and $f\vert_{H_B}: \Delta_B \rightarrow \mathbb{R}$
    are defined as in~\cref{eq:potential-CCH}. 
    Then, $f$ and
    $f\vert_{H_B}$ 
    are locally Lipschitz functions that are subdifferentially regular. 
    Moreover, 
    $\partial f\vert_{H_B}(\bfp) = \tilde{Z}(\bfp)$ for any $\bfp \in \Delta_B$.
    \label{lem:subdifferentially-regular-and-generalized-subdifferential}
\end{lemma}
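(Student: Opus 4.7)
The plan is to decompose $f$ using standard Clarke nonsmooth calculus and then pass to $f\vert_{H_B}$ by projecting onto the tangent space of $H_B$.

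First, introduce the shorthand $h_i(\bfp) := \max_{\bfx_i \geq 0,\, d_i(\bfx_i) \leq 1} \langle \bfp, \bfx_i \rangle$, so that $f(\bfp) = -\sum_j p_j + \sum_i B_i \log h_i(\bfp)$. The feasible set $K_i = \{\bfx_i \geq 0 : d_i(\bfx_i) \leq 1\}$ is convex (sublevel set of a convex function) and bounded: by 1-homogeneity and \cref{assump:strictly-increasing}, $d_i(t e_j) = t\, d_i(e_j)$ with $d_i(e_j) > 0$ for all $j$, so $K_i$ is contained in the box of side $1/\min_j d_i(e_j)$. Hence $K_i$ is compact and convex. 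Applying \cref{prop:clarke-subdifferential-max} to $h_i(\bfp) = \max_{\bfx_i \in K_i} \langle \bfp, \bfx_i \rangle$ (compact index set, smooth linear functions of $\bfp$ with gradient $\bfx_i$) shows that $h_i$ is subdifferentially regular with $\partial h_i(\bfp) = \mathrm{Conv}\, y_i^*(\bfp, 1) = y_i^*(\bfp, 1)$, where the convex hull is redundant since an argmax of a linear functional over a convex set is itself convex. Moreover $h_i$ is the support function of $K_i$, hence convex and locally Lipschitz.

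Second, observe that $h_i(\bfp) > 0$ for every $\bfp \in \mathbb{R}^m_+ \setminus \{\mathbf{0}_m\}$, since if $p_j > 0$ the feasible point $(1/d_i(e_j)) e_j$ attains a strictly positive objective. The chain rule for subdifferentially regular functions applied to $\log$ (smooth and strictly increasing on $(0,\infty)$) then gives that $\log h_i$ is subdifferentially regular with
\begin{equation*}
\partial \log h_i(\bfp) = \tfrac{1}{h_i(\bfp)}\, y_i^*(\bfp, 1).
\end{equation*}
The sum rule for subdifferentially regular functions (combined with the smooth term $-\sum_j p_j$) yields subdifferential regularity of $f$ together with the exact formula
\begin{equation*}
\partial f(\bfp) = -\mathbf{1}_m + \sum_{i=1}^n \tfrac{B_i}{h_i(\bfp)}\, y_i^*(\bfp, 1).
\end{equation*}

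Third, identify the right-hand side with $Z(\bfp)$ using \cref{prop:customer-theory}. Set $D_i^* := B_i / h_i(\bfp)$. By 1-homogeneity in the disutility budget (point 3), $h_i(\bfp, D_i^*) = D_i^* h_i(\bfp, 1) = B_i$, so by point 1 of \cref{prop:customer-theory} this $D_i^*$ is precisely the minimum disutility for achieving earnings $B_i$, and by point 2, $y_i^*(\bfp, D_i^*) = x_i^*(\bfp, B_i) = X_i(\bfp)$. Combined with point 3, this gives the key identity $\tfrac{B_i}{h_i(\bfp)}\, y_i^*(\bfp, 1) = D_i^*\, y_i^*(\bfp,1) = X_i(\bfp)$, hence $\partial f(\bfp) = -\mathbf{1}_m + \sum_{i=1}^n X_i(\bfp) = Z(\bfp)$.

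Finally, pass to $f\vert_{H_B}$. Restriction to an affine subspace preserves local Lipschitz continuity and subdifferential regularity, and the Clarke subdifferential of the restriction equals the orthogonal projection of $\partial f(\bfp)$ onto the tangent space $T = \{\bfv \in \mathbb{R}^m : \mathbf{1}_m^\top \bfv = 0\}$ of $H_B$. Since this projection is $P_T(\bfz) = \bfz - \tfrac{1}{m}(\mathbf{1}_m^\top \bfz)\, \mathbf{1}_m$, we conclude
\begin{equation*}
\partial f\vert_{H_B}(\bfp) = \{ \bfz - \tfrac{1}{m}(\mathbf{1}_m^\top \bfz) \mathbf{1}_m : \bfz \in Z(\bfp) \} = \tilde{Z}(\bfp).
\end{equation*}
The main obstacle I anticipate is the careful verification that the calculus rules I invoke hold with \emph{equality} (not mere inclusion)---chain rule through $\log$, sum rule across the $i$-indexed terms, and the projection formula for restriction to $H_B$---which is precisely what subdifferential regularity of each piece delivers, and why establishing regularity up front is essential. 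The conceptual heart of the argument is the identification $\tfrac{B_i}{h_i(\bfp)} y_i^*(\bfp, 1) = X_i(\bfp)$, which translates between the gauge-dual viewpoint (earning maximization with disutility cap) that defines $f$ and the primal viewpoint (disutility minimization with earning requirement) used for the demand correspondence.
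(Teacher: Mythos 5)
Your proposal is correct and follows essentially the same route as the paper's proof: the max-rule (\cref{prop:clarke-subdifferential-max}) to get regularity and the subdifferential of the earning-maximization term, the identities of \cref{prop:customer-theory} to identify $\tfrac{B_i}{h_i(\bfp)}y_i^*(\bfp,1)$ with the demand $X_i(\bfp)$, the sum rule to conclude $\partial f = Z$, and the projection onto the tangent space $H_0$ to conclude $\partial f\vert_{H_B} = \tilde Z$. The only cosmetic differences are that the paper pushes the $\log$ inside the max and applies the max-rule to $\log\langle\bfp,\bfx_i\rangle$ (avoiding an explicit chain rule through $\log$), and it justifies the projection formula for the restriction via Norkin's pseudogradient framework rather than the regular-function chain rule you invoke; both are sound.
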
 
\begin{proof}
    Let $\g_i(\bfp) = B_i \max_{\bfx_i \geq 0: d_i(\bfx_i) \leq 1} \log\big( \inp{\bfp}{\bfx_i} \big)$. 
    By~\cref{prop:clarke-subdifferential-max}, we have $\partial \g_i(\bfp)$ is subdifferentially regular (hence it admits a chain rule) and 
    \begin{equation*}
        \partial \g_i(\bfp) = \left\{ \frac{B_i}{\inp{\bfp}{\bfy_i}} \bfy_i \Big| \bfy_i \in y_i^*(\bfp, 1) \right\} = \left\{ \frac{B_i}{\inp{p}{D_i \bfy_i}} D_i \bfy_i \Big| D_i \bfy_i \in y_i^*(\bfp, D_i) \right\}, 
    \end{equation*} 
    where the second equality follows by the third point in~\cref{prop:customer-theory}. 
    Replacing $D_i \bfy_i$ with $\bfx_i$, we have 
    \begin{equation*}
        \partial \g_i(\bfx) = \left\{ \frac{B_i}{\inp{\bfp}{\bfx_i}} \bfx_i \,\Big|\, \bfx_i \in y_i^*(\bfp, D_i) \right\} = x_i^*(\bfp, B_i), 
    \end{equation*}
    where the last equality follows by the first two points in~\cref{prop:customer-theory}. 
    We next use the following facts about the subdifferentially regular functions~\cite{li2020understanding}: 
    \begin{itemize}
        \item A convex function is subdifferentially regular; 
        \item The subdifferentially regular functions satisfy the \emph{sum rule}: 
        Suppose that $\g = \sum_{i \in [n]} \g_i$, where $\g_i, \, \forall\, i \in [n]$ are locally Lipschitz functions. 
        Given a point $\bfp$, if $\g_1, \ldots, \g_m$ are subdifferentially regular at $\bfp$, then $\g$ is subdifferentially regular as well and 
        $\partial \g(\bfp) = \partial \g_1(\bfp) + \cdots + \partial \g_m(\bfp)$, 
        where the $+$ denotes \emph{Minkowski addition}. 
    \end{itemize} 
    Then, it follows that 1) $f$ is subdifferentially regular and 2) the Clarke subdifferential of $f$ at $\bfp$ is 
    \begin{equation*}
        \partial f(\bfp) = 
        -\mathbf{1}_m + \sum_{i = 1}^n x_i^*(\bfp, B_i) = Z(\bfp).
    \end{equation*}

    Following the derivations in~\cite[$\S 10.3$]{mikhalevich2024methods}, we have that the Clarke subdifferential of $f\vert_{H_B}$ at $\bfp$ is\footnote{\cite{mikhalevich2024methods} show this for a more general family of pseudogradient mappings. As argued in \cite{mikhalevich2024methods}, the Clarke subdifferential is the \emph{minimal inclusion} mapping with respect to this family of pseudogradient mappings, and it follows from there that the formula must apply to the Clarke subdifferential. See~\cref{app:sec:generalized-subdifferential}.}
    \begin{equation*}
        \partial f\vert_{H_B}(\bfp) = \left\{ \bfg_0 \in H_0 \, \left\vert \, \bfg_0 = \Pi_{H_0} (\bfg), \bfg \in \partial f(\bfp) \right. \right\}, 
    \end{equation*}
    where $H_0$ is the linear subspace associated with $H_B$.
    By first-order optimality conditions, one can check that $\Pi_{H_0}(\bfx) =  \textnormal{argmin}_{\bfx'} \big\{ \norm*{\bfx' - \bfx} \, | \, A \bfx = 0 \big\} = (I - A^\top (A A^\top)^{-1} A) \bfx$ where $A = \mathbf{1}_m^\top \in \mathbb{R}^{1 \times m}$ and $I$ is an $m$-dimensional identity matrix. 
    By calculation, $\Pi_{H_0}(\bfx) = \bfx - \left( \frac{1}{m} \sum_{j=1}^m x_j \right) \cdot \mathbf{1}_m$. 
    Therefore, we have that $\partial f(\bfp) = Z(\bfp)$ and hence $\partial f\vert_{H_B}(\bfp) = \tilde{Z}(\bfp)$ for any $\bfp \in \mathbb{R}^m_{+}$. 
    
    The subdifferential regularity of $f\vert_{H_B}$ can be proven by using similar arguments as in~\cite[$\S 10.3$]{mikhalevich2024methods}. 
    Any Clarke generalized gradient $\bfg \in \partial f(\bfp)$ can be decomposed into $\Pi_{H_0} \bfg$ and the orthogonal complement to $\Pi_{H_0}(\bfg)$, that is, $\bfg = \Pi_{H_0}(\bfg) + \Pi_{H_0^\perp}(\bfg)$. 
    By the subdifferential regularity of $f$ we have for any $\bfx, \bfy \in H_B$ 
    \begin{align*}
        f\vert_{H_B}(\bfy) &\geq f\vert_{H_B}(\bfx) + \inp*{\bfg}{\bfy - \bfx} + o\left( \norm{\bfy - \bfx} \right) \\ 
        &= f\vert_{H_B}(\bfx) + \inp*{\Pi_{H_0}(\bfg) + \Pi_{H_0^\perp}(\bfg)}{\bfy - \bfx} + o\left( \norm{\bfy - \bfx} \right) \\ 
        &= f\vert_{H_B}(\bfx) + \inp*{\Pi_{H_0}(\bfg)}{\bfy - \bfx} + o\left( \norm{\bfy - \bfx} \right) \quad \text{as } \bfy \rightarrow \bfx, 
    \end{align*}
    where the last equality follows because $\Pi_{H_0}(\bfg)$ is orthogonal to vectors in $H_B$. 
    It is trivial to show $f\vert_{H_B}$ is locally Lipschitz continuous. 
    Therefore, the subdifferential regularity of $f\vert_{H_B}$ is proven. 
\end{proof}


\begin{remark}
    \cref{lem:well-defined-equilibrium} shows that any trajectory of \textnormal{(\ref{chores-tatonnement})} will not escape the domain of $f\vert_{H_B}$. Therefore, 
    \textnormal{(\ref{chores-tatonnement})} can be seen as a subgradient dynamical system associated with the unconstrained optimization problem $\min_{p \in \mathbb{R}^m} f\vert_{H_B}$. 
\end{remark}

To show the stability of the dynamical system~\eqref{chores-tatonnement}, we present the following descent lemma. 
\cite{davis2020stochastic} showed that such a descent lemma holds under some sufficient conditions, e.g., subdifferential regularity.
For completeness,  
we provide a simple proof for a more general setting in~\cref{app:sec:Omitted proofs}, which is from~\cite{ding2024stochastic}. 
\begin{lemma}[Descent]
    In a chores Fisher market with CCH disutilities satisfying~\cref{assump:strictly-increasing}, 
    whenever $\bfp: \mathbb{R}_+ \rightarrow \mathbb{R}^m_+$ is a trajectory of the dynamical system \textnormal{(\ref{chores-tatonnement})}, and $\bfp(0) \in H_B$ is not a stationary point of \textnormal{(\ref{chores-tatonnement})}, then for all $t \geq 0$, $\bfp(t) \in \Delta_B$, and there exists a real number $T > 0$ satisfying $f\vert_{H_B}(\bfp(T)) < \sup_{t \in [0, T]} f\vert_{H_B}(\bfp(t)) \leq f\vert_{H_B}(\bfp(0))$.
    \label{lem:descent}
\end{lemma}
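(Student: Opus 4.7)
The first claim, that $\bfp(t) \in \Delta_B$ for all $t \geq 0$, is essentially immediate from Lemma \ref{lem:well-defined-equilibrium}, so the substance of the proof is establishing strict descent of $f\vert_{H_B}$ along any non-stationary trajectory. My plan is to treat \eqref{chores-tatonnement} as a subgradient dynamical system on $f\vert_{H_B}$ and invoke the chain rule for locally Lipschitz, subdifferentially regular functions along absolutely continuous curves.

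By Lemma \ref{lem:subdifferentially-regular-and-generalized-subdifferential}, $f\vert_{H_B}$ is locally Lipschitz and subdifferentially regular on $\Delta_B$, and its Clarke subdifferential coincides with $\tilde{Z}(\bfp)$. For such ``path-differentiable'' functions, a standard result (see \cite{davis2020stochastic}) gives that the composition $t \mapsto f\vert_{H_B}(\bfp(t))$ is differentiable at almost every $t \geq 0$, and satisfies
\[
\frac{d}{dt} f\vert_{H_B}(\bfp(t)) \;=\; \inp*{\bfg(t)}{\dot{\bfp}(t)}
\]
for \emph{every} measurable selection $\bfg(t) \in \partial f\vert_{H_B}(\bfp(t))$. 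Since $\dot{\bfp}(t) \in -\tilde{Z}(\bfp(t)) = -\partial f\vert_{H_B}(\bfp(t))$ for almost every $t$, choosing $\bfg(t) = -\dot{\bfp}(t)$ yields
\[
\frac{d}{dt} f\vert_{H_B}(\bfp(t)) \;=\; -\norm{\dot{\bfp}(t)}^2 \;\leq\; 0,
\]
which upon integration shows that $f\vert_{H_B}(\bfp(t))$ is non-increasing in $t$. In particular, $\sup_{t \in [0,T]} f\vert_{H_B}(\bfp(t)) \leq f\vert_{H_B}(\bfp(0))$ for every $T \geq 0$, giving the right-hand inequality.

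To obtain the strict left-hand inequality, I use that $\bfp(0)$ is not stationary, so $\mathbf{0}_m \notin \partial f\vert_{H_B}(\bfp(0)) = \tilde{Z}(\bfp(0))$. By upper semicontinuity of the Clarke subdifferential (equivalently, upper semicontinuity of the excess demand correspondence, which follows from Berge's maximum theorem applied to agents' \eqref{eq:DMP}), there exists $\delta > 0$ and $c > 0$ such that $\operatorname{dist}(\mathbf{0}_m,\, \partial f\vert_{H_B}(\bfp(t))) \geq c$ for all $t \in [0,\delta]$, which forces $\norm{\dot{\bfp}(t)}^2 \geq c^2$ on a subset of $[0,\delta]$ of positive measure. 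Integrating the descent identity over $[0,T]$ with $T := \delta$ gives $f\vert_{H_B}(\bfp(T)) - f\vert_{H_B}(\bfp(0)) \leq -c^2 \delta < 0$, which, combined with monotonicity, yields $f\vert_{H_B}(\bfp(T)) < \sup_{t \in [0,T]} f\vert_{H_B}(\bfp(t)) \leq f\vert_{H_B}(\bfp(0))$.

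The main obstacle is rigorously justifying the chain rule for $f\vert_{H_B}$, since the function is both nonconvex and nonsmooth; this is where subdifferential regularity is essential, as it ensures that all selections from the subdifferential produce the same derivative along the curve. A secondary technical step is showing that $\norm{\dot{\bfp}(t)}$ stays bounded away from zero in a neighborhood of $t=0$, which relies on upper semicontinuity of the excess demand correspondence and the assumption that $\bfp(0)$ is not stationary. Both ingredients are already in place from Lemma \ref{lem:subdifferentially-regular-and-generalized-subdifferential} and the structural analysis of the demand correspondence.
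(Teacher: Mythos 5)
Your proof is correct and its core mechanism is the same as the paper's: both treat \eqref{chores-tatonnement} as a subgradient dynamical system on $f\vert_{H_B}$, invoke the chain rule that locally Lipschitz, subdifferentially regular functions admit along absolutely continuous arcs (via Lemma \ref{lem:subdifferentially-regular-and-generalized-subdifferential} and \cite{davis2020stochastic}), and obtain $\frac{d}{dt} f\vert_{H_B}(\bfp(t)) = -\norm{\dot{\bfp}(t)}^2$ almost everywhere, hence monotone non-increase after integration. Where you diverge is the strict-inequality step: the paper (via its more general Lemma \ref{lem:descent-general} for Bregman differential inclusions) argues by contradiction --- if no strict decrease ever occurs, then $f\vert_{H_B}(\bfp(t))$ is constant, so $\dot{\bfp}=0$ a.e., so $\bfp(t)\equiv\bfp(0)$ by absolute continuity, forcing $\mathbf{0}_m\in\partial f\vert_{H_B}(\bfp(0))$ and contradicting non-stationarity --- whereas you give a direct quantitative argument: outer semicontinuity plus local boundedness of the Clarke subdifferential (equivalently, upper semicontinuity of $\tilde{Z}$) keeps $\norm{\dot{\bfp}(t)}$ bounded below by some $c>0$ on an initial interval $[0,\delta]$, yielding an explicit decrease of at least $c^2\delta$. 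Your route buys an effective lower bound on the amount of descent near a non-stationary point (at the cost of invoking semicontinuity of the subdifferential map), while the paper's contradiction argument is lighter on hypotheses and is packaged to cover general Bregman geometries $\nabla\phi$. One small imprecision: once $\operatorname{dist}(\mathbf{0}_m,\partial f\vert_{H_B}(\bfp(t)))\geq c$ holds on all of $[0,\delta]$, the bound $\norm{\dot{\bfp}(t)}\geq c$ holds for almost every $t\in[0,\delta]$, not merely on some positive-measure subset; this only strengthens your conclusion, so it is not a gap.
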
 


From~\cref{lem:descent} and the fact that the stationary points of \textnormal{(\ref{chores-tatonnement})} correspond to equilibrium prices and thus are disconnected, we know any trajectory of \textnormal{(\ref{chores-tatonnement})} converges to some stationary point. We include a proof for a more general theorem implying this result in~\cref{app:sec:Omitted proofs}. 
Since by~\cref{lem:subdifferentially-regular-and-generalized-subdifferential} any stationary point corresponds to a CE, we can conclude convergence of~\eqref{chores-tatonnement}. 

\begin{theorem}[Continuous-time convergence]
    In a chores Fisher market with CCH disutilities satisfying~\cref{assump:strictly-increasing}, 
    any trajectory of the dynamical system \eqref{chores-tatonnement} converges to a CE. 
    \label{thm:relative-tatonnement-stable}
\end{theorem}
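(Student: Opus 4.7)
The plan is to combine the Descent Lemma (Lemma~\ref{lem:descent}) with compactness of $\Delta_B$ and a LaSalle-type argument adapted to our nonsmooth, set-valued dynamics. First I would invoke Lemma~\ref{lem:well-defined-equilibrium} to confine every trajectory $\bfp(t)$ to the compact set $\Delta_B$, so the trajectory is precompact and $f\vert_{H_B}$ (which is locally Lipschitz on $\Delta_B$ by Lemma~\ref{lem:subdifferentially-regular-and-generalized-subdifferential}) is bounded along it. The Descent Lemma then implies that $t\mapsto f\vert_{H_B}(\bfp(t))$ is non-increasing and strictly decreases on any interval whose left endpoint is non-stationary. Boundedness plus monotonicity give $f^\star:=\lim_{t\to\infty} f\vert_{H_B}(\bfp(t))$.

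Next I would analyze the $\omega$-limit set $\Omega := \bigcap_{T\geq 0}\overline{\{\bfp(t):t\geq T\}}$. By precompactness of the trajectory, $\Omega$ is nonempty, compact, and connected, and by continuity of $f\vert_{H_B}$ every $\bar{\bfp}\in\Omega$ satisfies $f\vert_{H_B}(\bar{\bfp})=f^\star$. I would then show that every $\bar{\bfp}\in\Omega$ is a stationary point of~\eqref{chores-tatonnement}. Suppose not: applying the Descent Lemma with initial point $\bar{\bfp}$ yields a trajectory $\psi$ with $\psi(0)=\bar{\bfp}$ and some $T>0$ for which $f\vert_{H_B}(\psi(T))<f^\star$. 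Using that $\tilde{Z}$ is nonempty, convex, and compact-valued with a closed (upper semicontinuous) graph (inherited from the standard excess-demand correspondence), a Filippov-style continuous-dependence argument guarantees that for any sequence $t_k\uparrow\infty$ with $\bfp(t_k)\to\bar{\bfp}$, the tails $\bfp(t_k+\cdot)$ accumulate on $[0,T]$ to some trajectory of~\eqref{chores-tatonnement} starting at $\bar{\bfp}$. Taking values at time $T$ and passing to a subsequence, $\liminf_k f\vert_{H_B}(\bfp(t_k+T))\leq f\vert_{H_B}(\psi(T))<f^\star$, contradicting the monotone convergence $f\vert_{H_B}(\bfp(t))\downarrow f^\star$.

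Finally I would upgrade $\Omega\subseteq\{\text{stationary points}\}$ to point-wise convergence to a single CE. By Lemma~\ref{key-lem:stationary-point-in-affine-hull-is-CE}, every stationary point in $\Omega\subset\Delta_B\subset H_B$ is an equilibrium price vector, so $\Omega$ is a connected subset of the CE set. Under the CCH assumption the set of CE is a disconnected union of closed components (as emphasized throughout the paper), so the connected $\Omega$ lies entirely in one component; combining this with constancy of $f\vert_{H_B}$ on $\Omega$ and strict descent away from stationary points rules out any nontrivial recurrence inside $\Omega$, forcing $\Omega$ to be a singleton $\{\bfp^\star\}$ and hence $\bfp(t)\to\bfp^\star$. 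The main obstacle is the non-stationarity-rules-out-accumulation step, because the usual ODE continuous-dependence theorem is unavailable for the differential inclusion $\dot{\bfp}\in-\tilde{Z}(\bfp)$; one must carefully exploit the upper-semicontinuity and convex-compact-valuedness of $\tilde{Z}$, together with the subdifferential regularity of $f\vert_{H_B}$ from Lemma~\ref{lem:subdifferentially-regular-and-generalized-subdifferential}, to obtain a usable approximation of trajectories.
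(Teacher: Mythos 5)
Your proposal follows essentially the same route as the paper: Lemma~\ref{lem:well-defined-equilibrium} plus Lemma~\ref{lem:descent} give a monotone, bounded Lyapunov value; limit points are shown stationary by passing to limits of time-shifted trajectories (the paper's appendix theorem does exactly this, also relying on the set-valued map's closedness to assert the limit arc is a trajectory); and stationarity-equals-CE (Lemma~\ref{key-lem:stationary-point-in-affine-hull-is-CE}) together with disconnectedness of the CE set yields pointwise convergence. Your final singleton/$\omega$-limit step is argued at the same level of detail as the paper's "no cycling between components" argument, so the proposal is correct and matches the paper's proof.
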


\subsection{A EG-type Program for General CCH Disutilities}

Our potential functions from the previous section suggest a new program to compute chores CE with general CCH disutilities. We formally write the new program as follows, which we call the general EG dual program. 
\begin{equation}
    \begin{aligned}
        \min_{p \geq 0} \quad & - \sum_{j = 1}^m p_j + \sum_{i = 1}^n B_i \log\left( \max_{x_i \geq 0: d_i(x_i) \leq 1} \inp{p}{x_i} \right) \\ 
        \textnormal{s.t.} \quad & \sum_{j=1}^m p_j = \sum_{i = 1}^n B_i. 
    \end{aligned}
    \tag{General EG Dual}
    \label{pgm:general-eg-dual}
\end{equation}

We formally show that the set of equilibrium prices in the chores Fisher market corresponds to the set of solutions to~\eqref{pgm:general-eg-dual} that are KKT points.\footnote{We say that a primal point to a program is a KKT point if there is a tuple of dual variables such that the primal point coupled with the dual variables satisfies the KKT conditions.}
\begin{theorem}
    If $d_i$ is a CCH function satisfying \cref{assump:strictly-increasing} for all $i \in [n]$, 
    then there is a one-to-one mapping between the set of equilibrium prices in the chores Fisher market and the set of KKT points of~\eqref{pgm:general-eg-dual}.
\end{theorem}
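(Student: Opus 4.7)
The plan is to show that a feasible $\bfp$ for~\eqref{pgm:general-eg-dual} satisfies the KKT conditions if and only if $\mathbf{0}_m \in Z(\bfp)$, i.e., if and only if $\bfp$ is an equilibrium price; since the correspondence is then the identity on price vectors, this immediately yields the desired bijection. First I would write down the Clarke KKT conditions. Introducing a multiplier $\lambda \in \RR$ for the equality $\mathbf{1}_m^\top \bfp = \norm{\bfB}_1$ and a multiplier vector $\bm{\mu} \in \RR^m_+$ for the nonnegativity constraints, these read
\[
\exists\, \bfz \in \partial f(\bfp) \ \text{ s.t. }\ \bfz + \lambda \mathbf{1}_m - \bm{\mu} = \mathbf{0}_m,\quad \bm{\mu} \geq \mathbf{0}_m,\quad \mu_j p_j = 0\ \forall j,\quad \mathbf{1}_m^\top \bfp = \norm{\bfB}_1.
\]
From the calculation in the proof of~\cref{lem:subdifferentially-regular-and-generalized-subdifferential} we already have $\partial f(\bfp) = Z(\bfp)$, so the inclusion becomes the existence of an excess demand $\bfz \in Z(\bfp)$ with $z_j = \mu_j - \lambda$ for every $j$.

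Next I would use a Walras-law identity to pin down $\lambda$. Because each $d_i$ is $1$-homogeneous and strictly increasing in each coordinate, any optimum $\bfx_i \in X_i(\bfp)$ must satisfy $\inp{\bfp}{\bfx_i} = B_i$: if the inequality were strict, scaling $\bfx_i$ down by $B_i/\inp{\bfp}{\bfx_i}<1$ would preserve feasibility of~\eqref{eq:DMP} while strictly decreasing $d_i$ by $1$-homogeneity, a contradiction. Summing $\inp{\bfp}{\bfx_i} = B_i$ over $i$ and using $\mathbf{1}_m^\top \bfp = \norm{\bfB}_1$ gives $\sum_j p_j z_j = 0$. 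Substituting $z_j = \mu_j - \lambda$ and invoking complementary slackness $\mu_j p_j = 0$ collapses this identity to $-\lambda\norm{\bfB}_1 = 0$, so $\lambda = 0$. Hence $z_j = \mu_j \geq 0$ for all $j$, with $z_j = 0$ wherever $p_j > 0$.

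The main obstacle is ruling out the boundary case $p_j = 0$, and this is exactly where~\cref{assump:strictly-increasing} enters. If some $p_j = 0$, then at any optimum of~\eqref{eq:DMP} one must have $x_{ij}^* = 0$ for every agent $i$: since chore $j$ contributes nothing to $\inp{\bfp}{\bfx_i}$ while $d_i$ is strictly increasing in coordinate $j$, any positive $x_{ij}^*$ could be trimmed to $0$ without violating the earning constraint yet strictly lowering $d_i$, contradicting optimality. Consequently $z_j = \sum_i x_{ij}^* - 1 = -1$, contradicting $z_j \geq 0$ derived above. Therefore every KKT point has $\bfp > \mathbf{0}_m$ coordinatewise, which forces $\bfz = \mathbf{0}_m$, i.e., $\bfp$ is an equilibrium price.

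For the converse direction, given any equilibrium price $\bfp^*$ we have $\mathbf{0}_m \in Z(\bfp^*)$ by definition, and the Walras-law computation above already shows $\mathbf{1}_m^\top \bfp^* = \norm{\bfB}_1$, so $\bfp^*$ is feasible for~\eqref{pgm:general-eg-dual}. Choosing $(\bfz,\lambda,\bm{\mu}) = (\mathbf{0}_m, 0, \mathbf{0}_m)$ trivially verifies every KKT condition. Combining the two directions yields the claimed one-to-one correspondence.
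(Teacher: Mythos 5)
Your proof is correct and takes essentially the same route as the paper's: the same Clarke KKT setup, the identification $\partial f(\bfp) = Z(\bfp)$ from \cref{lem:subdifferentially-regular-and-generalized-subdifferential}, the Walras-law identity $\inp*{\bfp}{\bfz} = 0$ combined with complementary slackness to force the equality multiplier to zero and conclude $\bfz = \bm{\mu} \geq \mathbf{0}_m$, and the trivial choice of zero multipliers for the converse. The only deviation is minor: where the paper invokes \cref{lem:property-no-demand-for-zero-payment} to rule out zero prices, you give a direct trimming argument (strict monotonicity forces $x^*_{ij}=0$, hence $z_j=-1$, whenever $p_j=0$), which serves the same purpose.
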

\begin{proof}
    Let $\mu$ be the dual variable corresponding to the constraint. 
    The KKT conditions are 
    \begin{align*}
        & \bfalpha \in \partial f(\bfp) + \mu \cdot \mathbf{1}_m \\ 
        & \alpha_j \geq 0 \hspace{100pt} \forall j \in [m] \\ 
        & p_j \cdot \alpha_j = 0 \hspace{100pt} \forall j \in [m] \\
        & \sum_{j=1}^m p_j = \sum_{i = 1}^n B_i \\ 
        & p_j \geq 0 \hspace{100pt} \forall j \in [m],  
    \end{align*}
    where $\bfalpha$ denotes the $m$-dimensional dual vector to $p\geq 0$. 
    Let $\tilde{\bfp}$, $\tilde{\mu}$, $\tilde{\bfalpha}$ be a set of variables satisfying the KKT conditions. 
    By~\cref{lem:subdifferentially-regular-and-generalized-subdifferential}, we have that 
    \begin{equation*}
        \tilde{\bfalpha} - \tilde{\mu} \cdot \mathbf{1}_m \in \partial f(\tilde{\bfp}) = Z(\tilde{\bfp}). 
    \end{equation*}
    Thus, there exists a $\tilde{\bfz} \in Z(\tilde{\bfp})$ such that $\tilde{\bfalpha} - \tilde{\mu} \cdot \mathbf{1}_m = \tilde{\bfz}$. 
    Since $\sum_{j=1}^m \tilde{p}_j = \sum_{i = 1}^n B_i$, $\tilde{p}_j \cdot \tilde{\alpha}_j = 0$ for all $j \in [m]$ and $\inp{\tilde{\bfp}}{\tilde{\bfz}} = 0$ (by the agent optimality and $\sum_{j=1}^m \tilde{p}_j = \sum_{i = 1}^n B_i$), we have 
    \begin{equation*}
        - \tilde{\mu} \sum_{i=1}^n B_i = - \tilde{\mu}\inp{\tilde{\bfp}}{\mathbf{1}_m} = \inp{\tilde{\bfp}}{\tilde{\bfalpha} - \tilde{\mu} \cdot \mathbf{1}_m} = \inp{\tilde{\bfp}}{\tilde{\bfz}} = 0, 
    \end{equation*}
    which leads to $\tilde{\mu} = 0$. 
    Hence, $\tilde{\bfz} = \tilde{\bfalpha} \geq 0$. 
    By~\cref{lem:property-no-demand-for-zero-payment} and 
    $\tilde{\bfp} \in \Delta_B$, 
    we attain that $\tilde{p}_j > 0$ for all $j \in [m]$ (otherwise $\tilde{z}_j < 0$ by~\cref{lem:property-no-demand-for-zero-payment}). 
    Since $\tilde{p}_j > 0$ for all $j \in [m]$, 
    we have $\tilde{\bfz} = \tilde{\bfalpha} = \mathbf{0}_m$.
    Because $\mathbf{0}_m \in Z(\tilde{\bfp})$, $\tilde{\bfp}$ is an equilibrium price vector.
    


    The converse is immediate: for any $\bfp^*$ at a CE, taking $\bfalpha^* = \mathbf{0}_m$ and $\mu^* = 0$ satisfies all KKT conditions. 
\end{proof}

Next, we show that our new program avoids the common ``poles'' issues in computing CE with chores, where certain directions lead to an unbounded objective. 
We define 

\begin{equation*}
    \delta_i := \sup\left\{ \delta > 0 \mid d_i\left( \delta \cdot \mathbf{1}_m \right) \leq 1  \right\}.
\end{equation*}


\begin{lemma}
    For any $\bfp$ that satisfies $\sum_{j=1}^m p_j = \sum_{i = 1}^n B_i$, the objective of~\eqref{pgm:general-eg-dual} is lower bounded. In particular, 
    \begin{equation}
        - \sum_{j = 1}^m p_j + \sum_{i = 1}^n B_i \log\left( \max_{x_i \geq 0: d_i(x_i) \leq 1} \inp{p}{x_i} \right) \geq \norm{\bfB}_1 \log\left( \frac{\min_{i \in [n]} \delta_i \norm{\bfB}_1}{e} \right) > -\infty.
    \end{equation}
\end{lemma}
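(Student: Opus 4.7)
The plan is to use the equality constraint to collapse the first term of the objective to a constant, and then to produce a pointwise lower bound on each inner maximization by exhibiting a single feasible bundle common to all price vectors in the simplex. Concretely, under the constraint $\sum_{j=1}^m p_j = \norm{\bfB}_1$, the first term of the objective equals $-\norm{\bfB}_1$, so everything reduces to lower bounding $\sum_{i=1}^n B_i \log\!\bigl(\max_{\bfx_i \geq 0: d_i(\bfx_i) \leq 1} \inp{\bfp}{\bfx_i}\bigr)$.

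For each agent $i$, the definition of $\delta_i$ tells us that $\delta_i \cdot \mathbf{1}_m$ is a feasible point for the inner maximization (as a supremum of feasible scales, with feasibility preserved in the limit by continuity of $d_i$). This is a pointwise feasible choice, so
\[
    \max_{\bfx_i \geq 0: d_i(\bfx_i) \leq 1} \inp{\bfp}{\bfx_i} \;\geq\; \inp{\bfp}{\delta_i \cdot \mathbf{1}_m} \;=\; \delta_i \sum_{j=1}^m p_j \;=\; \delta_i \norm{\bfB}_1,
\]
where the last equality uses the equality constraint. Taking logs and weighting by $B_i$, then summing over $i$ and using $\sum_i B_i = \norm{\bfB}_1$ gives
\[
    \sum_{i=1}^n B_i \log\!\bigl(\max_{\bfx_i \geq 0: d_i(\bfx_i) \leq 1} \inp{\bfp}{\bfx_i}\bigr) \;\geq\; \norm{\bfB}_1 \log\!\bigl(\min_{i \in [n]} \delta_i \cdot \norm{\bfB}_1\bigr).
\]
Combining with the first term $-\norm{\bfB}_1$ and absorbing the $-1 = \log(1/e)$ into the logarithm yields exactly the claimed bound $\norm{\bfB}_1 \log\!\bigl(\tfrac{\min_{i} \delta_i \norm{\bfB}_1}{e}\bigr)$.

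The one thing to verify carefully is that $\delta_i > 0$ for every $i$, so that the right-hand side is strictly greater than $-\infty$. This follows from the CCH structure: $1$-homogeneity gives $d_i(\mathbf{0}_m) = 0$, and continuity of $d_i$ then ensures $d_i(\delta \cdot \mathbf{1}_m) \le 1$ for all sufficiently small $\delta > 0$, so $\delta_i > 0$. This is the only mild obstacle; the rest is just reading off the constraint and applying the definition of $\delta_i$. No nonsmooth analysis, duality, or KKT machinery is needed here, which is pleasant because it means the lower bound is essentially structural and holds uniformly across the price simplex.
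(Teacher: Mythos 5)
Your proof is correct and follows essentially the same route as the paper's: both plug the feasible bundle $\delta_i \cdot \mathbf{1}_m$ into the inner maximization, use the constraint $\sum_j p_j = \norm{\bfB}_1$ to evaluate $\inp{\bfp}{\delta_i \mathbf{1}_m} = \delta_i \norm{\bfB}_1$, and fold the $-\norm{\bfB}_1$ term into the logarithm as $1/e$. Your explicit check that $\delta_i > 0$ (via continuity and $d_i(\mathbf{0}_m)=0$) is a small welcome addition that the paper only notes elsewhere.
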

\begin{proof}
    For any price vector $\bfp \in \RR^m_+$ that satisfies $\sum_{j=1}^m p_j = \sum_{i=1}^n B_i$, we have 
    \begin{equation*}
        \max_{\bfx_i \geq 0: d_i(\bfx_i) \leq 1} \inp{\bfp}{\bfx_i} \geq \inp*{\bfp}{\delta_i \cdot \mathbf{1}_m} = \delta_i \norm{\bfB}_1. 
    \end{equation*}
    Therefore, the objective of~\eqref{pgm:general-eg-dual} is lower bounded by $- \sum_{i=1}^n B_i + \sum_{i=1}^n B_i \log(\delta_i \norm{\bfB}_1) \geq \norm{\bfB}_1 \log\left( \delta \norm{\bfB}_1 / e \right)$ where $\delta = \min_{i \in [n]} \delta_i$. 
\end{proof}

\section{Discrete-Time Relative t\^atonnement}\label{sec:discrete}

In the computer science literature, a major focus has been on proving convergence guarantees for the discrete-time variant of t\^atonnement. 
In general, it is common for continuous-time gradient descent procedures to have better convergence properties than their discrete-time analogues, and thus it is natural to ask whether the discrete-time variant of relative t\^atonnement still retains nice convergence properties~\cite{absil2005convergence}.
The discrete-time version of~(\ref{chores-tatonnement}) is as follows: 
starting from any initial price vector $\bfp^0 \in \Delta_B$, the iterates are updated according to  
\begin{equation}
    \begin{aligned}
        &\mbox{Pick } \tilde{\bfz}^k \in \tilde{Z}(\bfp^k) \\ 
        &\bfp^{k + 1} = \bfp^k - \eta^k \tilde{\bfz}^k
    \end{aligned}, 
    \tag{Discrete-time relative t\^atonnement}
    \label{discrete-time-relative-chores-tatonnement}
\end{equation} 
where $\{ \eta^k \}_{k = 0, 1, 2, \ldots}$ is a sequence of nonnegative stepsizes. 
Note that, apart from the fact that we use relative demand, our t\^atonnement process completely avoids projection, thus rendering it a fairly natural price-adjustment process. We do not need projection onto the positive orthant because we will show that prices naturally stay away from zero.

For discrete-time na\"ive t\^atonnement dynamics, one can show that the divergence in~\cref{sec: naive unstable} is preserved for any choice of stepsizes.
\begin{lemma}\label{lem:divergence}
    For any $\rho \in [1, \infty)$, 
    there is a $\rho$-CES chores Fisher market with a unique CE price vector $\bfp^*$, 
    such that for every initial price vector $\bfp^0 \in \{ \bfp \in \RR^m_+ \mid \sum_{j=1}^m p_j \geq \sum_{i=1}^n B_i, \bfp \neq \bfp^* \}$, the discrete-time na\"ive t\^atonnement dynamics equipped with \emph{any} sequence of positive stepsizes $\{ \eta^t \}_{t \geq 0}$ and \emph{any} tie-breaking rule diverges from $\bfp^*$.
    \label{lem:discrete-time-naive-ttm-divergence}
\end{lemma}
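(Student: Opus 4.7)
The plan is to reuse the specific market instance constructed in the proof of Lemma \ref{lem:continuous-time-naive-ttm-divergence}: a single agent with CES disutility $d_1(\bfx_1) = (x_{11}^\rho + x_{12}^\rho)^{1/\rho}$ on two chores and earning requirement $B_1 = 1$, whose unique CE price vector is $\bfp^* = (\tfrac{1}{2}, \tfrac{1}{2})$. I will show that the scalar quantity $S^t := p_1^t + p_2^t$ is strictly monotone increasing along the discrete-time na\"ive t\^atonnement trajectory for \emph{any} positive stepsize and \emph{any} tie-breaking rule, whenever $S^t \geq \sum_i B_i = 1$ and $\bfp^t \neq \bfp^*$. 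This alone forces the iterates to stay bounded away from $\bfp^*$ (whose coordinate sum is $1$) after the very first step.

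Summing the two coordinates of the update rule $\bfp^{t+1} = \bfp^t - \eta^t \bfz^t$ gives $S^{t+1} - S^t = -\eta^t(z_1^t + z_2^t)$, so since $\eta^t > 0$, strict monotonicity of $S^t$ reduces to showing $z_1^t + z_2^t < 0$ at every $\bfp^t$ with $S^t \geq 1$ and $\bfp^t \neq \bfp^*$. This inequality is already established inside the proof of Lemma \ref{lem:continuous-time-naive-ttm-divergence}: for $\rho = 1$, $z_1 + z_2 = \tfrac{1}{\max\{p_1,p_2\}} - 2$, which is strictly negative whenever $\max\{p_1, p_2\} > \tfrac{1}{2}$ (exactly the regime imposed by $S \geq 1$ with $\bfp \neq \bfp^*$), and is moreover invariant under the choice of tie-breaking at $p_1 = p_2$; for $\rho \in (1, \infty)$ the demand is single-valued, and the same negativity of $z_1 + z_2$ follows from the strict inequality \eqref{inequality-to-show-continuous-time}. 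Hence $S^{t+1} > S^t$ at every iteration, regardless of how the stepsize is chosen or ties broken.

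To conclude non-convergence to $\bfp^*$, observe that from any initial $\bfp^0 \neq \bfp^*$ with $S^0 \geq 1$, one step of the dynamics yields $S^1 > S^0 \geq 1$, and strict monotonicity then gives $S^t \geq S^1 > 1 = p_1^* + p_2^*$ for all $t \geq 1$. Hence the iterates are separated from $\bfp^*$ by a fixed positive $\ell_1$-distance of at least $S^1 - 1 > 0$, and cannot converge to $\bfp^*$. The main subtle point I anticipate is that an unbounded stepsize could in principle drive some coordinate $p_j^{t+1}$ below zero, taking the iterate outside the nonnegative region where the excess demand formula was derived. However, when $p_j^t$ is small the corresponding excess demand is close to $-1$ (the agent prefers the more lucrative chore), so the t\^atonnement update pushes that small coordinate \emph{upward} rather than downward; and even if such a coordinate were formally driven negative, the dynamics would simply become undefined at that step, which itself precludes convergence to $\bfp^*$. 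Thus monotonicity of $S^t$ up to the first ill-defined step (if any) suffices to complete the argument.
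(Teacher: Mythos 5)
Your proposal is correct and follows essentially the same route as the paper's proof: the same single-agent, two-chore instance, with the observation that $z_1^t+z_2^t<0$ whenever $p_1^t+p_2^t\geq 1$ and $\bfp^t\neq\bfp^*$, so the price sum strictly increases at every step for any positive stepsize and tie-breaking rule, which keeps the iterates bounded away from $\bfp^*$ (whose coordinates sum to $1$). Your closing remark about large stepsizes possibly driving a coordinate negative flags a boundary issue that the paper's proof also leaves implicit; your specific justifications there (that only small-price coordinates are at risk, or that the dynamics become undefined) are not quite accurate, but this side remark is not needed for the main monotonicity argument, which matches the paper's.
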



\begin{proof}
    For each $\rho \in [1, \infty)$, we consider the same instance as in~\cref{lem:continuous-time-naive-ttm-divergence}.

    For $\rho = 1$, consider a price vector $\bfp \in \RR^2_+$ and an updated price vector $\bfp^+$ in the next step.
    If the current price satisfies $p_1 + p_2 \geq 1$ and $\bfp \neq \bfp^*$, then 
    \begin{equation*}
        \begin{aligned}
            \Delta(p_1 + p_2) := \frac{1}{\eta^t} (p^+_1 + p^+_2 - p_1 - p_2) = - (z_1 + z_2) = \begin{cases}
            2 - \frac{1}{p_1} & p_1 > p_2 \geq 0 \\ 
            2 - \frac{1}{p} & p_1 = p_2 = p > 0 \\ 
            2 - \frac{1}{p_2} & p_2 > p_1 \geq 0, 
        \end{cases}
        \end{aligned}
    \end{equation*}
    where $(z_1, z_2) \in Z(\bfp)$.
    Equivalently, $\Delta(p_1 + p_2) = 2 - \frac{1}{\max\{ p_1, p_2 \}}$.
    By the same argument as in the continuous-time counterpart, $p^t_1 + p^t_2 > p^{t - 1}_1 + p^{t - 1}_2 > \cdots > p^0_1 + p^0_2 \geq 1$ for all $t \geq 1$ for any sequence of positive stepsizes $\{ \eta^t \}_{t \geq 0}$. This implies that the na\"ive t\^atonnement dynamics diverge away from $\bfp^*$.

    For $1 < \rho < \infty$, by defining  
    $\Delta(p_1) := \frac{1}{\eta^t} (p^+_1 - p_1) $ and 
    $\Delta(p_1) := \frac{1}{\eta^t} (p^+_2 - p_2)$, 
    we have 
    \begin{equation*}
        \Delta(p_1 + p_2) = 2 - \frac{p_1^{1/(\rho - 1)} + p_2^{1/(\rho - 1)}}{p_1^{\rho / (\rho - 1)} + p_2^{\rho / (\rho - 1)}}. 
    \end{equation*}
    Then, the divergence of na\"ive t\^atonnement follows by the same argument as in~\cref{lem:continuous-time-naive-ttm-divergence}.
\end{proof}

First, we will show that discrete-time relative t\^atonnement converges to CE with a sequence of shrinking stepsizes for CCH and strictly increasing disutilities. 
Then, we show that for a large class of CCH disutility functions, including convex CES disutility functions with $\rho \in (1,\infty)$, one can get an $\mathcal{O}\left( \frac{1}{\epsilon^2} \right)$ convergence rate. 

\subsection{General CCH Disutilities}\label{subsec:CCH-Discrete}

First, we show that discrete-time relative t\^atonnement converges to CE for strictly increasing CCH disutilities, once we choose appropriate stepsizes. 

\begin{lemma} 
    Let $\{ \bfp^k \}_{k = 0, 1, 2, \ldots}$ be a sequence of iterates generated by~(\ref{discrete-time-relative-chores-tatonnement}) with any initial point $\bfp^0 \in \Delta_B$ 
    and $\eta^k \leq \frac{\ell_0^2}{2 \norm{\bfB}_1}$ for all $k$, where $\ell_0 = \frac{\norm*{\bfB}_1}{3m} \min_{i \in [n]} \nu_i$ and $\nu_i$ is defined in~\cref{eq:nu defn}. 
    Then, we have 
    \begin{itemize}
        \item $\bfp^k \in \Delta_B$ for all $k \geq 0$, 
        \item there exists an index $k_0 \geq 0$ such that $p^k_j \geq \frac{\ell_0}{2}\; \forall\, j \in [m]$ for all $k \geq k_0$, and 
        \item $p^{k+1}_j > p^k_j + \eta^k\cdot\frac{1}{6m}$ if $p^k_j \leq \ell_0$ for all $k \geq 0$. 
    \end{itemize}
    \label{lem:discrete-time-relative-tatonnement-properties}
\end{lemma}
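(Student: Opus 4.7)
The plan is to prove all three statements by simultaneous induction on $k$. The key external tool is Lemma \ref{lem:excess-demand-barrier}, which gives $\tilde{z}_j < -\tfrac{1}{6m}$ whenever $p_j \leq \ell_0$, provided $\max_j p_j \in [\norm{\bfB}_1/(2m),\, 3\norm{\bfB}_1/2]$. Given the lemma's preconditions, claim~(3) is immediate: $p_j^{k+1} = p_j^k - \eta^k \tilde{z}_j^k > p_j^k + \eta^k/(6m)$. For the induction, I would first verify preservation of $H_B$: since every $\tilde{\bfz}^k \in \tilde{Z}(\bfp^k)$ has zero mean by construction, the update preserves the sum of prices, so $\sum_j p_j^{k+1} = \norm{\bfB}_1$. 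This, together with $\bfp^k \geq 0$ (which will be re-established below), gives $\max_j p_j^k \in [\norm{\bfB}_1/m, \norm{\bfB}_1]$ by pigeonhole, verifying Lemma \ref{lem:excess-demand-barrier}'s preconditions.

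To establish non-negativity $p_j^{k+1} \geq 0$, I split into two regimes. If $p_j^k \leq \ell_0$, claim~(3) immediately gives $p_j^{k+1} > p_j^k \geq 0$. If $p_j^k > \ell_0$, I would upper bound $\tilde{z}_j^k \leq \norm{\bfB}_1/p_j^k$ using the fact that each agent earns exactly $B_i$ at optimum (strict non-satiation, Assumption~\ref{assump:strictly-increasing}), which yields $p_j \sum_i x_{ij} \leq \norm{\bfB}_1$, combined with $\bar{z} \geq -1$ (since each $z_j \geq -1$). The stepsize bound $\eta^k \leq \ell_0^2/(2\norm{\bfB}_1)$ then gives $\eta^k \tilde{z}_j^k \leq \ell_0^2/(2 p_j^k) < p_j^k/2$, whence $p_j^{k+1} > p_j^k/2 > 0$. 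This closes the induction for claim~(1).

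For claim~(2), the strategy is to show that the region $\{p_j \geq \ell_0/2\}$ is \emph{forward-invariant} for each coordinate, and then argue that each coordinate must enter this region in finite time. Invariance is handled by the same case split: if $p_j^k \geq \ell_0$, the computation above strengthens to $p_j^{k+1} > p_j^k/2 \geq \ell_0/2$; if $\ell_0/2 \leq p_j^k \leq \ell_0$, claim~(3) gives $p_j^{k+1} > p_j^k + \eta^k/(6m) \geq \ell_0/2$. For reachability, suppose a coordinate $j$ had $p_j^k < \ell_0/2$ for all $k$; telescoping claim~(3) yields $p_j^k > p_j^0 + (6m)^{-1} \sum_{k' < k} \eta^{k'}$, which diverges under the standard non-summability $\sum_k \eta^k = \infty$ used in the companion convergence theorem (Theorem~\ref{thm:discrete-time-relative-tatonnement-convergence}), contradicting $p_j^k < \ell_0/2$.

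The main subtlety is the positivity argument in the $p_j^k > \ell_0$ regime: the stepsize bound $\eta^k \leq \ell_0^2/(2\norm{\bfB}_1)$ is tightly calibrated so that, together with the budget-based upper bound $\tilde{z}_j \leq \norm{\bfB}_1/p_j$, a price can at worst halve in a single step. This prevents overshooting past zero and underpins both the invariance in claim~(2) and the non-negativity in claim~(1); the coupling between the three claims is why the induction must be carried out simultaneously rather than sequentially.
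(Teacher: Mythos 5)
Your proof is correct and follows essentially the same route as the paper's: preservation of $H_B$ from the zero mean of every $\tilde{\bfz}\in\tilde Z(\bfp)$, the barrier result (\cref{lem:excess-demand-barrier}) for the third bullet, the budget bound $\tilde z_j \leq \norm{\bfB}_1/p_j$ combined with the stepsize cap $\eta^k \leq \ell_0^2/(2\norm{\bfB}_1)$ to show a price above $\ell_0$ cannot drop below $\ell_0/2$ (your version, $p_j^{k+1}>p_j^k/2$, is a slight strengthening of the paper's $p_j^{k+1}>\ell_0/2$), and then a simultaneous induction with coordinatewise forward invariance for the first two bullets. The only point of departure is that you explicitly invoke $\sum_k \eta^k=\infty$ for the reachability step of the second bullet; this hypothesis is absent from the lemma statement, but the paper's own proof needs it just as much (it asserts the crossing times $\kappa_j$ exist without justification), and it holds in both places the lemma is used (constant stepsizes in \cref{lem:non-smooth-to-smooth}, non-summable stepsizes in \cref{thm:discrete-time-relative-tatonnement-convergence}), so your handling is, if anything, the more careful one.
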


\begin{proof} 
    The above lemma follows from the following three facts. 
    Informally, these points create a discrete-time threshold that prevents any price from approaching zero, provided the step size is chosen small enough.
    \begin{enumerate}
        \item For any $\bfp^k \in \Delta_B$, $\bfp^{k+1} \in H_B$; 
        \item For any $\bfp^k \in \Delta_B$ and each $j \in [m]$, if $p^k_j \leq \ell_0$ then $p^{k+1}_j > p^k_j + \eta^k\cdot\frac{1}{6m} $; 
        \item For any $\bfp^k \in \RR^m_+$ and each $j \in [m]$, if $p^k_j > \ell_0$, then $p^{k + 1}_j > \frac{\ell_0}{2}$. 
    \end{enumerate} 
    The first fact is true because $\sum_{j = 1}^m \tilde{z}_j = \sum_{j = 1}^m (z_j - \frac{1}{m}\sum_{j' = 1}^m z_{j'}) = 0$ for any $\tilde{\bfz} \in \tilde{Z}(\bfp)$ 
    where $\bfz$ is the excess demand supporting $\tilde{\bfz}$. 
    The second fact follows from~\cref{lem:excess-demand-barrier}. 
    The third fact is true because of the following: for any $\bfz^k \in Z(\bfp^k)$, we have $z^k_j \leq \frac{\norm{\bfB}_1}{p_j^k} - 1 \leq \frac{\norm{\bfB}_1}{\ell_0} - 1$ and $\frac{1}{m}\sum_{j=1}^m z^k_j \geq -1$. Thereby, $\tilde{z}^k_j \leq z^k_j - (-1) \leq \frac{\norm{\bfB}_1}{\ell_0}$. Thus, $p^{k + 1}_j = p^k_j - \eta^k \tilde{z}^k_j > \ell_0 - \frac{\ell_0^2}{2\norm{\bfB}_1} \frac{\norm{\bfB}_1}{\ell_0} = \frac{\ell_0}{2}$. 
    
    We then prove~\cref{lem:discrete-time-relative-tatonnement-properties}. 
    From the above three facts, we know 
    if $\bfp^k \in \Delta_B$ then $p^{k+1}_j > \min\{ p^k_j, \frac{\ell_0}{2} \}$, which means $\bfp^{k + 1} \in \Hb \cap \RR^m_{++} \subset \Delta_B$. 
    By induction the first statement in~\cref{lem:discrete-time-relative-tatonnement-properties} is correct. 
    Then, let us assume there exists a set of chores $J$ such that $p^0_j \in [0, \ell_0)$ for any $j \in J$. 
    Because $p^{k + 1}_j > p^k_j + \eta^k \cdot \frac{1}{6m}$ if $p^k_j < \ell_0$ and $p^k \in \Delta_B$, 
    there is a time point $\kappa_j$ for each $j \in J$ such that it exceeds the barrier $\ell_0$ for the first time, i.e., $\kappa_j := \min\{ k \,\mid\, p^k_j > \ell_0 \}$.
    The above facts then imply that $p^k_j \geq \frac{\ell_0}{2}$ for all $k \geq \kappa_j$. 
    Hence, letting $k_0 := \max_{j \in J} \kappa_j$, we show the second statement in~\cref{lem:discrete-time-relative-tatonnement-properties} is correct. 
    The third statement was proved as the second fact above.
\end{proof}

Following the general framework of stochastic approximation (see, for example,~\cite{davis2020stochastic} and references therein), 
we next need to isolate conditions under which the discrete-time sequence can be seen as an approximation to the continuous-time trajectory of the dynamical system in~\eqref{chores-tatonnement}. 
A set of such conditions was stipulated by~\cite[Assumption A]{davis2020stochastic} for the general stochastic subgradient descent method. Here we specialize their results~\cite[Corollary 5.5]{davis2020stochastic} to our setting, which allows us to leave out one of their assumptions dealing with the stochastic setting.
\begin{proposition}[Corollary 5.5 of \cite{davis2020stochastic}]
    \label{davis assumptions}
    If the iterates of the subgradient descent method satisfy the following:
    \begin{enumerate}
        \item All limit points of $\{\bfp^k\}$ lie in $\Delta_B$.
        \item The iterates are bounded: $\sup_{k \geq 1} \norm{\bfp^k} < \infty$ and $\sup_{k\geq 1} \sup_{\tilde{\bfz}^k \in \tilde{Z}(\bfp^k)} \norm{\tilde \bfz^k} < \infty$.
        \item The stepsize sequence is nonnegative and square summable, but not summable.
        \item For any unbounded increasing sequence $\{ k_j \}_{j = 0, 1, 2, \ldots} \subset \mathbb N$ such that $\bfp^{k_j}$ converges to some point $\hat{\bfp}$, it holds
        $\lim_{n\rightarrow \infty} \operatorname{dist}\left(\frac{1}{n} \sum_{j=1}^n \tilde{Z}(\bfp^{k_j}), \tilde{Z}(\hat{\bfp}) \right) = 0. 
        $
    \end{enumerate}
    and 
    the objective function satisfies that (i) the function is locally Lipschitz and subdifferentially regular and (ii) the set of its non-stationary values is dense in $\RR$. 
    Then, every limit point of the iterates generated by the subgradient method is a stationary point, and the function values converge. 
\end{proposition}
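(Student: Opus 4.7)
The plan is to invoke the standard ODE/differential-inclusion method from stochastic approximation, specialized to the deterministic set-valued setting. The two objects at the heart of the argument are a piecewise linear interpolation of the iterates $\{\bfp^k\}$ and the continuous-time dynamical system $\dot{\bfp}\in-\partial f\vert_{H_B}(\bfp)$, whose solutions admit $f\vert_{H_B}$ as a Lyapunov function. Concretely, define the cumulative time $t_k:=\sum_{j<k}\eta^j$, which diverges by the non-summability in Condition 3, and let $\bar{\bfp}:\mathbb{R}_+\to\mathbb{R}^m$ be the piecewise affine curve with $\bar{\bfp}(t_k)=\bfp^k$, affine on each $[t_k,t_{k+1}]$.

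First, I would show that any subsequential limit of the shifted curves $\bar{\bfp}(t_k+\cdot)$ is a trajectory of $\dot{\bfp}\in-\partial f\vert_{H_B}(\bfp)$. Expanding $\bar{\bfp}(t_k+T)-\bar{\bfp}(t_k)$ as a Riemann-type sum of $-\eta^j\tilde{\bfz}^j$, Condition 2 gives uniform boundedness of the summands, the square-summability part of Condition 3 controls the oscillation of $\bar{\bfp}$ on small intervals, and Condition 4 identifies the limits of Ces\`aro averages of selected subgradients with elements of the Clarke subdifferential at the limit point. Together these imply that $\bar{\bfp}$ is an asymptotic pseudotrajectory of the differential inclusion in the sense of Benaim--Hofbauer--Sorin. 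Let $\mathcal{L}:=\bigcap_{k\geq0}\overline{\{\bfp^j:j\geq k\}}$ be the set of limit points; by Condition 2 it is compact and nonempty, by Condition 1 it lies in $\Delta_B$, and the pseudotrajectory property implies $\mathcal{L}$ is connected and internally chain transitive for the differential inclusion.

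Next, I would invoke the Lyapunov structure. Since $f\vert_{H_B}$ is locally Lipschitz and subdifferentially regular, the descent lemma already proved for the continuous-time dynamics (\cref{lem:descent}) applies verbatim: on every solution $\bfp(\cdot)\subset\Delta_B$ of $\dot{\bfp}\in-\partial f\vert_{H_B}(\bfp)$, the value $f\vert_{H_B}(\bfp(t))$ is non-increasing, and strictly decreases across any time interval that meets a non-stationary point. Because $\mathcal{L}$ is invariant and internally chain transitive, $f\vert_{H_B}(\mathcal{L})$ is a connected subset of the closed set $\mathcal{S}$ of stationary values. Assumption (ii) says $\mathbb{R}\setminus\mathcal{S}$ is dense, so $\mathcal{S}$ has empty interior; a connected subset of such a set is a singleton. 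Thus $f\vert_{H_B}\equiv v^\star$ on $\mathcal{L}$, which gives convergence of $\{f\vert_{H_B}(\bfp^k)\}$. Finally, if some $\bfp^\infty\in\mathcal{L}$ were non-stationary, a solution initialized at $\bfp^\infty$ would remain in $\mathcal{L}$ by chain transitivity while strictly lowering $f\vert_{H_B}$, contradicting $f\vert_{H_B}\equiv v^\star$; hence every limit point is stationary.

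The main obstacle is the last step: without Sard-type information on the values of $f\vert_{H_B}$ at stationary points, the limit set could in principle equilibrate on a continuum of distinct critical values, and $\{f\vert_{H_B}(\bfp^k)\}$ would need not converge despite $\bfp^{k+1}-\bfp^k\to 0$. This is precisely the failure mode Assumption (ii) rules out. Verifying that hypothesis in the chores setting (done separately for $f\vert_{H_B}$) is, in my view, the technically delicate input; granted it, the argument above reduces to combining the asymptotic pseudotrajectory property, subdifferential regularity, and the Lyapunov descent of \cref{lem:descent}.
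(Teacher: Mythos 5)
You should first note that the paper itself does not prove this statement: it is imported wholesale as Corollary~5.5 of \cite{davis2020stochastic}, specialized to the deterministic setting (the authors only remark that the stochastic-noise hypothesis can be dropped). So your sketch is not an alternative to the paper's argument but a reconstruction of the cited one, and it follows the same route that Davis et al.\ (building on Bena\"im--Hofbauer--Sorin) take: interpolate the iterates on the timescale $t_k=\sum_{j<k}\eta^j$, use boundedness (Condition~2), square-summability (Condition~3) and the Ces\`aro/outer-semicontinuity condition (Condition~4) to show the interpolation is an asymptotic pseudotrajectory of $\dot{\bfp}\in-\partial f\vert_{H_B}(\bfp)$, conclude that the limit set $\mathcal{L}$ is compact, invariant and internally chain transitive, and then combine the Lyapunov/descent property coming from subdifferential regularity with the density of non-stationary values.

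The one genuine weakness is the pivotal middle step, which you assert rather than derive: the claim that ``$f\vert_{H_B}(\mathcal{L})$ is a connected subset of the closed set of stationary values'' presupposes that $\mathcal{L}$ consists of stationary points, which is precisely what is to be shown, and your later contradiction argument leans on the constancy of $f\vert_{H_B}$ on $\mathcal{L}$ that you obtained from this claim --- so as written the logic is circular at that point. The correct order is: prove first that $f\vert_{H_B}$ is constant on any internally chain transitive set (this uses that ICT sets are invariant and contain no proper attractor, together with the empty interior of the set of Lyapunov values attained on the stationary set), and only then deduce stationarity of every point of $\mathcal{L}$ from strict descent at non-stationary points. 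That step is exactly Proposition~3.27 of Bena\"im--Hofbauer--Sorin, equivalently Theorem~3.2 of \cite{davis2020stochastic}, and it should be cited or proved rather than stated as immediate. A minor further point: \cref{lem:descent} is stated for the chores potential, while the proposition concerns a generic locally Lipschitz, subdifferentially regular objective; the descent property you need is the general one following from the chain rule (Lemma~5.4 of \cite{davis2020stochastic}, or the general descent lemma in the paper's appendix).
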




By choosing appropriate stepsizes
and using the properties of $f\vert_{H_B}$ shown in~\cref{subsec:Relative chores t\^atonnement is stable}
, we can show that the assumptions in~\cref{davis assumptions} hold for the iterates of~\eqref{discrete-time-relative-chores-tatonnement}. 
This allows us to conclude convergence of~\eqref{discrete-time-relative-chores-tatonnement}. See~\cref{app:sec:Omitted proofs} for the proof. 

\begin{theorem}
    In a chores Fisher market with CCH disutilities satisfying~\cref{assump:strictly-increasing}, let $\{ \bfp^k \}_{k = 0, 1, \ldots}$ be a sequence of iterates generated by~\cref{discrete-time-relative-chores-tatonnement} with $\{ \eta^k \}_{k = 0, 1, \ldots}$ and any initial point $\bfp^0 \in \Delta_B$, 
    where the stepsizes satisfy 
    \begin{equation*}
        0 \leq \eta^k \leq \frac{\ell_0^2}{2 \norm{\bfB}_1} \; \forall\, k = 0 , 1, \ldots, \quad \sum\nolimits_{k = 0}^\infty \eta^k = \infty \quad \text{and} \quad \sum\nolimits_{k = 0}^\infty (\eta^k)^2 < \infty. 
    \end{equation*} 
    Then, every limit point of the iterates $\{ \bfp^k \}_{k = 0, 1, \ldots}$ is a stationary point of $f\vert_{H_B}$ and the function values $\{ f\vert_{H_B} (\bfp^k)\}_{k = 0, 1, \ldots}$ converge. 
    \label{thm:discrete-time-relative-tatonnement-convergence}
\end{theorem}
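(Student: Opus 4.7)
The plan is to reduce this theorem to a direct application of Proposition \ref{davis assumptions}. Since the proposition concerns a generic subgradient-type method, the work is all in (a) checking the four assumptions on the iterate sequence and (b) verifying the two structural assumptions on $f\vert_{H_B}$. The key combinatorial input is Lemma \ref{lem:discrete-time-relative-tatonnement-properties}, which controls the iterates, and Lemma \ref{lem:subdifferentially-regular-and-generalized-subdifferential}, which identifies $\tilde{Z}(\bfp) = \partial f\vert_{H_B}(\bfp)$.

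For the first three iterate assumptions, I would use the following arguments. Assumption 1 is immediate: Lemma \ref{lem:discrete-time-relative-tatonnement-properties} guarantees $\bfp^k \in \Delta_B$ for all $k$, and $\Delta_B$ is closed, so every limit point lies in $\Delta_B$. For Assumption 2, boundedness of $\{\bfp^k\}$ follows from $\bfp^k \in \Delta_B$ (a compact set). To bound the relative excess demands uniformly over $\Delta_B$, I would use Walras' law in the form $\inp{\bfp}{\bfz} = 0$ for $\bfp\in \Delta_B$ together with $z_j \geq -1$, which gives $|z_j| \leq \max\{1, 2\norm{\bfB}_1/\ell_0\}$ by splitting into coordinates with $p_j \geq \ell_0/2$ (use the Walras bound) and $p_j < \ell_0/2$ (use $-1 \leq z_j < 0$ from Lemma \ref{lem:excess-demand-barrier}); the relative excess demand inherits the same type of bound after subtracting the mean. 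Assumption 3 on step sizes is given in the hypothesis.

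Assumption 4 is the main obstacle, and the way through is the upper semicontinuity (u.s.c.) of the Clarke subdifferential for locally Lipschitz functions. Since $\tilde Z = \partial f\vert_{H_B}$ and $f\vert_{H_B}$ is locally Lipschitz by Lemma \ref{lem:subdifferentially-regular-and-generalized-subdifferential}, the set-valued map $\tilde Z$ is u.s.c.\ with nonempty compact convex values. Hence for any $\varepsilon > 0$ and any $\bfp^{k_j}\to\hat{\bfp}$, there is $J_\varepsilon$ with $\tilde Z(\bfp^{k_j})\subseteq \tilde Z(\hat{\bfp})+\varepsilon\mathcal{B}_1(0)$ for all $j\geq J_\varepsilon$. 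Averaging and using convexity of $\tilde Z(\hat{\bfp})$, the tail average is within $\varepsilon$ of $\tilde Z(\hat{\bfp})$, while the finite prefix of size $J_\varepsilon$ contributes $O(1/n)$ to the averaged set (which is contained in the bounded set from Assumption 2). Letting $n\to\infty$ and then $\varepsilon\to 0$ gives $\operatorname{dist}\bigl(\tfrac{1}{n}\sum_{j=1}^n \tilde Z(\bfp^{k_j}),\tilde Z(\hat{\bfp})\bigr)\to 0$.

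Finally I would verify the two properties of the objective. Local Lipschitzness and subdifferential regularity of $f\vert_{H_B}$ are exactly Lemma \ref{lem:subdifferentially-regular-and-generalized-subdifferential}. For the density of non-stationary values, the set of stationary points of $f\vert_{H_B}$ equals the set of CE price vectors by Lemma \ref{key-lem:stationary-point-in-affine-hull-is-CE}; by the description of CE as KKT points of the \eqref{pgm:general-eg-dual} program—whose KKT system is a finite piecewise real-analytic (indeed semi-algebraic in the CES case) system—the set of stationary values $f\vert_{H_B}(\text{CE})$ has empty interior in $\RR$, so its complement is dense. With all assumptions of Proposition \ref{davis assumptions} verified, the conclusion that every limit point of $\{\bfp^k\}$ is a stationary point and that $\{f\vert_{H_B}(\bfp^k)\}$ converges follows directly.
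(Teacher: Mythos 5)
Your proposal is correct and follows essentially the same route as the paper: both reduce the theorem to Proposition~\ref{davis assumptions}, verify the iterate assumptions via Lemma~\ref{lem:discrete-time-relative-tatonnement-properties} and the boundedness of (relative) excess demands, obtain Assumption~4 from outer semicontinuity of the compact-convex-valued Clarke subdifferential, and invoke Lemma~\ref{lem:subdifferentially-regular-and-generalized-subdifferential} for local Lipschitzness and regularity. The only (immaterial) difference is your justification of the density of non-stationary values via the KKT/semi-algebraic structure, where the paper simply appeals to the finitely many disconnected equilibrium values.
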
 

\subsection{Convergence rate guarantees for CES disutilities}\label{sec:poly-time}

For a large class of convex CES disutility functions with $\rho \in (1,\infty)$, we show that the relative t\^atonnement dynamics can find an $\varepsilon$-approximate CE in $\mathcal{O}\left( \frac{1}{\varepsilon^2} \right)$ iterations. 
This is the first result guaranteeing a convergence rate that is polynomial in $\frac{1}{\varepsilon}$ for any interesting disutility class beyond linear utilities.

For a proper, nonnegative, convex, $1$-homogeneous function $h: \RR^m \rightarrow \RR_+$ satisfying $h(\mathbf{0}_m) = 0$ (called a \emph{gauge function}), we define the gauge dual $h^\circ$ by 
\begin{equation}
    h^\circ(\bfp) = \sup_{\bfx}\{ \inp*{\bfp}{\bfx} \mid h(\bfx) \leq 1 \}. 
    \label{eq:def-gauge-dual}
\end{equation}

Given a CCH disutility function $d_i$, we define its gauge dual as 
\begin{equation*}
    d^\circ_i(\bfp) := \max_{\bfx_i}\{ \inp{\bfp}{\bfx_i} \mid d_i(|\bfx_i|) \leq 1 \}, 
\end{equation*}
where $| \cdot |$ denotes the vector of the component-wise absolute values of a vector.
Here, $d_i(|\cdot|)$ is the extension of $d_i$ from $\RR_+^m$ to $\RR^m$.
By leveraging this gauge dual, we establish global and local smoothness of the potential function of~\eqref{discrete-time-relative-chores-tatonnement}.

In the following subsections, we first establish global smoothness and local smoothness for different ranges of disutility functions in~\cref{subsubsec:global-smoothness}  (for $\rho\in (1,2]$), and \cref{subsubsec:local-smoothness} (for $\rho\in (2,\infty)$). 
In~\cref{subsubsec:smooth-convergence}, we show that the smoothness guarantees from the previous subsections leads to the desired iteration complexity.

\subsubsection{Global smoothness properties of disutilities}
\label{subsubsec:global-smoothness}

Before presenting our results, we first formally define the concept of smoothness and some strongly-convex-like concepts.
We say a function $h$ is $L$-smooth if $h$ is differentiable and 
\begin{equation*}
    \lVert \nabla h(\bfx_1) - \nabla h(\bfx_2) \rVert \leq L \norm*{ \bfx_1 - \bfx_2} 
    \hspace{10pt} \forall\, \bfx_1, \bfx_2 \in \textnormal{dom }(h).
\end{equation*}
A function $h$ is \emph{$\mu$-strongly convex} if 
\begin{equation*}
    \lambda h(\bfx_2) + (1 - \lambda) h(\bfx_1) \ge h\left(\lambda \bfx_2 + (1 - \lambda) \bfx_1\right) + \frac{\mu}{2} \norm{\bfx_2 - \bfx_1}^2 \hspace{10pt} \forall\, \lambda \in (0, 1) \hspace{10pt} \forall\, \bfx_1, \bfx_2 \in \textnormal{dom }(h).
\end{equation*}
A function $h$ is \emph{$\mu$-strongly quasi-convex} if
\begin{equation*}
    \max\{ h(\bfx_2), h(\bfx_1) \} \ge h\left(\lambda \bfx_2 + (1 - \lambda) \bfx_1\right) + \frac{\mu}{2} \norm{\bfx_2 - \bfx_1}^2 \hspace{10pt} \forall\, \lambda \in (0, 1) \hspace{10pt} \forall\, \bfx_1, \bfx_2 \in \textnormal{dom }(h).
\end{equation*}
Strong quasi-convexity is a strictly weaker notion than strong convexity; any $\mu$-strongly convex function is $\mu$-strongly quasi-convex by definition.
A closed convex set $C$ is \emph{$\mu$-strongly convex} if, for any $\bfx_1, \bfx_2 \in C$ and $\lambda \in [0,1]$,
\begin{equation*}
    \mathcal{B}_{\frac{\mu}{2} \lambda(1-\lambda) \norm{ \bfx_1 - \bfx_2 }^2}\left(\lambda \bfx_1 + (1-\lambda)\bfx_2\right) \subseteq C.
\end{equation*}
Recall that we use  $\mathcal{B}_r(\bfx)$ to denote the closed Euclidean ball centered at $\bfx$ with radius $r$.

For a proper, closed, convex function 
$h$, strong convexity and smoothness are dual under standard Fenchel conjugacy:
$f$ is $\mu$-strongly convex ($\mu>0$) \emph{if and only if} its conjugate $f^*$ is differentiable with $(1/\mu)$-Lipschitz gradient~\cite[Theorem 5.26]{beck2017first}.
However, 
many CCH functions are not even strongly quasi-convex. 
For example, the gauge functions corresponding to convex CES disutility functions with $\rho \in [1, \infty)$ are not.

Next, we prove global smoothness of the gauge dual for a broad class of CCH disutility functions, even though the disutility functions themselves are not strongly convex.
In particular, we find a specific smoothness constant for any two price vectors $\bfp$ and $\bfp'$ that lie in the whole space of the price simplex $\Delta_B$. 


\begin{theorem}
    If there exists some power $a \geq 1$ such that $d(|\bfx|)^a$ is $\mu$-strongly quasi-convex in a region containing $S^1 = \{ \bfx: d(|\bfx|) \leq 1 \}$, 
    and $\norm*{\nabla ( d(|\bfx|) )^a} \leq M_1$ for all $\bfx \in S^1$, 
    then for any $\bfp_1, \bfp_2 \in \Delta_B$, we have 
    \begin{equation}
        \lVert \nabla d^\circ(\bfp_1) - \nabla d^\circ(\bfp_2) \rVert \leq \frac{\sqrt{m} M_1}{\mu \norm{\bfB}_1} \norm{ \bfp_1 - \bfp_2 }. 
        \label{eq:smoothness-for-a-class-of-disutility-functions}
    \end{equation}
    \label{thm:smooth-strongly-quasiconvex}
\end{theorem}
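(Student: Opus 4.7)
The plan is to view $d^\circ$ as the support function of the closed convex set $S^1 = \{\bfx : d(|\bfx|) \leq 1\}$, and to establish Lipschitzness of its gradient via Lipschitzness of the maximizer mapping. By Danskin's (envelope) theorem, whenever the maximizer $\bfx^*(\bfp) \in \textnormal{argmax}\{\inp{\bfp}{\bfx} : d(|\bfx|) \leq 1\}$ is unique, we have $\nabla d^\circ(\bfp) = \bfx^*(\bfp)$. Uniqueness at each $\bfp \in \Delta_B$ follows from the strong quasi-convexity of $\phi := d(|\cdot|)^a$: if $\bfx_1 \neq \bfx_2$ were both maximizers, then $\phi\bigl(\tfrac{\bfx_1 + \bfx_2}{2}\bigr) < 1$ would leave room to push the midpoint slightly in the direction $\bfp$ and strictly increase the linear objective. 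Note that $\{\phi \leq 1\} = S^1$ since $a \geq 1$ and $\phi \geq 0$, so we can work with $\phi$ throughout. Thus it suffices to prove $\norm{\bfx^*(\bfp_1) - \bfx^*(\bfp_2)} \leq \frac{\sqrt{m} M_1}{\mu \norm{\bfB}_1} \norm{\bfp_1 - \bfp_2}$.

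The key geometric step is to show that, for any $\lambda \in (0, 1)$, the interpolant $\bfy_\lambda := \lambda \bfx_1 + (1-\lambda)\bfx_2$ admits a Euclidean ball of radius $r_\lambda := \frac{\mu}{2 M_1} \lambda(1-\lambda) \norm{\bfx_1 - \bfx_2}^2$ inside $S^1$, where $\bfx_i = \bfx^*(\bfp_i)$. Strong quasi-convexity of $\phi$ on $S^1$ directly yields
\[
\phi(\bfy_\lambda) \;\leq\; \max\{\phi(\bfx_1), \phi(\bfx_2)\} - \tfrac{\mu}{2}\lambda(1-\lambda)\norm{\bfx_1-\bfx_2}^2 \;\leq\; 1 - \tfrac{\mu}{2}\lambda(1-\lambda)\norm{\bfx_1-\bfx_2}^2.
\]
For any unit direction $\mathbf{v}$, let $t^* := \sup\{t \geq 0 : \bfy_\lambda + s\mathbf{v} \in S^1 \text{ for all } s \in [0,t]\}$. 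Along the segment $\{\bfy_\lambda + s\mathbf{v} : s \in [0, t^*]\}$, which lies in $S^1$, the gradient bound $\norm{\nabla \phi} \leq M_1$ and the mean value inequality yield $\phi(\bfy_\lambda + t^* \mathbf{v}) \leq \phi(\bfy_\lambda) + M_1 t^*$; equating with $1$ at the exit time gives $t^* \geq r_\lambda$, proving the claim.

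Applying this with $\mathbf{v} = \bfp_1/\norm{\bfp_1}$, the point $\bfy_\lambda + r_\lambda \bfp_1/\norm{\bfp_1}$ lies in $S^1$, and optimality of $\bfx_1$ for $\bfp_1$ yields $\inp{\bfp_1}{\bfx_1 - \bfy_\lambda} \geq r_\lambda \norm{\bfp_1}$. Using $\bfx_1 - \bfy_\lambda = (1-\lambda)(\bfx_1 - \bfx_2)$, dividing through by $(1-\lambda)$, and letting $\lambda \to 1^-$ gives
\[
\inp{\bfp_1}{\bfx_1 - \bfx_2} \;\geq\; \tfrac{\mu}{2 M_1} \norm{\bfx_1 - \bfx_2}^2 \norm{\bfp_1}.
\]
A symmetric argument with $\bfy_{\lambda}$ reversed in roles produces $\inp{\bfp_2}{\bfx_2 - \bfx_1} \geq \tfrac{\mu}{2 M_1} \norm{\bfx_1 - \bfx_2}^2 \norm{\bfp_2}$. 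Summing and applying Cauchy-Schwarz yields $\norm{\bfx_1 - \bfx_2} \leq \frac{2 M_1 \norm{\bfp_1 - \bfp_2}}{\mu (\norm{\bfp_1} + \norm{\bfp_2})}$. Finally, for $\bfp \in \Delta_B$, the constraint $\sum_j p_j = \norm{\bfB}_1$ with $\bfp \geq 0$ forces $\norm{\bfp}_2 \geq \norm{\bfp}_1/\sqrt{m} = \norm{\bfB}_1/\sqrt{m}$, so $\norm{\bfp_1} + \norm{\bfp_2} \geq 2\norm{\bfB}_1/\sqrt{m}$, yielding the claimed $\frac{\sqrt{m} M_1}{\mu \norm{\bfB}_1}$-Lipschitz bound.

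The main obstacle is the geometric inflation step: strong quasi-convexity alone does not imply convexity of $\phi$, so the standard argument (that a sublevel gap translates into a Euclidean-ball radius inside the sublevel set) must be carried out via the gradient bound on $S^1$ together with a continuation/mean-value argument along segments rather than by a one-shot strong convexity inequality. A secondary subtlety is the $\lambda \to 1^-$ limit, which tightens the naive $\lambda = \tfrac{1}{2}$ analysis by exactly the factor of $2$ needed to match the stated constant; this limit is justified by continuity of both sides in $\lambda$. Once these are handled, the remainder reduces to a clean Cauchy-Schwarz calculation and the elementary norm inequality on the simplex.
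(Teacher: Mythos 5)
Your proof is correct, and it takes a genuinely different (more self-contained) route than the paper. The paper's proof is essentially a two-citation argument: it invokes Vial's result to conclude that the sublevel set $S^1$ is strongly convex as a set, then invokes Goncharov--Ivanov's theorem on smoothness of support functions of strongly convex sets to get a Lipschitz gradient bound on the unit sphere, and finally extends to $\Delta_B$ via $0$-homogeneity of $\nabla d^\circ$ and a triangle-inequality/normalization step. You instead re-derive both ingredients from scratch in this special case: your exit-time/mean-value argument along segments (using the gradient bound $M_1$ on $S^1$ and the quasi-convexity gap at $\bfy_\lambda$) is exactly a hands-on proof of the ball-inclusion (set strong convexity) property, and your subsequent optimality-plus-Cauchy--Schwarz computation, with the $\lambda \to 1^-$ sharpening and the symmetric inequality at $\bfp_2$, reproves the support-function smoothness estimate directly at the prices $\bfp_1,\bfp_2 \in \Delta_B$, recovering the same constant $\tfrac{\sqrt{m} M_1}{\mu \norm{\bfB}_1}$ without the homogeneity detour. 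What your approach buys is transparency and independence from the two external results; what the paper's approach buys is brevity and a statement (via the cited theorems) that holds at the level of general strongly convex sets. Two small points you leave implicit, both also implicit in the paper's treatment: differentiability of $d(|\cdot|)^a$ along the segments in $S^1$ (needed for your mean-value step, and tacitly assumed by the hypothesis $\norm{\nabla (d(|\bfx|))^a} \leq M_1$ on $S^1$), and compactness of $S^1$ (which follows from continuity of $d$ and the fact that strong quasi-convexity of a nonnegative function forces the unit sublevel set to have diameter at most $\sqrt{8/\mu}$), which justifies attainment of the maximum and the finite exit time; also note that the definition of strong quasi-convexity you use is the standard one with the $\lambda(1-\lambda)$ factor, which is what the paper intends (its displayed definition omits that factor, evidently a typo), and your argument is valid under either reading.
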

\begin{proof}
    Observe that 
    \begin{equation*}
        S^1 = \{ \bfx: d(|\bfx|) \leq 1 \} = \{ \bfx: d(|\bfx|)^a \leq 1 \}. 
    \end{equation*}
    By~\cite[Corollary 1]{vial1982strong}, we have 
    the sublevel set $S^1$ is $\frac{M_1}{2\mu}$-strongly convex. 
    By~\cite[Theorem 2.1 (i)]{goncharov2017strong}, 
    the gauge dual $\max_{\bfx \in S^1} \inp*{\bfp}{\bfx}$ is differentiable when $\norm{\bfp} \leq 1$ and 
    \begin{equation*}
        \lVert \nabla d^\circ(\bfp_1) - \nabla d^\circ(\bfp_2) \rVert \leq \frac{M_1}{2\mu} \lVert \bfp_1 - \bfp_2 \rVert 
    \end{equation*}
    when $\norm{\bfp_1} = \norm{\bfp_2} = 1$. 
    Note that 
    $d^\circ(\bfp)$ is $1$-homogeneous w.r.t. $\bfp$, thereby 
    $\nabla d^\circ(\bfp)$ is $0$-homogeneous\footnote{A vector-valued or matrix-valued function $\mathbf{f}(\bfx)$ is said to be $k$-homogeneous for some integer $k$ if $\mathbf{f}(a \bfx) = a^k \mathbf{f}(\bfx)$ for any $a > 0$.}. 
    Hence, for any two $\bfp_1, \bfp_2 \in \Delta_B$, we have 
    \begin{align*}
        \norm{ \nabla d^\circ(\bfp_1) - \nabla d^\circ(\bfp_2) } 
        &= \norm*{ \nabla d^\circ\left(\frac{\bfp_1}{\norm{\bfp_1}}\right) - \nabla d^\circ\left(\frac{\bfp_2}{\norm{\bfp_2}}\right) } \\ 
        &\leq \frac{M_1}{2\mu} \norm*{ \frac{\bfp_1}{\norm{\bfp_1}} - \frac{\bfp_2}{\norm{\bfp_2}} } \\ 
        &\leq \frac{M_1}{2\mu} \norm*{ \frac{\bfp_1}{\norm{\bfp_1}} - \frac{\bfp_2}{\norm{\bfp_1}} } + \frac{M_1}{2\mu} \norm*{ \frac{\bfp_2}{\norm{\bfp_1}} - \frac{\bfp_2}{\norm{\bfp_2}} } \\ 
        &= \frac{M_1}{2\mu} \frac{\norm*{ \bfp_1 - \bfp_2 }}{\norm{\bfp_1}} + \frac{M_1}{2\mu} \frac{\left\lvert \norm{\bfp_1} - \norm{\bfp_2} \right\rvert}{\norm*{\bfp_1}} \\ 
        &\leq \frac{M_1}{\mu\norm{\bfp_1}} \norm*{ \bfp_1 - \bfp_2 } \leq \frac{\sqrt{m} M_1 }{\mu \norm{\bfB}_1} \norm*{ \bfp_1 - \bfp_2 }, 
    \end{align*}
    where the last inequality holds because $\norm*{\bfp_1} \geq \frac{1}{\sqrt{m}} \norm*{\bfp_1}_1 = \frac{1}{\sqrt{m}} \norm*{\bfB}_1$. 
\end{proof}

Next, we show the unit sublevel set of a CES disutility function is bounded.
\begin{lemma}\label{lem:poly-time-rho-le-2}
    Let $d_i(\bfx_i)$ be a CES disutility function with $\rho \in (1, \infty)$.  
    Then, we have $R_i:= \sup\{ \norm*{\bfx_i}_\infty \mid d_i(|\bfx_i|) \leq 1 \} \leq \max_{j \in [m]} d_{ij}^{-{1}/{\rho}} = (\min_{j \in [m]} d_{ij})^{-1/\rho}$. 
    \label{lem:boundedness-x-sublevel}
\end{lemma}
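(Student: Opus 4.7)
The plan is to unpack the definition of the CES disutility function and argue coordinatewise. Write out $d_i(|\bfx_i|) = \bigl( \sum_{j=1}^m d_{ij} |x_{ij}|^{\rho} \bigr)^{1/\rho}$, so that the sublevel set condition $d_i(|\bfx_i|) \le 1$ is equivalent to
\begin{equation*}
    \sum_{j=1}^m d_{ij} |x_{ij}|^{\rho} \le 1.
\end{equation*}

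Since every summand is nonnegative (using $d_{ij} > 0$ and $\rho \ge 1$), for each coordinate $j \in [m]$ individually one has $d_{ij} |x_{ij}|^{\rho} \le 1$. Solving this inequality gives $|x_{ij}| \le d_{ij}^{-1/\rho}$, and hence $\norm{\bfx_i}_\infty = \max_{j \in [m]} |x_{ij}| \le \max_{j \in [m]} d_{ij}^{-1/\rho}$. Taking the supremum over all $\bfx_i$ with $d_i(|\bfx_i|) \le 1$ yields $R_i \le \max_{j \in [m]} d_{ij}^{-1/\rho}$.

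Finally, the identity $\max_{j \in [m]} d_{ij}^{-1/\rho} = (\min_{j \in [m]} d_{ij})^{-1/\rho}$ is immediate from the monotone decreasing nature of $t \mapsto t^{-1/\rho}$ on $(0, \infty)$ (which holds because $\rho > 0$ makes $-1/\rho < 0$). Since this proof is essentially a one-line algebraic unpacking, there is no real obstacle; the only thing to be mindful of is keeping track of the $|\cdot|$ (i.e., the extension of $d_i$ from $\RR_+^m$ to $\RR^m$ as defined just above the lemma) so that the argument applies to all $\bfx_i \in \RR^m$, not just $\bfx_i \in \RR_+^m$.
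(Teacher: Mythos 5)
Your proof is correct and follows essentially the same route as the paper's: both arguments relax the sublevel-set constraint $\sum_j d_{ij}|x_{ij}|^\rho \le 1$ to the coordinatewise constraints $d_{ij}|x_{ij}|^\rho \le 1$ and then read off the bound $|x_{ij}| \le d_{ij}^{-1/\rho}$, with the final identity following from the monotonicity of $t \mapsto t^{-1/\rho}$.
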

\begin{proof}
    \begin{align*}
        R_i &= \max\left\{ \max_{j \in [m]} | x_{ij} | \left\vert (\sum\nolimits_{j=1}^m d_{ij} \left| x_{ij} \right|^{\rho})^{\frac{1}{\rho}} \leq 1 \right. \right\} \\ 
        &\leq \max\left\{ \max_{j \in [m]} | x_{ij} | \mid d_{ij} \left| x_{ij} \right|^{\rho} \leq 1, \; \forall\, j \in [m] \right\} \\ 
        &= \max_{j \in [m]} d_{ij}^{-{1}/{\rho}} = (\min_{j \in [m]} d_{ij})^{-1/\rho}.
        \qedhere
    \end{align*} 
\end{proof}

It was shown in \cite{juditsky2008large} that the squared $p$-norm with $p \in (1, 2]$ is strongly convex. 
Analogously, we prove the squared CES (i.e., squared \emph{weighted} $p$-norm) disutility function with $\rho \in (1, 2]$ is strongly convex.
\begin{lemma}
  \label{lem:squared_rho=2}
    The squared CES disutility function $( d_i(\bfx_i) )^2 = (\sum_{j=1}^m d_{ij} x_{ij}^\rho)^\frac{2}{\rho}$ with $\rho \in (1, 2]$ is $2(\rho - 1) (\min_{j} d_{ij})^{\frac{2}{\rho}}$-strongly convex in $C_i := \{ \bfx_i: \norm*{\bfx_i}_\infty \leq R_i \} \supset S^1_i$. 
    Furthermore, $\norm*{\nabla ( d_i(\bfx_i) )^2} \leq 2(\sum_{j=1}^m d_{ij}^{\frac{2}{\rho}})^{\frac{1}{2}}$ for all $\bfx_i \in S_i^1 = \{ \bfx_i: d_i(|\bfx_i|) \leq 1 \}$.
    \label{lem:squared-ces-1-2-disutility-properties}
\end{lemma}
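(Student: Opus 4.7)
The plan is to prove the two claims separately. For strong convexity, the route is to reduce to the classical $2$-uniform convexity of the squared $\ell_\rho$ norm for $\rho \in (1,2]$ (Ball--Carlen--Lieb, or equivalently Juditsky--Nemirovski); for the gradient bound, a direct computation suffices, using the elementary termwise inequality $d_{ij}x_{ij}^\rho \leq 1$ available on $S_i^1$.

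For strong convexity, observe that $(d_i(\bfx_i))^2 = \lVert D_i\bfx_i\rVert_\rho^2$ where $D_i = \mathrm{diag}(d_{i1}^{1/\rho},\ldots,d_{im}^{1/\rho})$. The classical $2$-uniform convexity result asserts that $\bfy \mapsto \lVert \bfy\rVert_\rho^2$ is $2(\rho-1)$-strongly convex on $\RR^m$ with respect to $\lVert\cdot\rVert_\rho$, i.e., for any $\bfy_1,\bfy_2$ and $\lambda \in [0,1]$,
\[
\lVert \lambda \bfy_1 + (1-\lambda)\bfy_2\rVert_\rho^2 \;\le\; \lambda \lVert \bfy_1\rVert_\rho^2 + (1-\lambda)\lVert \bfy_2\rVert_\rho^2 - (\rho-1)\lambda(1-\lambda)\lVert \bfy_1-\bfy_2\rVert_\rho^2.
\]
Substituting $\bfy_k = D_i\bfx_k$ transfers this to $(d_i)^2$ with error term $\lVert D_i(\bfx_1-\bfx_2)\rVert_\rho^2$. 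To convert that error term to $\lVert\bfx_1-\bfx_2\rVert_2^2$ I chain two elementary inequalities: since $\rho \le 2$, $\lVert D_i\mathbf{v}\rVert_\rho \geq \lVert D_i\mathbf{v}\rVert_2$ holds for every $\mathbf{v}$; and $\lVert D_i\mathbf{v}\rVert_2^2 = \sum_j d_{ij}^{2/\rho} v_j^2 \ge (\min_j d_{ij})^{2/\rho}\lVert \mathbf{v}\rVert_2^2$. Combining yields the claimed modulus $2(\rho-1)(\min_j d_{ij})^{2/\rho}$. The resulting inequality is global, so a fortiori holds on $C_i$.

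For the gradient bound, set $s := \sum_k d_{ik}x_{ik}^\rho = (d_i(\bfx_i))^\rho$. A direct differentiation gives
\[
\nabla(d_i(\bfx_i))^2 \;=\; 2\,s^{(2-\rho)/\rho}\,\bigl(d_{ij}x_{ij}^{\rho-1}\bigr)_{j\in[m]},\qquad \lVert\nabla(d_i)^2\rVert_2^2 \;=\; 4\,s^{2(2-\rho)/\rho}\sum_{j=1}^m d_{ij}^2 x_{ij}^{2(\rho-1)}.
\]
On $S_i^1$, $s \le 1$ and $2(2-\rho)/\rho \ge 0$, so the prefactor is at most $1$. For each summand, the termwise bound $d_{ij}x_{ij}^\rho \le s \le 1$ yields $x_{ij}^{2(\rho-1)} = (x_{ij}^\rho)^{2(\rho-1)/\rho} \le d_{ij}^{-2(\rho-1)/\rho}$ (legal since $2(\rho-1)/\rho \ge 0$), and hence $d_{ij}^2 x_{ij}^{2(\rho-1)} \le d_{ij}^{2-2(\rho-1)/\rho} = d_{ij}^{2/\rho}$. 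Summing delivers the target $\lVert \nabla(d_i)^2\rVert_2 \le 2\bigl(\sum_j d_{ij}^{2/\rho}\bigr)^{1/2}$.

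The main technical obstacle is the functional-analytic ingredient: $2$-uniform convexity of $\lVert\cdot\rVert_\rho^2$ for $\rho \in (1,2]$ is nontrivial but standard, so I would simply cite Ball--Carlen--Lieb (or Juditsky--Nemirovski) rather than reprove it. A minor subtlety is that $(d_i)^2$ fails to be $C^1$ on the coordinate hyperplanes $\{x_{ij}=0\}$ when $\rho < 2$; however, strong convexity is a chord inequality requiring no differentiability, and the gradient bound holds at every point of differentiability in the interior of $S_i^1$ and extends to the Clarke subgradient by its closed-convex-hull definition, so this issue is harmless.
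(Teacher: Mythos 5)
Your proof is correct; the gradient-norm half is essentially the same computation as the paper's. The strong-convexity half, however, takes a genuinely different route. The paper verifies the second-order criterion directly: it computes the Hessian of $(d_i)^2$, uses its $0$-homogeneity to reduce to the level set $\sum_j d_{ij}x_{ij}^\rho=1$, drops the rank-one term (whose sign is governed by $2-\rho\ge 0$), and lower-bounds the remaining diagonal term via the level-set bound $x_{ij}\le d_{ij}^{-1/\rho}$. You instead invoke the classical $2$-uniform convexity of $\lVert\cdot\rVert_\rho^2$ for $\rho\in(1,2]$ (Ball--Carlen--Lieb / Juditsky--Nemirovski --- precisely the unweighted fact the paper cites as motivation just before the lemma), transfer it to the weighted CES case by the diagonal substitution $\bfy=D_i\bfx$ with $D_i=\mathrm{diag}(d_{ij}^{1/\rho})$, and then convert the $\lVert\cdot\rVert_\rho$ error term to $\lVert\cdot\rVert_2$ using $\rho\le 2$ and $\min_j d_{ij}$; this correctly reproduces the modulus $2(\rho-1)(\min_j d_{ij})^{2/\rho}$. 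What your route buys: a chord inequality valid on all of $\RR^m$ (hence a fortiori on $C_i$) for the symmetrized function $d_i(\lvert\cdot\rvert)$, and it sidesteps the fact that $(d_i)^2$ fails to be $C^2$ on the coordinate hyperplanes when $\rho<2$, a regularity point the paper's Hessian argument (stated on $\mathbb{R}^m_+$) glosses over. What the paper's route buys: a self-contained elementary proof that needs no external uniform-convexity theorem. Both yield the same constants, and your gradient bound on $S_i^1$ (prefactor $s^{(2-\rho)/\rho}\le 1$, termwise $d_{ij}x_{ij}^\rho\le 1$) mirrors the paper's line by line.
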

\begin{proof}
    Because $( d_i(\bfx_i) )^2$ is twice continuously differentiable in $\mathbb{R}^m_+$, by~\cite[Theorem 2.1.11]{nesterov2018lectures}, to show $\mu$-strong convexity, it suffices to prove that 
    \begin{equation}
        \inp*{\nabla^2 ( d_i(\bfx_i) )^2 \bfh}{\bfh} \geq \mu \norm*{\bfh}^2 \hspace{30pt} \forall\, \bfx_i \in \mathbb{R}^m_+ \text{ and } \bfh \in \mathbb{R}^m. 
        \label{eq:twice-differential-strong-convexity-equiv}
    \end{equation}
    Note that $(d_i(\bfx_i))^2$ is $2$-homogeneous thus $\nabla^2 (d_i(\bfx_i))^2$ is $0$-homogeneous. 
    Hence, it suffices to prove the relation in~\cref{eq:twice-differential-strong-convexity-equiv} for $\bfx_i \in \mathbb{R}^m_+$ normalized to $\sum_{j=1}^m d_{ij} x_{ij}^\rho = 1$. 
    By calculation, we have that $\nabla_j (d_i(\bfx_i))^2 = 2(\sum_{j'=1}^m d_{ij'} x_{ij'}^\rho)^{\frac{2}{\rho} - 1} d_{ij} x_{ij}^{\rho - 1} \;\forall\, j \in [m]$ 
    and  
    \begin{align*}
        \nabla^2_{jj'} (d_i(\bfx_i))^2 =& 2 d_{ij} x_{ij}^{\rho - 1} \left( \frac{2}{\rho} - 1 \right) (\sum_{\ell=1}^m d_{i\ell} x_{i\ell}^\rho)^{\frac{2}{\rho} - 2} \rho d_{ij'} x_{ij'}^{\rho - 1} \hspace{135pt} \forall\, j, j' \in [m] \\ 
        \nabla^2_{jj} (d_i(\bfx_i))^2 =& 2 \left( \frac{2}{\rho} - 1 \right) (\sum_{\ell=1}^m d_{i\ell} x_{i\ell}^\rho)^{\frac{2}{\rho} - 2} \rho d_{ij}^2 x_{ij}^{2(\rho - 1)} + 2 (\sum_{\ell=1}^m d_{i\ell} x_{i\ell}^\rho)^{\frac{2}{\rho} - 1} d_{ij} (\rho - 1) x_{ij}^{\rho - 2} \hspace{10pt} \forall\, j \in [m]. 
    \end{align*}
    Plugging in $\sum_{\ell=1}^m d_{i\ell} x_{i\ell}^\rho = 1$, we obtain that 
    \begin{align*}
        \inp*{\nabla^2 (d_i(\bfx_i))^2 \bfh}{\bfh} &= \sum_{j=1}^m\sum_{j'=1}^m 2\left(\frac{2}{\rho} - 1\right) \rho (d_{ij} x_{ij}^{\rho - 1} h_j) (d_{ij'} x_{ij'}^{\rho - 1} h_{j'}) + \sum_{j=1}^m 2(\rho - 1) d_{ij} x_{ij}^{\rho - 2} h_j^2 \\ 
        &= 2(2 - \rho) (\sum_{j=1}^m d_{ij} x_{ij}^{\rho - 1} h_j)^2 + 2(\rho - 1) \sum_{j=1}^m d_{ij} x_{ij}^{\rho - 2} h_j^2 \\ 
        &\geq 2(\rho - 1) (\min_j d_{ij}) \left((\min_{j} d_{ij})^{-\frac{1}{\rho}}\right)^{\rho - 2} \norm{\bfh}^2 = 2(\rho - 1) (\min_{j} d_{ij})^{\frac{2}{\rho}}\norm{\bfh}^2,  
    \end{align*}
    where the last inequality follows by the non-negativity of the first term in the second line, $\rho \in (1, 2]$ and~\cref{lem:boundedness-x-sublevel}. 
    Furthermore, for all $\bfx_i \in S_i^1$, we have 
    \begin{equation*}
        \left| \nabla_j (d_i(\bfx_i))^2 \right|
        = \left| 2\left(\sum_{j'=1}^m d_{ij'} x_{ij'}^\rho\right)^{\frac{2}{\rho} - 1} d_{ij} x_{ij}^{\rho - 1} \right| \leq 2 d_{ij} \left( d_{ij}^{-\frac{1}{\rho}} \right)^{\rho - 1} = 2 d_{ij}^{\frac{1}{\rho}},
    \end{equation*}
    where we use $\sum_{j'=1}^m d_{ij'} x_{ij'}^\rho \leq 1$ and $x_{ij} \leq d_{ij}^{-\frac{1}{\rho}}$ for all $j \in [m]$ (implied by $\sum_{j'=1}^m d_{ij'} x_{ij'}^\rho \leq 1$) in the second inequality. 
    Hence, we have $\norm{\nabla (d_i(\bfx_i))^2} = \left(\sum_{j=1}^m \left| \nabla_j (d_i(\bfx_i))^2 \right|^2\right)^{\frac{1}{2}} \leq 2 \big( \sum_{j=1}^m d_{ij}^{\frac{2}{\rho}} \big)^{\frac{1}{2}}$.
\end{proof}

As a result, we can characterize the smoothness of the gauge dual if an agent has convex CES disutility with $\rho \in (1, 2]$. 
\begin{corollary}
  \label{coro:rho=2}
    Let $d_i(\bfx) = (\sum_{j=1}^m d_{ij} x_{ij}^\rho)^\frac{1}{\rho}$ with $\rho \in (1, 2]$. 
    For any $\bfp_1, \bfp_2 \in \Delta_B$, 
    it holds that 
    \begin{equation*}
        \norm*{\nabla d_i^\circ(\bfp_1) - \nabla d_i^\circ(\bfp_2)} \leq \frac{2 m(\max_j d_{ij})^{\frac{1}{\rho}}}{(\rho - 1) (\min_{j} d_{ij})^{\frac{2}{\rho}} \norm{\bfB}_1} \norm{\bfp_1 - \bfp_2}. 
    \end{equation*}
    \label{crl:smooth-CES-rho-1-2}
\end{corollary}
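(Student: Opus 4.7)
The plan is to obtain \cref{crl:smooth-CES-rho-1-2} by a direct instantiation of \cref{thm:smooth-strongly-quasiconvex} with exponent choice $a = 2$, using \cref{lem:squared-ces-1-2-disutility-properties} to supply both of its hypotheses for the squared CES disutility $(d_i(\bfx_i))^2$. Since the heavy lifting (the geometric smoothness lemma and the quantitative strong convexity of the squared weighted $p$-norm) has already been done upstream, this will be essentially a one-step substitution followed by a routine tightening of constants.

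First I would verify the strong quasi-convexity hypothesis. \cref{lem:squared-ces-1-2-disutility-properties} establishes that $(d_i(\bfx_i))^2$ is $\mu$-strongly convex on the box $C_i = \{\bfx_i : \norm{\bfx_i}_\infty \leq R_i\}$ with modulus $\mu = 2(\rho-1)(\min_j d_{ij})^{2/\rho}$, and \cref{lem:poly-time-rho-le-2} gives $C_i \supseteq S^1_i$. Since $\mu$-strong convexity trivially implies $\mu$-strong quasi-convexity on the same region (the maximum dominates any convex combination of function values, so the strong convexity inequality immediately upgrades to the strong quasi-convexity one), the first hypothesis of \cref{thm:smooth-strongly-quasiconvex} is met on a region containing $S^1_i$.

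Second, I would invoke the uniform gradient bound also supplied by \cref{lem:squared-ces-1-2-disutility-properties}, namely $\norm{\nabla (d_i(\bfx_i))^2} \leq 2\bigl(\sum_{j=1}^m d_{ij}^{2/\rho}\bigr)^{1/2}$ on $S^1_i$, to take $M_1 := 2\bigl(\sum_{j=1}^m d_{ij}^{2/\rho}\bigr)^{1/2}$. Plugging $\mu$ and $M_1$ into \cref{thm:smooth-strongly-quasiconvex} yields the intermediate bound
\[
\norm{\nabla d_i^\circ(\bfp_1) - \nabla d_i^\circ(\bfp_2)} \leq \frac{\sqrt{m}\,M_1}{\mu\,\norm{\bfB}_1}\,\norm{\bfp_1 - \bfp_2}.
\]

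Finally I would simplify using the crude estimate $\bigl(\sum_{j=1}^m d_{ij}^{2/\rho}\bigr)^{1/2} \leq \sqrt{m}\,(\max_j d_{ij})^{1/\rho}$; substituting this and clearing the factor of $2$ in the denominator of $\mu$ against the $2$ inside $M_1$ produces the stated Lipschitz constant $\tfrac{2m(\max_j d_{ij})^{1/\rho}}{(\rho-1)(\min_j d_{ij})^{2/\rho}\norm{\bfB}_1}$ after routine arithmetic. I do not anticipate a substantive obstacle: the argument is a pure instantiation of already-proved results, and the only detail worth flagging is the implication ``strong convexity on $C_i$'' $\Rightarrow$ ``strong quasi-convexity on a region containing $S^1_i$'', which reconciles the hypothesis format of \cref{thm:smooth-strongly-quasiconvex} with the conclusion of \cref{lem:squared-ces-1-2-disutility-properties}.
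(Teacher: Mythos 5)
Your proposal is correct and matches the paper's own proof: instantiate \cref{thm:smooth-strongly-quasiconvex} with $a=2$, $\mu = 2(\rho-1)(\min_j d_{ij})^{2/\rho}$ and $M_1 = 2\bigl(\sum_j d_{ij}^{2/\rho}\bigr)^{1/2}$ from \cref{lem:squared-ces-1-2-disutility-properties}, then bound $\bigl(\sum_j d_{ij}^{2/\rho}\bigr)^{1/2} \leq \sqrt{m}(\max_j d_{ij})^{1/\rho}$. The only cosmetic remark is that the cancellation of the two factors of $2$ actually yields the tighter constant $\frac{m(\max_j d_{ij})^{1/\rho}}{(\rho-1)(\min_j d_{ij})^{2/\rho}\norm{\bfB}_1}$ (as in the paper's display), which of course still implies the stated bound with the extra factor of $2$.
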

\begin{proof}
    By~\cref{lem:squared-ces-1-2-disutility-properties}, we have~\cref{eq:smoothness-for-a-class-of-disutility-functions} in~\cref{thm:smooth-strongly-quasiconvex} holds with $a=2$, $\mu = 2(\rho - 1) (\min_{j} d_{ij})^{\frac{2}{\rho}}$ and $M_1 = 2(\sum_{j=1}^m d_{ij}^{\frac{2}{\rho}})^{\frac{1}{2}}$, yielding 
    \begin{align*}
        \norm*{\nabla d_i^\circ(\bfp_1) - \nabla d_i^\circ(\bfp_2)} \leq \frac{2\sqrt{m}(\sum_{j=1}^m d_{ij}^{2/\rho})^{1/2}}{2(\rho - 1) (\min_{j} d_{ij})^{\frac{2}{\rho}} \norm*{\bfB}_1} \norm*{\bfp_1 - \bfp_2} \leq \frac{m(\max_j d_{ij})^{\frac{1}{\rho}}}{(\rho - 1) (\min_{j} d_{ij})^{\frac{2}{\rho}} \norm{\bfB}_1} \norm{\bfp_1 - \bfp_2}.
    \end{align*}
\end{proof}

\begin{remark}
    We emphasize that 
    our result in~\cref{thm:smooth-strongly-quasiconvex} does capture a broader range of CCH disutility functions than the convex CES functions with $\rho \in (1, 2]$. 
    For example, 
    it covers another interesting class of CCH functions $d_i(\bfx_i) = \norm{A \bfx_i}$ where $A$ is a full rank matrix.
    In this case, an agent's disutility is given by the $\ell_2$ norm of a load vector, where each component of this vector is a linear combination of the chores assigned to the agent.
\end{remark}

\subsubsection{Local smoothness for convex CES function with $\rho \in (2, \infty)$}
\label{subsubsec:local-smoothness}

For the convex CES function with $\rho \in (2, \infty)$, however, even the squared $p$-norm function is not strongly quasi-convex. 
That is, we cannot derive the smoothness of the gauge dual of the disutility functions with $\rho \in (2, \infty)$ by leveraging~\cref{thm:smooth-strongly-quasiconvex}. 
This aligns with our observation in~\cref{fig:strongly-convex-set-and-smoothness}, as the contour of the unit sublevel set tends to be very flat when one of the prices is close to the zero.

In this case, we first establish a ``local'' smoothness in a region where all the prices are greater than a strictly positive constant.
Even though, the initial point can lie in a nonsmooth region, where it is difficult to guarantee non-asymptotic convergence.
To handle this, we leverage the ``discrete-time threshold'' in~\cref{lem:discrete-time-relative-tatonnement-properties} and show that one can employ a small but non-diminishing stepsize to ``push'' the prices in the nonsmooth region into the local smooth region; after that, we can ensure the iterates are controlled within the smooth region and then find an approximate CE in a polynomial time.

Denote 
\begin{equation*}
    \sigma = \frac{\rho}{\rho - 1} \in (1, 2) \hspace{6pt} \textnormal{ when } \rho \in (2, \infty). 
\end{equation*}

We first show the follow lemma for the sake of establishing the local smoothness.
\begin{lemma}
  \label{lem:smooth-region}
    Let $d_i(\bfx_i) = (\sum_j d_{ij} |x_{ij}|^\rho)^{{1}/{\rho}}$ with $\rho > 2$. 
    For any $\bfp_1, \bfp_2 \in \Delta_\bfB$ and $\bfp_1, \bfp_2 \geq r > 0$, 
    it holds that 
    \begin{equation*}
        \norm*{\nabla d_i^\circ(\bfp_1) - \nabla d_i^\circ(\bfp_2)} \leq L(r) \norm*{\bfp_1 - \bfp_2}, 
    \end{equation*}
    where $L(r) := \frac{\sigma - 1}{\norm*{\bfB}_1^{\sigma - 1}}(\max_j d_{ij})^{\frac{(1-\sigma)^2}{\sigma}} (\min_j d_{ij})^{1-\sigma} \frac{1}{r} + \frac{\sigma - 1}{\norm*{\bfB}_1} (\max_j d_{ij})^{\frac{(1-\sigma)(1-2\sigma)}{\sigma}} (\min_j d_{ij})^{2(1-\sigma)}$. 
    \label{lem:smooth-CES-rho-2-infty}
\end{lemma}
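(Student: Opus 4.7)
The plan is to reduce $d_i^\circ$ to a closed form via H\"older duality and then bound $\nabla d_i^\circ(\bfp_1)-\nabla d_i^\circ(\bfp_2)$ by a two-part telescoping decomposition, each piece of which stays uniformly controlled on $\Delta_B$ once we enforce $\bfp \geq r$.

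First I would derive a closed form. The substitution $y_j = d_{ij}^{1/\rho}x_j$ turns the constraint $d_i(\bfx_i)\leq 1$ into $\norm{\bfy}_\rho \leq 1$, so the maximization in the definition of $d_i^\circ(\bfp)$ becomes the definition of the dual norm $\norm{\cdot}_\sigma$ evaluated at $q_j = p_j d_{ij}^{-1/\rho}$ (using $1/\rho + 1/\sigma = 1$). Expanding yields $d_i^\circ(\bfp) = \bigl(\sum_j d_{ij}^{1-\sigma}p_j^\sigma\bigr)^{1/\sigma}$, which is smooth on $\mathbb{R}^m_{++}$ with
\begin{equation*}
    \nabla_k d_i^\circ(\bfp) = d_{ik}^{1-\sigma}\,(d_i^\circ(\bfp))^{1-\sigma}\, p_k^{\sigma-1}.
\end{equation*}

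Next I would telescope, for $\bfp_1,\bfp_2\in \Delta_B$ with $\bfp_1,\bfp_2 \geq r$,
\begin{equation*}
    \nabla_k d_i^\circ(\bfp_1) - \nabla_k d_i^\circ(\bfp_2) = d_{ik}^{1-\sigma}\bigl(d_i^\circ(\bfp_1)^{1-\sigma}-d_i^\circ(\bfp_2)^{1-\sigma}\bigr)p_{1,k}^{\sigma-1} + d_{ik}^{1-\sigma}d_i^\circ(\bfp_2)^{1-\sigma}\bigl(p_{1,k}^{\sigma-1}-p_{2,k}^{\sigma-1}\bigr).
\end{equation*}
For the second (coordinate) piece, the mean value theorem applied to $t\mapsto t^{\sigma-1}$ combined with $p_{j,k}\geq r$ gives $|p_{1,k}^{\sigma-1}-p_{2,k}^{\sigma-1}| \leq (\sigma-1) r^{\sigma-2}|p_{1,k}-p_{2,k}|$, which I then absorb into the $1/r$ factor of $L(r)$. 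For the first (scale) piece I would apply another mean value theorem to $s\mapsto s^{1-\sigma}$, using (a) a uniform lower bound $d_i^\circ(\bfp)\geq (\max_j d_{ij})^{(1-\sigma)/\sigma}\norm{\bfB}_1/m$ on $\Delta_B$ obtained by lower-bounding the single term in $(d_i^\circ)^\sigma$ corresponding to $j^\star = \arg\max_j p_j$ and using $p_{j^\star}\geq \norm{\bfB}_1/m$, and (b) Lipschitzness of $d_i^\circ$ on $\Delta_B$, which holds since $d_i^\circ$ is the support function of $\{\bfx:d_i(\bfx)\leq 1\}$, whose diameter is bounded via~\cref{lem:boundedness-x-sublevel}. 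Combining these with the trivial signed-exponent bounds $d_{ik}^{1-\sigma}\leq(\min_j d_{ij})^{1-\sigma}$ and $p_{j,k}^{\sigma-1}\leq \norm{\bfB}_1^{\sigma-1}$ and then summing in the $\ell_2$ norm over $k$ yields $L(r)\norm{\bfp_1-\bfp_2}$, with the two summands of $L(r)$ coming directly from the two telescoping pieces.

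The main obstacle is careful sign bookkeeping. Because $\sigma-1 \in (0,1)$, every exponent in the gradient formula is signed, and for $\sigma \in (1,2)$ the squared $\ell_\rho$ disutility is genuinely not strongly quasi-convex, so~\cref{thm:smooth-strongly-quasiconvex} does not apply and the Hessian of $d_i^\circ$ actually blows up as any $p_k\to 0$. The trick is to localize the blow-up entirely inside the single factor $(p_{1,k}^{\sigma-1}-p_{2,k}^{\sigma-1})$ so that only a $1/r$ factor appears, while using the simplex constraint $\sum_j p_j = \norm{\bfB}_1$ to keep every remaining quantity uniformly bounded on $\Delta_B$. The correct choice of which extreme of $d_{ij}$ to use at each signed exponent is precisely what produces the distinct $(\max_j d_{ij})^{(1-\sigma)^2/\sigma}(\min_j d_{ij})^{1-\sigma}$ and $(\max_j d_{ij})^{(1-\sigma)(1-2\sigma)/\sigma}(\min_j d_{ij})^{2(1-\sigma)}$ factors that appear in $L(r)$.
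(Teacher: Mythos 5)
Your proposal follows essentially the same route as the paper's proof: derive the closed form $d_i^\circ(\bfp)=\big(\sum_j d_{ij}^{1-\sigma}p_j^\sigma\big)^{1/\sigma}$, split $\nabla d_i^\circ(\bfp_1)-\nabla d_i^\circ(\bfp_2)$ into the same two telescoping pieces, apply the mean value theorem to $t\mapsto t^{\sigma-1}$ so that the lower bound $p_j\ge r$ is needed only for the factor $r^{\sigma-2}\le 1/r$, and control everything else uniformly on $\Delta_B$; your H\"older-duality derivation of the closed form and your treatment of the scale term (Lipschitzness of the support function via \cref{lem:boundedness-x-sublevel} plus MVT on $s\mapsto s^{1-\sigma}$) are cosmetic variants of the paper's computation. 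The one substantive discrepancy is the constant: your lower bound $d_i^\circ(\bfp)\ge(\max_j d_{ij})^{(1-\sigma)/\sigma}\norm{\bfB}_1/m$ (via $\max_j p_j\ge\norm{\bfB}_1/m$), together with coordinatewise bounds such as $p_k^{\sigma-1}\le\norm{\bfB}_1^{\sigma-1}$ summed in $\ell_2$, yields an $L(r)$ inflated by polynomial factors of $m$ relative to the literal constant in the statement, so as written you prove the lemma only up to $\mathrm{poly}(m)$ in $L(r)$. Interestingly, the paper reaches the stated constant only through \cref{eq:d-circ-lower-bound}, which asserts $d_i^\circ(\bfp)\ge(\max_j d_{ij})^{(1-\sigma)/\sigma}\norm{\bfB}_1$ on all of $\Delta_B$ by evaluating at a vertex of the simplex; since the weighted $\sigma$-norm with $\sigma\in(1,2)$ attains its simplex maximum at a vertex and its minimum in the interior, that inequality fails in general (for equal $d_{ij}$ it would claim $\norm{\bfp}_\sigma\ge\norm{\bfp}_1$), so your $1/m$-corrected bound is in fact the right one. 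The discrepancy is immaterial downstream, since the $\tilde{\mathcal{O}}(1/\varepsilon^2)$ rate hides such factors, but to match the lemma exactly as stated you would either have to restate $L(r)$ with the explicit $m$-dependence or flag the correction to \cref{eq:d-circ-lower-bound}.
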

\begin{proof}
    For any $\bfp \geq 0$, the $\max$ in $d_i^\circ(\bfp)$ can be achieved at some $\bfx_i \in \mathbb{R}^m_+$, 
    thereby 
    \begin{equation}
        d_i^\circ(\bfp) 
        = \underset{\bfx_i: (\sum_j d_{ij} |x_{ij}|^\rho)^{{1}/{\rho}} \leq 1}{\max} \inp{\bfp}{\bfx_i} 
        = \underset{\bfx_i \geq 0: \sum_j d_{ij} x_{ij}^\rho \leq 1}{\max} \inp{\bfp}{\bfx} \geq 0. 
    \end{equation}
    By the first-order optimality, we can obtain a closed-form expression of $d_i^\circ(\bfp) = (\sum_{j=1}^m d_{ij}^{1-\sigma} p_j^\sigma)^{1/\sigma}$. 
    Denote $\bfd_i = (d_{i1}, \ldots, d_{im})$. 
    We can then derive that (the power is taken element-wise) 
    \begin{align}
        & \norm*{\nabla d^\circ_i(\bfp_1) - \nabla d^\circ_i(\bfp_2)} \nonumber \\ 
        =& \norm*{d_i^\circ(\bfp_1)^{1 - \sigma} \bfd_i^{1-\sigma} \odot \bfp_1^{\sigma - 1} - d_i^\circ(\bfp_2)^{1 - \sigma} \bfd_i^{1-\sigma} \odot \bfp_2^{\sigma - 1}} \nonumber \\ 
        \leq& \norm*{d_i^\circ(\bfp_1)^{1 - \sigma} \bfd_i^{1-\sigma} \odot \bfp_1^{\sigma - 1} - d_i^\circ(\bfp_1)^{1 - \sigma} \bfd_i^{1-\sigma} \odot \bfp_2^{\sigma - 1}} \nonumber \\ 
        & \hspace{60pt} + \norm*{d_i^\circ(\bfp_1)^{1 - \sigma} \bfd_i^{1-\sigma} \odot \bfp_2^{\sigma - 1} - d_i^\circ(\bfp_2)^{1 - \sigma} \bfd_i^{1-\sigma} \odot \bfp_2^{\sigma - 1}} \nonumber \\ 
        \leq& d_i^\circ(\bfp_1)^{1 - \sigma} (\min_j d_{ij})^{1-\sigma} \norm*{\bfp_1^{\sigma - 1} - \bfp_2^{\sigma - 1}} + \norm*{\bfd_i^{1-\sigma} \odot \bfp_2^{\sigma - 1}} \left| d_i^\circ(\bfp_1)^{1 - \sigma} - d_i^\circ(\bfp_2)^{1 - \sigma} \right|. 
        \label{eq:total-upper-bound}
    \end{align}
    For any $\bfp \in \Delta_B$, we have 
    \begin{equation}
        d_i^\circ(\bfp) = \left( \sum_j d_{ij}^{1-\sigma} p_j^\sigma \right)^{\frac{1}{\sigma}} \geq \left(\max_j d_{ij}\right)^{\frac{1-\sigma}{\sigma}} \norm*{\bfB}_1, 
        \label{eq:d-circ-lower-bound}
    \end{equation}
    since the lower bound is achieved when $p_j = \norm{\bfB}_1$ if $j$ maximizes $d_{ij}$ and $= 0$ otherwise. 
    By a similar argument, we have 
    \begin{equation}
        \norm*{\bfd_i^{1-\sigma} \odot \bfp^{\sigma - 1}} \leq (\min_j d_{ij})^{1-\sigma} \norm{\bfB}_1^{\sigma - 1}. 
        \label{eq:norm-d-q-lower-bound}
    \end{equation}
    
    By the Mean Value Theorem, there exists a $\bfq \in \Delta_B$ on the line segment between $\bfp_1$ and $\bfp_2$ such that 
    \begin{equation*}
        \left| d_i^\circ(\bfp_1)^{1 - \sigma} - d_i^\circ(\bfp_2)^{1 - \sigma} \right| \leq \norm*{(1-\sigma)\left(\sum_j d_{ij}^{1-\sigma} q_j^\sigma\right)^{\frac{1-2\sigma}{\sigma}} \bfd_i^{1-\sigma} \bfq^{\sigma - 1}} \norm*{\bfp_1 - \bfp_2}. 
    \end{equation*}
    The right hand side can be further bounded by 
    \begin{align}
        & \norm*{(1-\sigma)\Big(\sum_j d_{ij}^{1-\sigma} q_j^\sigma\Big)^{\frac{1-2\sigma}{\sigma}} \bfd_i^{1-\sigma} \bfq^{\sigma - 1}} \norm*{\bfp_1 - \bfp_2} \nonumber \\ 
        \leq& (\sigma - 1) (\max_j d_{ij})^{\frac{(1-\sigma)(1-2\sigma)}{\sigma}} \norm{\bfB}_1^{1-2\sigma} (\min_j d_{ij})^{1-\sigma} \norm{\bfB}_1^{\sigma - 1} \norm*{\bfp_1 - \bfp_2} \tag{by~\cref{eq:d-circ-lower-bound,eq:norm-d-q-lower-bound}} \\ 
        =& (\sigma - 1) (\max_j d_{ij})^{\frac{(1-\sigma)(1-2\sigma)}{\sigma}} (\min_j d_{ij})^{1-\sigma} \norm{\bfB}_1^{-\sigma} \norm*{\bfp_1 - \bfp_2}. 
        \label{eq:difference-d-circ-p1msigma-upper-bound}
    \end{align}
    Combining~\cref{eq:difference-d-circ-p1msigma-upper-bound,eq:norm-d-q-lower-bound}, we have 
    \begin{equation}
        \norm*{\bfd_i^{1-\sigma} \odot \bfp_2^{\sigma - 1}} \left| d_i^\circ(\bfp_1)^{1 - \sigma} - d_i^\circ(\bfp_2)^{1 - \sigma} \right| \leq (\sigma - 1) (\max_j d_{ij})^{\frac{(1-\sigma)(1-2\sigma)}{\sigma}} (\min_j d_{ij})^{2(1-\sigma)} \norm{\bfB}_1^{-1} \norm*{\bfp_1 - \bfp_2}. 
        \label{eq:term-2-upper-bound}
    \end{equation}

    By~\cref{eq:d-circ-lower-bound}, we have 
    \begin{equation}
        d_i^\circ(\bfp)^{1 - \sigma} \leq (\max_j d_{ij})^{\frac{(1-\sigma)^2}{\sigma}} \norm*{\bfB}_1^{1 - \sigma}. 
        \label{eq:d-circ-p1msigma-upper-bound}
    \end{equation}
    Moreover, by the Mean Value Theorem, we have 
    \begin{align}
        \norm*{\bfp_1^{\sigma - 1} - \bfp_2^{\sigma - 1}} = \sqrt{\sum_j \left| (\bfp_1)_j^{\sigma - 1} - (\bfp_2)_j^{\sigma - 1}\right|^2} &= \sqrt{\sum_j \left| (\sigma - 1) q_j^{\sigma - 2} \right|^2 \left| (\bfp_1)_j - (\bfp_2)_j \right|^2} \nonumber \\ 
        &\leq (\sigma - 1) r^{\sigma - 2} \norm*{\bfp_1 - \bfp_2}, 
        \label{eq:difference-p-psigmam1-upper-bound}
    \end{align}
    where $q_j \in [(\bfp_1)_j, (\bfp_2)_j]$ for all $j \in [m]$. 
    Combining~\cref{eq:d-circ-p1msigma-upper-bound,eq:difference-p-psigmam1-upper-bound}, we have 
    \begin{equation}
        d_i^\circ(\bfp_1)^{1 - \sigma} (\min_j d_{ij})^{1-\sigma} \norm*{\bfp_1^{\sigma - 1} - \bfp_2^{\sigma - 1}} 
        \leq (\sigma - 1) (\max_j d_{ij})^{\frac{(1-\sigma)^2}{\sigma}} (\min_j d_{ij})^{1-\sigma} \norm*{\bfB}_1^{1 - \sigma}  r^{\sigma - 2} \norm*{\bfp_1 - \bfp_2}. 
        \label{eq:term-1-upper-bound}
    \end{equation}

    Since $\sigma \in (1, 2)$ and $r \leq 1$, $r^{\sigma - 2}$ can be further upper bounded by $\frac{1}{r}$.
    Then, the lemma follows by combining~\cref{eq:total-upper-bound,eq:term-1-upper-bound,eq:term-2-upper-bound}. 
\end{proof}

We then define a proper local region as follows: 
\begin{equation*}
    P_{s} := \{ \bfp \in \Delta_B \mid p_j \geq \frac{\ell_0}{2} \textnormal{ for all } j \in [m] \}. 
\end{equation*}

By~\cref{lem:smooth-CES-rho-2-infty}, the gauge dual is $L(\frac{\ell_0}{2})$-smooth within $P_{s}$, where $L(r)$ is defined in~\cref{lem:smooth-CES-rho-2-infty}.

Leveraging the ``discrete-time'' threshold established in~\cref{lem:discrete-time-relative-tatonnement-properties}, we can make sure that the discrete-time t\^atonnement dynamics enter the ``local smooth region'' in a polynomial time and stay within it. 

\begin{lemma}
    \label{lem:non-smooth-to-smooth}
    Consider a chores Fisher market with CES disutility functions with $\rho \in (2, \infty)$.
    Let $\{ \bfp^k \}_{k = 0, 1, 2, \ldots}$ be a sequence of iterates generated by~(\ref{discrete-time-relative-chores-tatonnement}) with any initial point $\bfp^0 \in \Delta_B$ 
    and $\eta^k = \frac{\norm{\bfB}_1}{18 m^2} (\min_{i \in [n]} \nu_i )^2$ for all $k$, the iterates enter $P_{s}$ in $\lceil \frac{18m^3}{\min_{i \in [n]} \nu_i} \rceil$ iterations, where $\nu_i$ is defined in~\cref{eq:nui defn CES} for CES functions.
    Furthermore, 
    at any price $\bfp \in P_{s}$, the price iterates initialized from $\bfp$ stay within $P_{s}$ once $\eta^k \leq \frac{\norm{\bfB}_1}{18 m^2} (\min_{i \in [n]} \nu_i)^2$.
\end{lemma}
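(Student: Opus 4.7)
The plan is to deduce both claims as direct consequences of the three facts established inside the proof of \cref{lem:discrete-time-relative-tatonnement-properties}: whenever $\bfp^k \in \Delta_B$, (i) $\bfp^{k+1} \in H_B$, (ii) $p_j^k \leq \ell_0$ implies $p_j^{k+1} > p_j^k + \eta^k/(6m)$, and (iii) $p_j^k > \ell_0$ implies $p_j^{k+1} > \ell_0/2$. The first step is the algebraic check that the prescribed constant stepsize $\eta^k = \tfrac{\norm{\bfB}_1}{18m^2}(\min_{i\in[n]} \nu_i)^2$ equals exactly $\tfrac{\ell_0^2}{2\norm{\bfB}_1}$ under $\ell_0 = \tfrac{\norm{\bfB}_1}{3m}\min_{i\in[n]} \nu_i$, so the hypothesis of \cref{lem:discrete-time-relative-tatonnement-properties} is met, facts (i)--(iii) are available, and $\bfp^k \in \Delta_B$ for all $k \geq 0$.

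For the first claim, I argue coordinatewise. Fix $j \in [m]$ and let $T_j := \min\{k : p_j^k \geq \ell_0/2\}$. For every $k < T_j$ we have $p_j^k < \ell_0/2 < \ell_0$, so fact (ii) forces a strict increase of at least $\eta^k/(6m)$ per step. Iterating from $p_j^0 \geq 0$ yields $p_j^k \geq k\,\eta^k/(6m)$, and solving $k\,\eta^k/(6m) \geq \ell_0/2$ gives $T_j \leq \lceil 3m\ell_0/\eta^k \rceil$. Substituting the explicit expressions simplifies this to $O(m^2/\min_{i\in[n]} \nu_i)$, well within the stated bound $\lceil 18 m^3/\min_{i\in[n]} \nu_i \rceil$. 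Taking the maximum of $T_j$ over $j$ gives an iteration at which every coordinate simultaneously exceeds $\ell_0/2$, i.e., $\bfp^k \in P_s$.

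For the second claim, I prove by induction that $\bfp^k \in P_s$ implies $\bfp^{k+1} \in P_s$, provided $\eta^k \leq \tfrac{\norm{\bfB}_1}{18m^2}(\min_{i\in[n]} \nu_i)^2$ so that facts (i)--(iii) still apply. Fix $j$: if $p_j^k > \ell_0$, fact (iii) directly gives $p_j^{k+1} > \ell_0/2$; otherwise $\ell_0/2 \leq p_j^k \leq \ell_0$, and fact (ii) yields $p_j^{k+1} \geq p_j^k + \eta^k/(6m) > \ell_0/2$. Either way $p_j^{k+1} > \ell_0/2$, and combined with $\bfp^{k+1} \in H_B$ from fact (i) this places $\bfp^{k+1}$ in $P_s$.

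The entire argument rests on the discrete-time threshold already proved earlier; no new technical ingredient is needed. The only real calibration is making the constant stepsize simultaneously large enough to drive the prices into $P_s$ in polynomially many steps and small enough (at most $\ell_0^2/(2\norm{\bfB}_1)$) to keep facts (i)--(iii) valid. The prescribed value sits exactly at this upper bound and thus serves both purposes; if any subtlety arises, it is just in verifying that the strict barrier $\ell_0/2$ lies below $\ell_0$ so that fact (ii) keeps applying throughout the entering phase.
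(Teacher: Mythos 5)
Your proof is correct and follows essentially the same route as the paper: verify that the prescribed stepsize equals $\ell_0^2/(2\norm{\bfB}_1)$, invoke the three discrete-time threshold facts from Lemma~\ref{lem:discrete-time-relative-tatonnement-properties}, count the forced jumps of size $\eta^k/(6m)$ needed to cross $\ell_0/2$, and use the same facts for permanence in $P_s$. Your coordinatewise counting (taking the max over $j$ rather than summing over chores as the paper does) even yields the slightly sharper $O(m^2/\min_i \nu_i)$ bound, which sits comfortably inside the stated $\lceil 18m^3/\min_i\nu_i\rceil$; just note that the "take the maximum of $T_j$" step implicitly uses the per-coordinate permanence you establish in the second part, so it is worth stating that observation before concluding the first claim.
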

\begin{proof}
    Recall that $\ell_0 = \frac{\norm{\bfB}_1}{3m} \min_{i \in [n]} \nu_i$, where $\nu_i$ is defined in~\cref{eq:nui defn CES} for CES functions.
    By~\cref{lem:discrete-time-relative-tatonnement-properties}, for any $k \geq 0$ and any $j \in [m]$ such that $p^k_j \leq \ell_0$, we have we have $p^{k+1}_{j} > p^k_j + \frac{\ell_0^2}{2 \norm{\bfB}_1} \cdot \frac{1}{6m}$. 
    For each chore $j$, this ``jump'' can happen at most $\left\lceil \frac{\ell_0}{2} \Big/ \frac{\ell_0^2}{12 m \norm{\bfB}_1} \right\rceil = \left\lceil \frac{6 m \norm{\bfB}_1}{\ell_0}  \right\rceil$ times before the chore $j$ become greater than $\frac{\ell_0}{2}$.
    This upper bound can be simplified as $\frac{6 m \norm{\bfB}_1}{\ell_0} = \frac{6m \norm{\bfB}_1}{(\norm{\bfB}_1/ 3m) \min_{i \in [n]} \nu_i} = \frac{18m^2}{\min_{i \in [n]} \nu_i}$.
    Therefore, the total number of iterations before the iterates enter $P_s$ is at most $\lceil \frac{18m^3}{\min_{i \in [n]} \nu_i} \rceil$.
    By~\cref{lem:discrete-time-relative-tatonnement-properties}, the price iterates will stay above $\frac{\ell_0}{2}$ after that.
    Therefore, 
    the price iterates enter the local region $P_{s}$ in polynomial time.
\end{proof}

The above lemma prove the polynomial-time termination of the first phase. 
After that, we can leverage the local smoothness to prove the $\tilde{\mathcal{O}}({1}/{\varepsilon^2})$ convergence to an approximate CE, as shown in the next part.

\subsubsection{Convergence to an approximate CE at a $\tilde{\mathcal{O}}({1}/{\varepsilon^2})$ rate}
\label{subsubsec:smooth-convergence}

\begin{lemma}
    If the $i$-th agent has $L_i$-smooth demand, 
    and the gauge dual of her disutility is lower bounded by $r_i > 0$ once $\bfp \in \Delta_B$, 
    then $f$ defined in~\cref{eq:potential-CCH} is $L$-smooth where $$L = \sum_{i=1}^n B_i \left( \frac{L_i}{r_i} + \frac{R_i(L_i \norm*{\bfB}_1 + R_i)}{r_i^2} \right).$$
    \label{lem:smooth-f-HB}
\end{lemma}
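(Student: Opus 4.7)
The plan is to reduce smoothness of $f$ to smoothness of the second summand, and then differentiate $\log d_i^\circ$ explicitly using Danskin's theorem applied to the definition $d_i^\circ(\bfp) = \max_{\bfx_i \ge 0 : d_i(\bfx_i) \le 1} \inp{\bfp}{\bfx_i}$. Since $-\sum_j p_j$ is affine, it contributes nothing to the smoothness bound, and it suffices to control $\sum_i B_i \log d_i^\circ(\bfp)$. The hypothesis that each agent has $L_i$-smooth demand means that the (unique) maximizer $\bfx_i^*(\bfp) := \textnormal{argmax}_{\bfx_i \ge 0: d_i(\bfx_i) \le 1} \inp{\bfp}{\bfx_i}$ is well-defined and $L_i$-Lipschitz in $\bfp$, so Danskin gives $\nabla d_i^\circ(\bfp) = \bfx_i^*(\bfp)$ and hence $\nabla \log d_i^\circ(\bfp) = \bfx_i^*(\bfp)/d_i^\circ(\bfp)$.

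Next, I would bound the Lipschitz constant of $\nabla \log d_i^\circ$ by the standard ``add and subtract'' identity
\[
\frac{\bfx_i^*(\bfp_1)}{d_i^\circ(\bfp_1)} - \frac{\bfx_i^*(\bfp_2)}{d_i^\circ(\bfp_2)} \;=\; \frac{\bfx_i^*(\bfp_1)-\bfx_i^*(\bfp_2)}{d_i^\circ(\bfp_1)} \;+\; \bfx_i^*(\bfp_2)\cdot \frac{d_i^\circ(\bfp_2)-d_i^\circ(\bfp_1)}{d_i^\circ(\bfp_1)\,d_i^\circ(\bfp_2)}.
\]
The first term is bounded by $(L_i/r_i)\|\bfp_1-\bfp_2\|$ using smooth demand and $d_i^\circ \ge r_i$. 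For the second term I would use $\|\bfx_i^*(\bfp_2)\| \le R_i$ and $d_i^\circ(\bfp_1)d_i^\circ(\bfp_2) \ge r_i^2$; the remaining ingredient is a Lipschitz estimate on $d_i^\circ$ itself with the \emph{right} constant, namely $L_i\|\bfB\|_1 + R_i$.

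The key trick, and the only mildly non-routine step, is deriving this Lipschitz bound with the correct constants. Naively one might use Danskin to get $|d_i^\circ(\bfp_1)-d_i^\circ(\bfp_2)| \le R_i\|\bfp_1-\bfp_2\|$, which would yield $R_i^2/r_i^2$ in the bound instead of $R_i(L_i\|\bfB\|_1+R_i)/r_i^2$. Instead I would decompose
\[
d_i^\circ(\bfp_1) - d_i^\circ(\bfp_2) = \inp{\bfp_1}{\bfx_i^*(\bfp_1)-\bfx_i^*(\bfp_2)} + \inp{\bfp_1-\bfp_2}{\bfx_i^*(\bfp_2)},
\]
and bound the first piece by $\|\bfp_1\|\cdot L_i\|\bfp_1-\bfp_2\| \le \|\bfB\|_1 L_i\|\bfp_1-\bfp_2\|$ (using $\bfp_1 \in \Delta_B \Rightarrow \|\bfp_1\|\le \|\bfp_1\|_1 = \|\bfB\|_1$) and the second by $R_i\|\bfp_1-\bfp_2\|$. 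This is where the $\|\bfB\|_1$ factor appears in the stated constant.

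Plugging these estimates into the add-and-subtract identity gives
\[
\left\|\nabla \log d_i^\circ(\bfp_1) - \nabla \log d_i^\circ(\bfp_2)\right\| \le \left(\frac{L_i}{r_i} + \frac{R_i(L_i\|\bfB\|_1+R_i)}{r_i^2}\right)\|\bfp_1-\bfp_2\|,
\]
and summing over $i$ weighted by $B_i$ yields the claimed $L$. The only real subtlety is choosing the asymmetric decomposition of $d_i^\circ(\bfp_1)-d_i^\circ(\bfp_2)$ above so that the smooth-demand constant $L_i$ (rather than a second copy of $R_i$) enters the Lipschitz estimate on $d_i^\circ$; everything else is book-keeping.
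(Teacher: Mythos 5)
Your proposal is correct and follows essentially the same route as the paper's proof: identify $\nabla d_i^\circ(\bfp)$ with the earning-maximizing bundle, bound $\lvert d_i^\circ(\bfp_1)-d_i^\circ(\bfp_2)\rvert$ by the same asymmetric decomposition yielding the constant $L_i\norm{\bfB}_1+R_i$, and then apply the same add-and-subtract bound on $\nabla d_i^\circ/d_i^\circ$ before summing with weights $B_i$. The only difference is cosmetic bookkeeping in which term carries the shared denominator, so nothing further is needed.
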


\begin{proof} 
    Since $\nabla d_i^\circ(\bfp) = \textnormal{argmax}_{\bfx: d(\bfx) \leq 1} \inp{\bfp}{\bfx}$, we have $\norm*{\nabla d_i^\circ(\bfp)} \leq R_i$ for all $\bfp$. 
    If $\lVert \nabla d_i^\circ(\bfp_1) - \nabla d_i^\circ(\bfp_2) \rVert \leq L_i \norm*{ \bfp_1 - \bfp_2 }$, 
    then for all $\bfp_1, \bfp_2, \in \Delta_B$, 
    we have 
    \begin{align}
        \left\lvert d_i^\circ(\bfp_1) - d_i^\circ(\bfp_2) \right\rvert &\leq \left| \inp{\bfp_1}{\nabla d_i^\circ(\bfp_1)} - \inp{\bfp_2}{\nabla d_i^\circ(\bfp_2)} \right| \nonumber \\ 
        &\leq \left| \inp{\bfp_1}{\nabla d_i^\circ(\bfp_1)} - \inp{\bfp_1}{\nabla d_i^\circ(\bfp_2)} \right| + \left| \inp{\bfp_1}{\nabla d_i^\circ(\bfp_2)} - \inp{\bfp_2}{\nabla d_i^\circ(\bfp_2)} \right| \nonumber \\ 
        &\leq \norm{\bfp_1} \norm{\nabla d_i^\circ(\bfp_1) - \nabla d_i^\circ(\bfp_2)} + \norm{\nabla d_i^\circ(\bfp)} \norm{\bfp_1 - \bfp_2} \nonumber \\ 
        &\leq (L_i \norm*{\bfB}_1 + R_i) \norm*{ \bfp_1 - \bfp_2 }. 
    \end{align}
    On the other hand, $d_i^\circ(\bfp) \geq \inp{\bfp}{\bfx_i'}$ for any $\bfx_i'$ such that $d_i(\bfx_i') \leq 1$. 
    For any appropriate $d_i$, we can see there is $r_i > 0$ such that $d_i^\circ(\bfp) \geq r_i$ for all $\bfp \in \Delta_B$. 
    Next, we can upper bound 
    \begin{align*}
        \norm*{\nabla f(\bfp_1) - \nabla f(\bfp_2)} &= \norm*{\sum_{i=1}^n B_i \left( \frac{\nabla d_i^\circ(\bfp_1)}{d_i^\circ(\bfp_1)} - \frac{\nabla d_i^\circ(\bfp_2)}{d_i^\circ(\bfp_2)} \right)} \nonumber \\ 
        &\leq \sum_{i=1}^n B_i \norm*{\frac{\nabla d_i^\circ(\bfp_1)}{d_i^\circ(\bfp_1)} - \frac{\nabla d_i^\circ(\bfp_2)}{d_i^\circ(\bfp_2)}} \nonumber \\ 
        &\leq \sum_{i=1}^n B_i \Big( \norm*{\frac{\nabla d_i^\circ(\bfp_1)}{d_i^\circ(\bfp_1)} - \frac{\nabla d_i^\circ(\bfp_2)}{d_i^\circ(\bfp_1)}} + \norm*{\frac{\nabla d_i^\circ(\bfp_2)}{d_i^\circ(\bfp_1)} - \frac{\nabla d_i^\circ(\bfp_2)}{d_i^\circ(\bfp_2)}} \Big) \nonumber \\ 
        &\leq \sum_{i=1}^n B_i \left( \frac{L_i}{r_i} \norm{\bfp_1 - \bfp_2} + \norm{\nabla d_i^\circ(\bfp_2)} \frac{\left| d_i^\circ(\bfp_1) - d_i^\circ(\bfp_2) \right|}{\left| d_i^\circ(\bfp_1) \right| \left| d_i^\circ(\bfp_2) \right|} \right) \nonumber \\ 
        &\leq \sum_{i=1}^n B_i \left( \frac{L_i}{r_i} + \frac{R_i(L_i \norm*{\bfB}_1 + R_i)}{r_i^2} \right) \norm{\bfp_1 - \bfp_2}. 
    \end{align*}
\end{proof}

Because $\norm*{\mathbf{a} - \frac{1}{n}\sum_{i=1}^n a_i} \leq \norm*{\mathbf{a}}$ for any vector $\mathbf{a} \in \mathbb{R}^n$, 
$L$-smoothness of $f$ implies $L$-smoothness of $f\vert_{H_B}$, i.e., 
\begin{equation}
    \norm*{\nabla f\vert_{H_B}(\bfp_1) - \nabla f\vert_{H_B}(\bfp_2)} \leq \norm*{\nabla f(\bfp_1) - \nabla f(\bfp_2)} \leq L \norm*{\bfp_1 - \bfp_2} \hspace{30pt} \forall\, \bfp_1, \bfp_2 \in \Delta_B. 
\end{equation}

Following standard derivations, we can show that the gradient descent method can find an $\varepsilon$-stationary point of a smooth function $f$ in $\tilde{\mathcal{O}}(\frac{1}{\varepsilon^2})$ iterations. 
Here, $f$ is considered to be a general smooth function, not the one defined in the previous sections.
\begin{theorem}
    If a function $f$ is $L$-smooth in a closed convex region $S \subset \textnormal{dom} (f)$, 
    and gradient descent with a constant stepsize $\eta \leq \frac{1}{L}$ generates a sequence of iterates within $S$, 
    then gradient descent can find an $\varepsilon$-stationary point of $f$ in $({f(\bfp^0) - \underline{f}})\big/\left( {(\eta - \frac{\eta^2 L}{2}) \varepsilon^2} \right)$ iterations. 
    \label{thm:1-over-squaredeps-from-smoothness}
\end{theorem}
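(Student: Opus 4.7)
The plan is to apply the standard descent-lemma argument for smooth nonconvex optimization, adapted to the hypothesis that the iterates remain inside the region $S$ on which $f$ is $L$-smooth. Since this last hypothesis is given, I can freely invoke smoothness at every iterate without worrying about projections or constraint handling.

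First I would invoke the \emph{descent lemma}: for any $L$-smooth function $f$ on $S$ and any two points $\bfx, \bfy \in S$,
\begin{equation*}
    f(\bfy) \leq f(\bfx) + \inp*{\nabla f(\bfx)}{\bfy - \bfx} + \frac{L}{2} \norm{\bfy - \bfx}^2.
\end{equation*}
Applying this to $\bfx = \bfp^k$ and $\bfy = \bfp^{k+1} = \bfp^k - \eta \nabla f(\bfp^k)$, and using $\eta \leq 1/L$ which implies $\eta - \eta^2 L / 2 > 0$, I obtain the per-step progress bound
\begin{equation*}
    f(\bfp^{k+1}) \leq f(\bfp^k) - \left( \eta - \frac{\eta^2 L}{2} \right) \norm*{\nabla f(\bfp^k)}^2.
\end{equation*}

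Next I would telescope this inequality over $k = 0, 1, \ldots, K-1$ and use the (implicit) lower bound $\underline{f} \leq f(\bfp^K)$ on the function value to get
\begin{equation*}
    \left( \eta - \frac{\eta^2 L}{2} \right) \sum_{k=0}^{K-1} \norm*{\nabla f(\bfp^k)}^2 \leq f(\bfp^0) - \underline{f}.
\end{equation*}
Finally, I would argue by contradiction: if none of $\bfp^0, \ldots, \bfp^{K-1}$ is an $\varepsilon$-stationary point, then every term in the sum exceeds $\varepsilon^2$, yielding $K (\eta - \eta^2 L / 2) \varepsilon^2 < f(\bfp^0) - \underline{f}$. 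Rearranging gives the claimed iteration bound $K \leq (f(\bfp^0) - \underline{f}) / ((\eta - \eta^2 L / 2) \varepsilon^2)$, so at least one iterate within that horizon must be $\varepsilon$-stationary.

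There is no real obstacle here — the proof is textbook once the hypothesis that iterates stay in the smooth region $S$ is granted. The only subtlety worth flagging is that both endpoints of each gradient step must lie in $S$ in order to apply the descent lemma at step $k$; this is exactly what the hypothesis guarantees. If desired, one could strengthen the statement by noting that the minimum (not just some) gradient norm among $\bfp^0, \ldots, \bfp^{K-1}$ is bounded by $\sqrt{(f(\bfp^0) - \underline{f}) / (K (\eta - \eta^2 L / 2))}$, but the stated form suffices for the downstream $\tilde{\mathcal{O}}(1/\varepsilon^2)$ convergence to an approximate CE.
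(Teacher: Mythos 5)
Your proposal is correct and follows essentially the same route as the paper: a per-step progress bound from the descent lemma (which the paper derives explicitly via the integral form of smoothness), followed by telescoping against the lower bound $\underline{f}$ and extracting an iterate with small gradient norm. The only cosmetic difference is that you phrase the final step as a contradiction while the paper bounds $\min_k \norm{\nabla f(\bfp^k)}$ directly, which is the same argument.
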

\begin{proof}
    Note that $f$ is continuously differentiable, then for any two consecutive iterates $\bfp_1, \bfp_2 \in S$, we have 
    \begin{equation*}
        f(\bfp_2) - f(\bfp_1) = \int_0^1 \inp*{\nabla f(\bfp_1 + t(\bfp_2 - \bfp_1))}{\bfp_2 - \bfp_1} dt. 
    \end{equation*}
    This yields the following inequality: 
    \begin{align}
        f(\bfp_2) - f(\bfp_1) - \inp*{\nabla f(\bfp_1)}{\bfp_2 - \bfp_1} &= \int_0^1 \inp*{\nabla f(\bfp_1 + t(\bfp_2 - \bfp_1)) - \nabla f(\bfp_1)}{\bfp_2 - \bfp_1} dt \nonumber \\ 
        &\leq \int_0^1 \norm*{\nabla f(\bfp_1 + t(\bfp_2 - \bfp_1)) - \nabla f(\bfp_1)} \norm*{\bfp_2 - \bfp_1} dt 
        \nonumber \\ 
        &\leq \int_0^1 t L \norm*{\bfp_2 - \bfp_1}^2 dt \nonumber \\ 
        &= \frac{L}{2} \norm*{\bfp_2 - \bfp_1}^2, 
    \end{align}
    where the first inequality follows from the Cauchy-Schwarz inequality, and the second inequality follows from $L$-Lipschitz gradients of $f$ on $\Delta_B$. 

    Let $\bfp^k$ denote the $k$-th iterate, we have 
    \begin{equation}
        f(\bfp^{k + 1}) - f(\bfp^k) - \inp*{\nabla f(\bfp^k)}{\bfp^{k + 1} - \bfp^k} \leq \frac{L}{2} \norm*{\bfp^{k + 1} - \bfp^k}^2. 
        \label{eq:L-smooth-inequality}
    \end{equation}
    Plugging $\bfp^{k + 1} - \bfp^k = - \eta \nabla f(\bfp^k)$ into~\cref{eq:L-smooth-inequality} and rearranging terms, we have 
    \begin{equation}
        (\eta - \frac{\eta^2 L}{2}) \norm*{\nabla f(\bfp^k)}^2 \leq f(\bfp^k) - f(\bfp^{k + 1}). 
        \label{eq:L-smooth-inequality-gradient-descent}
    \end{equation}
    As $0 < \eta \leq \frac{1}{L}$, the left-hand side of~\cref{eq:L-smooth-inequality-gradient-descent} is positive. 
    By telescoping~\cref{eq:L-smooth-inequality-gradient-descent} over $k = 0, \ldots, T - 1$, we have 
    \begin{equation}
        (\eta - \frac{\eta^2 L}{2}) (T + 1) \min_{0 \leq k \leq T} \norm*{\nabla f(\bfp^k)}^2 \leq (\eta - \frac{\eta^2 L}{2}) \sum_{k = 0}^T \norm*{\nabla f(\bfp^k)}^2 \leq f(\bfp^0) - f(\bfp^{T + 1}) \leq f(\bfp^0) - \underline{f}. 
    \end{equation}
    This yields that 
    \begin{equation*}
        \left( \min_{0 \leq k \leq T} \norm*{\nabla f(\bfp^k)} \right)^2 \leq \frac{f(\bfp^0) - \underline{f}}{(\eta - \frac{\eta^2 L}{2}) (T + 1)} \leq \frac{f(\bfp^0) - \underline{f}}{(\eta - \frac{\eta^2 L}{2}) T}. 
    \end{equation*}
    This means, to achieve $\min_{0 \leq k \leq T} \norm*{\nabla f(\bfp^k)} \leq \varepsilon$, it suffices to set $T = \frac{f(\bfp^0) - \underline{f}}{(\eta - \frac{\eta^2 L}{2}) \varepsilon^2}$. 
    which can be telescoped and leads to the desired results. 
\end{proof}

Combining the above results, we are ready to present the following iteration complexity of the dynamics~\eqref{discrete-time-relative-chores-tatonnement}. 

\begin{theorem}
    If every agent has a CES disutility function with $\rho \in (1,\infty)$, 
    then~(\ref{discrete-time-relative-chores-tatonnement}) can find an approximate CE in $\tilde{\mathcal{O}}\left( \frac{1}{\varepsilon^2} \right)$ iterations. 
\end{theorem}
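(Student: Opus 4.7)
The plan is to treat the two parameter regimes $\rho\in(1,2]$ and $\rho\in(2,\infty)$ separately, in both cases reducing the claim to the generic smooth nonconvex convergence result in Theorem~\ref{thm:1-over-squaredeps-from-smoothness}, applied to the potential $f\vert_{H_B}$ along the relative t\^atonnement iterates. I first assemble the three ingredients the theorem requires: (i) a uniform lower bound $\underline f$ on $f\vert_{H_B}$ on $\Delta_B$, which is already provided by the bound $\norm{\bfB}_1\log(\min_i\delta_i\norm{\bfB}_1/e)$ established for the \eqref{pgm:general-eg-dual} objective; (ii) a smoothness constant for $f\vert_{H_B}$, furnished by Lemma~\ref{lem:smooth-f-HB} once we supply the per-agent constants $L_i$ and $r_i$; and (iii) a guarantee that the iterates remain in the relevant region, which is already built into Lemma~\ref{lem:discrete-time-relative-tatonnement-properties}.

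For the easy case $\rho\in(1,2]$, I would simply combine Corollary~\ref{coro:rho=2} (which gives each $L_i$ globally on $\Delta_B$) with the uniform lower bound $r_i=d_i^\circ(\bfp)\ge (\max_j d_{ij})^{(1-\sigma)/\sigma}\norm{\bfB}_1$ obtained from \cref{eq:d-circ-lower-bound} (valid for all $\rho>1$), and the boundedness $R_i\le(\min_j d_{ij})^{-1/\rho}$ from Lemma~\ref{lem:boundedness-x-sublevel}. Plugging these into Lemma~\ref{lem:smooth-f-HB} yields a global smoothness constant $L$ of $f\vert_{H_B}$ on $\Delta_B$. Choosing a constant stepsize $\eta^k=\min\{1/L,\ell_0^2/(2\norm{\bfB}_1)\}$, Lemma~\ref{lem:discrete-time-relative-tatonnement-properties} ensures the iterates remain in $\Delta_B$, so Theorem~\ref{thm:1-over-squaredeps-from-smoothness} applies and produces an $\varepsilon$-stationary point of $f\vert_{H_B}$ in $O(1/\varepsilon^2)$ iterations.

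For $\rho\in(2,\infty)$, I would run a two-phase analysis. In Phase~1, I use the constant stepsize prescribed by Lemma~\ref{lem:non-smooth-to-smooth}, which drives the iterates into the local smooth region $P_s=\{\bfp\in\Delta_B : p_j\ge\ell_0/2\}$ in $O(m^3/\min_i\nu_i)$ iterations, independent of $\varepsilon$. In Phase~2, once inside $P_s$, Lemma~\ref{lem:smooth-CES-rho-2-infty} provides a local smoothness modulus $L(\ell_0/2)$ for each $d_i^\circ$; combined with $r_i=d_i^\circ(\bfp)\ge(\max_j d_{ij})^{(1-\sigma)/\sigma}\norm{\bfB}_1$ and $R_i$ as above, Lemma~\ref{lem:smooth-f-HB} yields a local Lipschitz-gradient constant $L$ of $f\vert_{H_B}$ on $P_s$. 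I then keep the stepsize at $\eta^k=\min\{1/L,\ell_0^2/(2\norm{\bfB}_1), \norm{\bfB}_1(\min_i\nu_i)^2/(18m^2)\}$; the last bound ensures, again by Lemma~\ref{lem:non-smooth-to-smooth}, that the iterates never leave $P_s$. Applying Theorem~\ref{thm:1-over-squaredeps-from-smoothness} on this closed convex region delivers an $\varepsilon$-stationary point in $O(1/\varepsilon^2)$ additional iterations; the total remains $\tilde O(1/\varepsilon^2)$ with dependence on $\rho$ and the $B_i$'s absorbed into the $\tilde O$ notation.

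The main obstacle I foresee is not the smoothness-to-iteration-complexity step, which is standard, but rather translating an $\varepsilon$-stationary point of the nonconvex potential $f\vert_{H_B}$ into an $\varepsilon$-approximate CE under a precise definition of ``$\varepsilon$-CE.'' Since Lemma~\ref{lem:subdifferentially-regular-and-generalized-subdifferential} identifies $\partial f\vert_{H_B}(\bfp)$ with the relative excess demand $\tilde Z(\bfp)$, a small generalized gradient means there is some $\tilde{\bfz}\in\tilde Z(\bfp^k)$ with $\norm{\tilde{\bfz}}\le\varepsilon$; combined with the price-simplex constraint $\bfp^k\in\Delta_B$ and Walras' law $\inp{\bfp^k}{\bfz}=0$ for the supporting $\bfz\in Z(\bfp^k)$, this pins $\norm{\bfz}$ down to $O(\varepsilon)$ as well, so $\bfp^k$ is an approximate market-clearing price with approximately disutility-minimizing bundles. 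A careful writeup of this reduction (in the style of standard Fisher-market $\varepsilon$-CE definitions) is the only remaining technical step; everything else follows by assembling results already proven in Sections~\ref{subsubsec:global-smoothness}--\ref{subsubsec:smooth-convergence}.
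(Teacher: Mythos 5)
Your proposal is correct and follows essentially the same route as the paper: the same split into $\rho\in(1,2]$ (global smoothness via Corollary~\ref{coro:rho=2}) and $\rho\in(2,\infty)$ (warm-start phase into $P_s$ via Lemma~\ref{lem:non-smooth-to-smooth}, then local smoothness via Lemma~\ref{lem:smooth-region}), assembled through Lemma~\ref{lem:smooth-f-HB} and Theorem~\ref{thm:1-over-squaredeps-from-smoothness} with the same stepsize safeguards and objective-gap bounds. The final reduction you flag as the remaining step is carried out in the paper exactly as you sketch it: $\partial f\vert_{H_B}=\tilde Z$ gives $\norm{\tilde\bfz}\le\varepsilon$, and the budget identity $\sum_j p_j(z_j+1)=\norm{\bfB}_1$ together with $\bfp\in H_B$ bounds the mean of $\bfz$ by $\varepsilon$, yielding $|z_j|\le 2\varepsilon$ for every chore.
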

\begin{proof}
    Recall that we define 
    \begin{equation}
        \delta_i = \sup\{ \delta > 0 \mid d(\delta \cdot \mathbf{1}_m) \leq 1 \} \hspace{30pt} \forall\, i \in [n]. 
    \end{equation}
    We know that $\delta_i > 0$ because of the continuity and $d_i(\mathbf{0}_m) = 0$. 
    Then, 
    for any $\bfp \in \Delta_B$, $d_i^\circ(\bfp) = \sup\{ \inp*{\bfp}{\bfx} \mid d_i(\bfx) \leq 1 \}$ is lower bounded by $\delta_i \norm*{\bfB}_1$. 
    This leads to a lower bound of $f\vert_{\Hb}(\bfp)$ over its domain: 
    \begin{equation}
        f\vert_{\Hb}(\bfp) = - \sum_{j = 1}^m p_j + \sum_{i = 1}^n B_i \log\left( d^\circ(\bfp) \right) \geq - \norm*{\bfB}_1 + \sum_{i = 1}^n B_i \log{\left( \delta_i \norm*{\bfB}_1 \right)}. 
        \label{eq:fHB-lower-bound}
    \end{equation}
    
    Recall that $R_i = \sup\{ \norm*{\bfx_i}_\infty \mid d_i(\bfx_i) \leq 1 \} < \infty$. 
    This yields an upper bound of $f\vert_{\Hb}(\bfp)$ over its domain: 
    \begin{equation}
        f\vert_{\Hb}(\bfp) = - \sum_{j = 1}^m p_j + \sum_{i = 1}^n B_i \log\left( d^\circ(\bfp) \right) \leq - \norm*{\bfB}_1 + \sum_{i = 1}^n B_i \log{\left( R_i \norm*{\bfB}_1 \right)}. 
        \label{eq:fHB-upper-bound}
    \end{equation}
    
    Thus, the difference between $f\vert_{\Hb}(\bfp^0)$ and $(f\vert_{\Hb})^*$ is at most $\sum_{i = 1}^n B_i \log{\left( R_i / \delta_i \right)}$. 
    For CES disutility functions with $\rho \in (1, \infty)$, we have $R_i = (\min_{j \in [m]} d_{ij})^{-1/\rho}$ by~\cref{lem:boundedness-x-sublevel}, and $\delta_i \geq \left(\frac{1}{m \max_{j \in [m]} d_{ij}}\right)^{1/\rho}$ because $d_i\left( \left(\frac{1}{m \max_{j \in [m]} d_{ij}}\right)^{1/\rho} \mathbf{1}_m \right) = \sum_{j=1}^m d_{ij} \left( \left(\frac{1}{m \max_{j \in [m]} d_{ij}}\right)^{1/\rho} \right)^\rho \leq 1$.
    Therefore, 
    \begin{equation*}
        f\vert_{\Hb}(\bfp^0) - (f\vert_{\Hb})^* \leq \sum_{i=1}^n B_i \log(\frac{R_i}{\delta_i}) \leq \frac{1}{\rho} \sum_{i=1}^n B_i \log\left( \frac{m \max_{j \in [m]} d_{ij}}{\min_{j \in [m]} d_{ij}} \right).
    \end{equation*}

    Let $\eta \leq \frac{\ell_0^2}{2\norm*{\bfB}_1}$. 
    For $\rho \in (1, 2]$, by~\cref{crl:smooth-CES-rho-1-2} we obtain the $L$-smoothness of the gauge dual of the disutility functions.
    For $\rho \in (2, \infty)$, 
    by~\cref{lem:non-smooth-to-smooth,lem:smooth-region} we can obtain the $L$-smoothness of the gauge dual of the disutility functions after a constant (independent of $\frac{1}{\varepsilon}$) number of warm-start iterations bounded by $\textnormal{poly}(n, m, \max_{i \in [n]} B_i, \frac{1}{\ell_0})$.
    By~\cref{lem:smooth-f-HB}, we compute the smoothness modulus of the restriction $f\vert_{\Hb}$. 
    Note that the gauge duals of the CES disutility functions are lower bounded by $\delta_i \norm*{\bfB}_1$.
    Letting $\eta = \min\{ \frac{\ell_0^2}{2\norm*{\bfB}_1}, \frac{1}{2L} \}$. 
    by~\cref{thm:1-over-squaredeps-from-smoothness},~\cref{eq:fHB-lower-bound,eq:fHB-upper-bound}, we obtain an $\varepsilon$-stationary point after $\tilde{\mathcal{O}}(\frac{1}{\varepsilon^2})$ iterations. 
    
    Furthermore, note that an $\varepsilon$-stationary point of $f\vert_{\Hb}(\bfp)$ means that there is a (unique) relative excess demand $\tilde{\bfz} \in \tilde{Z}(\bfp)$ such that $\norm*{\tilde{\bfz}} \leq \varepsilon$. 
    Consequently, there exists a $\bfz \in Z(\bfp)$ such that $-\varepsilon \leq z_j - \frac{1}{m}\sum_{j=1}^m z_j \leq \varepsilon$ for all $j \in [m]$ which leads to $-2\varepsilon \leq z_j \leq 2\varepsilon$ for all $j \in [m]$ if $\bfp \in \Hb$. 
    This is because: 
    for each $\bfz \in Z(\bfp)$, we have $\{ b_{ij} \}_{i \in [n], j \in [m]}$ such that $z_j = \sum_{i = 1}^n b_{ij} / p_j - 1, \; \forall\, j \in [m]$ and $\sum_{j=1}^m b_{ij} = B_i, \; \forall\, i \in [n]$; this then implies that $\sum_{i=1}^n B_i \left( \frac{1}{m}\sum_{j=1}^m z_j - \varepsilon + 1 \right) \leq \sum_{i=1}^n B_i = \sum_{j=1}^m \sum_{i=1}^n b_{ij} = \sum_{j=1}^m p_j \left( z_j + 1 \right) \leq \sum_{j=1}^m p_j \left( \frac{1}{m}\sum_{j=1}^m z_j + \varepsilon + 1 \right) = \sum_{i=1}^n B_i \left( \frac{1}{m}\sum_{j=1}^m z_j + \varepsilon + 1 \right)$, yielding $-\varepsilon \leq \frac{1}{m}\sum_{j=1}^m z_j \leq \varepsilon$.
\end{proof}





\section{Stability of Competitive Equilibria in Chores Markets}
\label{sec:stability-main}

In this section we investigate the stability properties of CE in chores markets with \emph{linear disutilities}. 
In the goods case, $\bfp^*$ is known to be unique for linear Fisher markets, which implies global stability.
In contrast, for chores there may be multiple equilibria, and we already showed in \cref{fig:instances-stability-NW} that the landscape of stable and unstable equilibria can be quite interesting.




We now characterize stable CE in terms of properties of our potential function, the structure of the MPB graph, and the direction of the excess demand correspondence. 

For linear Fisher market with chores, we introduce the concepts of MPB graph. 
To satisfy the optimal bundles condition, every agent only takes those chores with the \emph{minimum pain-per-buck} (MPB), i.e., $\frac{d_{ij}}{p^*_j} = \underset{l \in [m]}{\min} \frac{d_{il}}{p^*_l}$, that is, $x^*_{ij} > 0$ only if $\frac{p^*_j}{d_{ij}} = \underset{l \in [m]}{\max} \frac{p^*_l}{d_{il}}$. We denote the MPB set for agent $i$ given a price vector $\bfp$ as 
\begin{equation*}
    J^*_i(\bfp) = \Big\{ j \,\left\vert\, \frac{p_j}{d_{ij}} = \underset{l \in [m]}{\max} \frac{p_l}{d_{il}} \right. \Big\}. 
\end{equation*}


\begin{theorem}
    Let $\bfp^*$ be an equilibrium price. 
    The following statements are equivalent: 
    \begin{itemize}
        \item[(i)] $\bfp^*$ is locally stable; 
        \item[(ii)] $\bfp^*$ is a strict local minimum of the function $f\vert_{H_B}$; 
        \item[(iii)] 
        There exists an equilibrium allocation $\bfx^*$ coupled with $\bfp^*$ satisfying: 
        given any nonempty $J \subset [m]$, there exist some $j \in J$ and $j' \in J^c$ such that 
        \begin{equation*}
            \exists\; i \in [n]: j, j' \in J^*_i(\bfp^*) \quad \text{and} \quad x^*_{i j'} > 0. 
        \end{equation*} 
        \item[(iv)]
        There exists a neighborhood $\mathbf{N}(\bfp^*) \subset \mathbb{R}^m_+$ of $\bfp^*$ such that 
        \begin{equation*}
            \inp*{\bfz}{\bfp - \bfp^*} \geq 0 \quad \forall\; \bfz \in Z(\bfp), \bfp \in \mathbf{N}(\bfp^*) \cap H_B. 
        \end{equation*} 
    \end{itemize}
    \label{thm:characterize-locally-stable-chores-ce}
\end{theorem}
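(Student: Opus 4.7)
My plan is to split the four-way equivalence into three legs: (i)$\Leftrightarrow$(ii) via the subgradient-flow interpretation, (ii)$\Leftrightarrow$(iv) via subdifferential regularity, and the combinatorial equivalence (ii)$\Leftrightarrow$(iii), which I expect to be the main technical obstacle.

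For (i)$\Leftrightarrow$(ii), the descent lemma (\cref{lem:descent}) says $f\vert_{H_B}$ strictly decreases along any non-stationary trajectory of \eqref{chores-tatonnement}. Since chores CE, hence stationary points, are isolated, if $\bfp^*$ is a strict local minimum of $f\vert_{H_B}$ I would pick a sublevel-set neighborhood of $\bfp^*$ (restricted to $\Delta_B$) containing only $\bfp^*$ as a stationary point; any trajectory initialized inside cannot leave and must converge to $\bfp^*$. Conversely, suppose $\bfp^*$ is locally stable with neighborhood $\mathbf{N}(\bfp^*)$. If $\bfp^*$ were not a strict local minimum, there would exist $\bfp \in \mathbf{N}(\bfp^*) \cap \Delta_B$ arbitrarily close to $\bfp^*$ with $f\vert_{H_B}(\bfp) \leq f\vert_{H_B}(\bfp^*)$; but a trajectory starting at such a $\bfp$ can only decrease $f\vert_{H_B}$, hence could not converge to $\bfp^*$, contradicting local stability.

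For (ii)$\Leftrightarrow$(iv), note that for $\bfp, \bfp^* \in H_B$ and any $\bfz \in Z(\bfp)$ with associated $\tilde{\bfz} \in \tilde{Z}(\bfp)$, one has $\inp{\bfz}{\bfp - \bfp^*} = \inp{\tilde{\bfz}}{\bfp - \bfp^*}$ since $\bfp - \bfp^* \in H_0$. Combined with $\partial f\vert_{H_B}(\bfp) = \tilde{Z}(\bfp)$ (\cref{lem:subdifferentially-regular-and-generalized-subdifferential}), condition (iv) asserts that some generalized gradient of $f\vert_{H_B}$ at every nearby $\bfp$ points ``outward'' from $\bfp^*$. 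Using subdifferential regularity, the first-order lower bound $f\vert_{H_B}(\bfp^*) \geq f\vert_{H_B}(\bfp) + \inp{\tilde{\bfz}}{\bfp^* - \bfp} + o(\norm{\bfp - \bfp^*})$ rearranges to the direction (iv)$\Rightarrow$(ii) by integrating along the segment in $H_B$ from $\bfp^*$ to $\bfp$. The reverse direction (ii)$\Rightarrow$(iv) follows by contradiction: a sequence $\bfp_k \to \bfp^*$ with $\tilde{\bfz}_k \in \tilde{Z}(\bfp_k)$ and $\inp{\tilde{\bfz}_k}{\bfp_k - \bfp^*} < 0$ would, by the same regular first-order inequality, force $f\vert_{H_B}(\bfp^*) < f\vert_{H_B}(\bfp_k)$ failure and contradict strict minimality.

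The hardest and most delicate leg is the combinatorial equivalence (ii)$\Leftrightarrow$(iii). For $\bfp \in H_B$ sufficiently close to $\bfp^*$, continuity of the MPB assignment gives $J^*_i(\bfp) \subseteq J^*_i(\bfp^*)$, so, since $\sum_j p_j$ is constant on $H_B$,
\begin{equation*}
    f\vert_{H_B}(\bfp) - f\vert_{H_B}(\bfp^*) \;=\; \sum_{i=1}^n B_i \max_{j \in J^*_i(\bfp^*)} \log\!\left( \frac{p_j}{p^*_j} \right).
\end{equation*}
For (iii)$\Rightarrow$(ii), given a small perturbation $\bfp \neq \bfp^*$ in $H_B$, set $J = \{ j : p_j > p^*_j \}$ and apply (iii) to produce an agent $i$ whose MPB set meets both $J$ and $J^c$ with $x^*_{ij'} > 0$ for some $j' \in J^c$; the max over $J^*_i(\bfp^*)$ is then attained at a coordinate in $J$ and gives a strictly positive contribution. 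I would then chain this observation across agents and use the market-clearing identities $\sum_i p^*_j x^*_{ij} = p^*_j$ and $\sum_j p^*_j x^*_{ij} = B_i$ to upgrade the sum to a strictly positive lower bound scaled by $\|\bfp - \bfp^*\|$. Conversely, if (iii) fails with witnessing set $J$, I would build $\mathbf{h} \in H_0$ supported oppositely on $J$ and $J^c$ (with magnitudes balanced to keep $\sum_j h_j = 0$), so that for every agent $i$ either $J^*_i(\bfp^*)\subseteq J$, or $J^*_i(\bfp^*) \subseteq J^c$, or agent $i$ consumes zero on $J^c$; in each case the budget identity ensures the above sum is non-positive along the ray $\bfp^* + \varepsilon \mathbf{h}$, contradicting strict local minimality. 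The main obstacle will be carefully bookkeeping the signs in this cut-based construction and proving strict rather than weak positivity in the (iii)$\Rightarrow$(ii) direction, since first-order contributions from agents with $J^*_i(\bfp^*)$ contained entirely on one side of the cut can a priori cancel the gains, and the identities at CE are needed to close this gap.
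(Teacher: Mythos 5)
Your legs (i)$\Leftrightarrow$(ii) and (ii)$\Rightarrow$(iv) match the paper's argument (descent via \cref{lem:descent} plus isolation of equilibria, and the contradiction via subdifferential regularity from \cref{lem:subdifferentially-regular-and-generalized-subdifferential}). The first genuine gap is the direct implication (iv)$\Rightarrow$(ii): integrating the first-order inequality (or applying Lebourg's mean value theorem) along the segment from $\bfp^*$ to $\bfp$ only converts the weak inequality $\inp*{\bfz}{\bfp-\bfp^*}\geq 0$ into $f\vert_{H_B}(\bfp)\geq f\vert_{H_B}(\bfp^*)$, i.e.\ a \emph{non-strict} local minimum; statement (ii) asserts strictness, and strictness is exactly what your (ii)$\Rightarrow$(i) leg consumes. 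With your legs as stated — (iv)$\Rightarrow$(weak minimum) and $\neg$(iii)$\Rightarrow\neg$(strict minimum) — the four statements do not close into a cycle: nothing brings you from (iv) back to (i)--(iii). The paper closes it differently, proving (iv)$\Rightarrow$(iii) by contraposition (if (iii) fails for a cut $J$, perturb along $\nu$ proportional to $+p^*_j$ on $J$ and $-p^*_j$ on $J^c$; the equilibrium allocation remains optimal at $\bfp^*+h\nu$ and produces an excess demand with $z_j<0$ on $J$ and $z_j>0$ on $J^c$, violating (iv)), and then (iii)$\Rightarrow$(ii). To salvage your architecture you would have to show $\neg$(iii) contradicts even the \emph{weak} minimum; your own cut construction can do this, but note that by the budget and market-clearing identities the first-order change along the ray is exactly zero, so you must carry the expansion to second order (or compute exactly, e.g.\ $P(J)\log(1+\varepsilon/P(J))+P(J^c)\log(1-\varepsilon/P(J^c))<0$) rather than assert "non-positive" from first-order bookkeeping.

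The second gap is in (iii)$\Rightarrow$(ii), and it is precisely the cancellation you flag but do not resolve. Taking $J=\{j: p_j>p^*_j\}$, condition (iii) only hands you \emph{some} $j\in J$, $j'\in J^c$ with $x^*_{ij'}>0$, and the resulting gain over the market-clearing lower bound, roughly $p^*_{j'}x^*_{ij'}\bigl(\log(p_j/p^*_j)-\log(p_{j'}/p^*_{j'})\bigr)$, can be arbitrarily small relative to $\norm{\bfp-\bfp^*}$ when the connecting coordinates carry a vanishing share of the perturbation; meanwhile the slack $\sum_j p^*_j\log(p_j/p^*_j)$ you must beat is $-\Theta(\norm{\bfp-\bfp^*}^2)$ but the comparison has to be uniform over directions, so a single application of (iii) does not suffice and you would need to chain (iii) over a nested family of cuts (level sets of $(p_j-p^*_j)/p^*_j$) to accumulate a gain of order $\norm{\bfp-\bfp^*}$. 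The paper sidesteps this zeroth-order estimate entirely: it works with Clarke directional derivatives at $\bfp^*$, taking $J=\textnormal{argmax}_j\nu_j$ for each direction $\nu$ in the tangent space, using (iii) to find $i,j,j'$, and exhibiting a single alternative demand (shift agent $i$'s consumption of $j'$ onto $j$) whose relative excess demand $\tilde{\bfz}'$ satisfies $\inp*{\tilde{\bfz}'}{\nu}>0$; since $\partial f\vert_{H_B}(\bfp^*)=\tilde{Z}(\bfp^*)$ and the Clarke derivative is the support function of this set, $(f\vert_{H_B})^\circ(\bfp^*;\nu)>0$ for every $\nu$, and regularity converts this into strict local minimality. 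Either adopt this first-order viewpoint or complete the chaining argument; as written, this leg is not proved.
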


Before presenting the proof, we introduce additional tools in the non-smooth analysis we will use in the subsequent proof. 

The directional derivative can also be generalized. The \emph{Clarke generalized directional derivative} of $f$ at $\bfx$ in the direction $\nu$ is defined as follows: 
\begin{equation*}
    f^\circ(\bfx; \nu) = \limsup_{\bfy \rightarrow \bfx, h \downarrow 0} \frac{f(\bfy+h\nu) - f(\bfy)}{h}. 
\end{equation*} 
A useful fact is~\cite[Proposition 1.5(c)]{clarke2008nonsmooth}: 
\begin{equation}
    f^\circ(\bfp; \nu) = \max\{ \inp*{\zeta}{\nu}: \zeta \in \partial f(\bfp) \}, \quad \forall\; \nu 
    \label{fact:Clarke-directional-derivatives}
\end{equation} 
for a locally Lipschitz function $f$. 
If $f$ is subdifferentially regular, 
the (original) directional derivative coincides with the Clarke generalized directional derivative.

\begin{proof}[Proof of~\cref{thm:characterize-locally-stable-chores-ce}]
    First, we show $\textnormal{(i)} \Leftrightarrow \textnormal{(ii)}$. 
    For $\textnormal{(ii)} \Rightarrow \textnormal{(i)}$, 
    because the equilibrium prices are disconnected, and $p^*$ is a strict local minimum of the function $f\vert_{H_B}$,  
    we know there is a neighborhood of $\bfp^*$ defined as $\tilde{\mathbf{N}}(\bfp^*) := \{ \bfp \in \mathcal{B}_r(\bfp^*) \cap \Delta_B \; \vert \; f\vert_{H_B}(\bfp) \leq f\vert_{H_B}(\bfp^*) + \epsilon \}$ for some $r, \epsilon > 0$ such that $\bfp^*$ is the unique minimizer in $\tilde{\mathbf{N}}(\bfp^*)$. 
    Moreover, there must exist some $\epsilon' > 0$ such that $f\vert_{\Hb}(\bfp^*)+\epsilon'$ is the smallest value attained on $\tilde{\mathbf{N}}(\bfp^*)\cap \partial \mathcal{B}_r(\bfp^*)$, where $\partial \mathcal{B}_r(\bfp^*)$ is the boundary of the ball.
    Since $f\vert_{H_B}$ is locally Lipschitz continuous, there exists another neighborhood $\mathbf{N}(\bfp^*) := \{ \bfp \in \mathcal{B}_r(\bfp^*) \cap \Delta_B \; \vert \; f\vert_{H_B}(\bfp) \leq f\vert_{H_B}(\bfp^*) + \epsilon'/2 \} \subset \tilde{N}(\bfp^*)$. 
    Now if we initialize the continuous-time dynamics anywhere in $\mathbf{N}(\bfp^*)$ then we have by the strict descent property of the continuous trajectory in~\cref{lem:descent} that the trajectory cannot escape from $\tilde{\mathbf{N}}(\bfp^*)$. 
    Because the dynamical system \textnormal{(\ref{chores-tatonnement})} is stable, there has to be a CE which any trajectory converges to. 
    It follows that any trajectory starting from any initial point in $\mathbf{N}(\bfp^*)$ converges to $\bfp^*$. By definition, $\bfp^*$ is locally stable. 
    For $\textnormal{(i)} \Rightarrow \textnormal{(ii)}$, we prove it by contraposition. 
    Suppose $\bfp^*$ is a strict local maximum or a saddle point (the only other options besides being a local minimum). 
    In either case, we can find a point $\bfp'$ in any neighborhood of $\bfp^*$ such that $f\vert_{H_B}(\bfp^*) > f\vert_{H_B}(\bfp^0)$. 
    Again, by the strict descent property in~\cref{lem:descent}, any continuous trajectory starting from $\bfp'$ will not converge to $\bfp^*$. 
    Thus $\bfp^*$ is not locally stable. 

    $\textnormal{(ii)}$ $\implies$ $\textnormal{(iv)}$. 
    Assume that $\bfp^*$ is a strict local minimum of the function $f\vert_{H_B}$. 
    Suppose that for any neighborhood $\mathbf{N}(\bfp^*) \subset \RR^m_+$ of $\bfp^*$, there exists a $\bfp' \in \mathbf{N}(\bfp^*) \cap H_B$ and a $\bfz \in Z(\bfp')$ such that $\inp{\bfz}{\bfp' - \bfp^*} < 0$. 
    Let $\tilde{\bfz} = \bfz - \frac{1}{m} \mathbf{1}_m^\top \bfz$. 
    By the subdifferential regularity of $f\vert_{H_B}$, we have 
    \begin{equation*}
        f\vert_{H_B}(\bfp) \geq f\vert_{H_B}(\bfp') + \inp{\tilde{\bfz}}{\bfp - \bfp'} + o\left( \norm{\bfp - \bfp'} \right) \quad \text{as } \bfp \rightarrow \bfp'. 
    \end{equation*}
    Since $\inp{\mathbf{1}_m}{\bfp - \bfp'} = 0$ for any $\bfp, \bfp' \in H_B$, 
    we further have 
    \begin{equation*}
        f\vert_{H_B}(\bfp) \geq f\vert_{H_B}(\bfp') + \inp{\bfz}{\bfp - \bfp'} + o\left( \norm{\bfp - \bfp'} \right) \quad \text{as } \bfp \rightarrow \bfp'. 
    \end{equation*}
    Let $\bfp' \rightarrow \bfp^*$ and pick the $\bfz \in Z(\bfp')$ such that $\inp{\bfz}{\bfp' - \bfp^*} < 0$, then we obtain 
    \begin{align*}
        f\vert_{H_B}(\bfp^*) &\geq f\vert_{H_B}(\bfp') + \inp{\bfz}{\bfp^* - \bfp'} + o\left( \norm{\bfp^* - \bfp'} \right) \\ 
        &> f\vert_{H_B}(\bfp') + o\left( \norm{\bfp^* - \bfp'} \right) \\ 
        &\geq f\vert_{H_B}(\bfp') + o\left( \lvert f\vert_{H_B}(\bfp^*) - f\vert_{H_B}(\bfp') \rvert \right), 
    \end{align*}
    where the last inequality is because $f\vert_{H_B}$ is locally Lipschitz. This leads to $f\vert_{H_B}(\bfp^*) > f\vert_{H_B}(\bfp')$ and thus a contradiction. 
    
    $\textnormal{(iv)}$ $\implies$ $\textnormal{(iii)}$. 
    We prove this by contraposition. 
    Suppose $\textnormal{(iii)}$ does not hold
    and let $\bfx^*$ be any equilibrium allocation associated with $\bfp^*$. 
    Suppose that, fixing $\bfx^*$, there is a nonempty subset $J \subset [m]$ such that for every $j \in J$ and $j' \in J^c$ one of the following cases happens for each $i \in [n]$: 
        (1) $\frac{p^*_j}{d_{ij}} \neq \frac{p^*_{j'}}{d_{i j'}}$; 
        (2) $\frac{p^*_j}{d_{ij}} = \frac{p^*_{j'}}{d_{i j'}} < \frac{p^*_k}{d_{ik}}$ for some $k \neq j,j'$; 
        (3) $\frac{p^*_j}{d_{ij}} = \frac{p^*_{j'}}{d_{i j'}} \geq \frac{p^*_k}{d_{ik}},\;\forall\,k \neq j, j'$ but $x^*_{i j'} = 0$. 
    Let $\nu \in \mathbb{R}^m$ be a vector where 
    \begin{equation*}
        \nu_k = \begin{cases}
            \frac{p^*_k}{\sum_{l \in J} p^*_l} & k \in J \\ 
            -\frac{p^*_k}{\sum_{l \in J^c} p^*_l} & k \in J^c. 
        \end{cases}
    \end{equation*}
    It is easy to check that for any $h > 0$, $\bfp^* + h \nu \in H_B$. 
    Consider each agent's MPB chores. Since we assumed $\textnormal{(iii)}$ does not hold, the MPB chores are either (1) all in $J$; (2) all in $J^c$; or (3) in both $J$ and $J^c$ but $x^*_{ij} > 0$ only if $j \in J$. 
    If the prices of chores in $J$ increase proportionally, and the prices of chores in $J^c$ decrease proportionally, in cases (1) and (2) there will not be new ties because the price changes are small; and in case (3) $\bfx^*$ is still valid because only $j \in J^c$ leaves some agents' MPB bundles, and we know $x^*_{ij} = 0$ for $j \in J^c$. 
    This shows that there exists a $\bfz \in Z(\bfp^* + h\nu)$ such that $z_j < 0$ if $j \in J$ and $z_j > 0$ otherwise. 
    This implies $\inp{\bfz}{\bfp - \bfp^*} < 0$ for some $\bfp$ in the neighborhood of $\bfp^*$ and thus $\textnormal{(iv)}$ does not hold. 

    Finally, we show $\textnormal{(iii)}$ $\implies$ $\textnormal{(ii)}$. 
    To do this, we need the following fact~\cite[Proposition 1.5(c)]{clarke2008nonsmooth}
    \begin{equation}
        (f\vert_{H_B})^\circ(\bfp; \nu) = \max\{ \inp*{\bfz}{\nu}: \bfz \in \partial f\vert_{H_B}(\bfp) \}, \quad \forall\; \nu \in \mathbf{H}_0. 
        \label{fact:Clarke-directional-derivatives}
    \end{equation} 
    For any $\nu \in \mathbf{H}_0 \setminus \{ 0 \}$, 
    we know there is $\nu_{\max} = \max_j \nu_j > 0$ since $\sum_j \nu_j = 0$. 
    Let $J_{\max}(\nu) = \textnormal{argmax}_j \{ \nu_j \} \subset [m]$. 
    It follows that $\nu_j > \nu_{j'}$ if $j \in J_{\max}(\nu)$ and $j' \in J_{\max}^c(\nu)$. 
    Let $\bfx^*$ be the equilibrium allocation satisfying the condition in~$\textnormal{(iii)}$. 
    Using the condition in~$\textnormal{(iii)}$, we have that
    there exists a pair $j \in J_{\max}(\nu)$ and $j' \in J^c_{\max}(\nu)$ such that $x^*_{ij}, x^*_{ij'} > 0$ for some $i \in [n]$. 
    Consider another feasible demand $x'$ where 
    \begin{equation*}
        x'_{k l} = \begin{cases}
            x^*_{ij} + \frac{p_{j'}^*}{p_{j}^*}x^*_{ij'} & k = i, l = j \\ 
            0 & k = i, l = j' \\ 
            x^*_{k l} & \textnormal{otherwise} 
        \end{cases}. 
    \end{equation*} 
    Since $\sum_j \nu_j = 0$, 
    \begin{equation*}
        \inp*{\tilde{z}(\bfp^*)}{\nu} = \inp*{z(\bfp^*) + \mathbf{1}_m}{\nu} = \nu_j \sum\nolimits_i x_{ij}^* + \nu_{j'} \sum\nolimits_i x_{i j'} + \sum\nolimits_{l \neq j, j'} \nu_l \sum\nolimits_i x_{i l}^*. 
    \end{equation*}
    Then, we can check that $\bfx'$ corresponds to a relative excess demand $\tilde{\bfz}'(\bfp^*)$ that leads to a strict increase in the objective value of $\inp*{\tilde{\bfz}(p^*)}{\nu}$, that is, $(f\vert_{H_B})^\circ(\bfp^*; \nu) > 0$. 
    For a locally Lipschitz function that is subdifferentially regular, the original directional derivative exists and coincides with the Clarke generalized directional derivative~\cite[Definition 2]{li2020understanding}. 
    Therefore, $(f\vert_{H_B})'(\bfp^*; \nu) = (f\vert_{H_B})^\circ(\bfp^*; \nu) > 0$ for all directions $\nu \in \mathbf{H}_0$. 
    Thus, $\bfp^*$ is a strict local minimum of $f\vert_{H_B}$. 
\end{proof} 

As a corollary of~\cref{thm:characterize-locally-stable-chores-ce}, we can characterize locally unstable CE as well. 
\begin{corollary}
    Let $\bfp^*$ be an equilibrium price. The following statements are equivalent: 
    \begin{itemize}
        \item[(i)] $\bfp^*$ is locally unstable; 
        \item[(ii)] $\bfp^*$ is either a strict local maximum or a saddle point of the function $f\vert_{H_B}$; 
        \item[(iii)] 
        Let $\bfx^*$ be any equilibrium allocation coupled with $\bfp^*$. 
        There exists some nonempty $J \subset [m]$ such that 
        for every $j \in J$ and $j' \in J^c$: 
        \begin{align*}
            \forall\; i \in [n]: 
            j \notin J_i^*(\bfp^*) 
            \text{ or/and }
            j' \notin J_i^*(\bfp^*) 
            \quad 
            \text{ or } 
            \quad 
            j, j' \in J^*_i(\bfp^*) \text{ but } x^*_{i j'} = 0. 
        \end{align*}
        \item[(iv)] For any neighborhood $\mathbf{N}(\bfp^*)$ of $\bfp^*$, we have 
        \begin{equation*}
            \exists\; \bfz \in \tilde{Z}(\bfp), \bfp \in \mathbf{N}(\bfp^*) \cap \Delta_B \setminus \{ \bfp^* \} \quad \text{ such that } \quad 
            \inp*{\bfz}{\bfp - \bfp^*} < 0. 
        \end{equation*}
    \end{itemize}
    \label{thm:unstable}
\end{corollary}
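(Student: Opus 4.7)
The plan is to derive each of the three equivalences in the corollary as the direct contrapositive of the matching equivalence in Theorem~\ref{thm:characterize-locally-stable-chores-ce}. Local instability is by definition the negation of local stability, so for each of (ii), (iii), (iv) above it suffices to verify that the stated condition is precisely the logical negation of the corresponding condition in the theorem, and, where necessary, to argue that this negation collapses to the form given in the corollary.

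For (i)$\Leftrightarrow$(ii), the only nontrivial content is the trichotomy ``not a strict local minimum implies strict local maximum or saddle point''. My claim is that a chores CE can never be a non-strict local minimum of $f\vert_{H_B}$. By Lemmas~\ref{lem:subdifferentially-regular-and-generalized-subdifferential} and~\ref{key-lem:stationary-point-in-affine-hull-is-CE}, any local minimum of $f\vert_{H_B}$ on $\Delta_B$ is a stationary point, hence itself a chores CE. I will then invoke the disconnectedness, and hence isolation, of the set of equilibrium prices already used in the paper: if $\bfp^*$ were a non-strict local minimum, there would exist $\bfq\neq \bfp^*$ arbitrarily close to $\bfp^*$ with $f\vert_{H_B}(\bfq)=f\vert_{H_B}(\bfp^*)$. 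Restricted to a sufficiently small neighborhood of $\bfq$ inside the original minimizing neighborhood, such a $\bfq$ is itself a local minimum, hence a CE arbitrarily close to $\bfp^*$, contradicting isolation. Thus the three cases ``strict local min / strict local max / saddle'' exhaust all possibilities, and (ii) of the corollary is exactly the negation of (ii) of the theorem.

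For (i)$\Leftrightarrow$(iii), I will negate the quantifier chain directly: the outer ``there exists an equilibrium allocation $\bfx^*$'' becomes ``for every equilibrium allocation $\bfx^*$''; the ``for all nonempty $J\subset[m]$'' becomes ``there exists nonempty $J\subset[m]$''; and the innermost predicate ``$\exists\, i:\ j,j'\in J^*_i(\bfp^*)$ and $x^*_{ij'}>0$'' becomes ``$\forall\, i:\ j\notin J^*_i(\bfp^*)$ or $j'\notin J^*_i(\bfp^*)$, or else $j,j'\in J^*_i(\bfp^*)$ with $x^*_{ij'}=0$''. This is precisely the disjunction stated in the corollary's (iii).

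For (i)$\Leftrightarrow$(iv), I take the direct negation; the only points to check are the cosmetic switches from $Z$ to $\tilde Z$ and from $H_B$ to $\Delta_B$. The first holds because for any $\bfp,\bfp^*\in H_B$ one has $\inp{\mathbf{1}_m}{\bfp-\bfp^*}=0$, so $\inp{\bfz}{\bfp-\bfp^*}=\inp{\tilde{\bfz}}{\bfp-\bfp^*}$ whenever $\tilde{\bfz}=\bfz-\tfrac{1}{m}(\mathbf{1}_m^\top\bfz)\mathbf{1}_m\in\tilde Z(\bfp)$ is the projection of $\bfz\in Z(\bfp)$; the second is automatic since $Z$ is defined only on nonnegative prices, so intersecting $H_B$ with $\mathbb{R}^m_+$ to get $\Delta_B$ loses no content. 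I expect the main obstacle of the whole argument to be the trichotomy step in (i)$\Leftrightarrow$(ii): carefully excluding non-strict local minima via isolation of chores CE; everything else is quantifier bookkeeping that piggybacks on Theorem~\ref{thm:characterize-locally-stable-chores-ce}.
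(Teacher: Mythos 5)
Your proposal is correct and matches the paper's (implicit) argument: the paper derives this corollary simply as the statement-by-statement negation of Theorem~\ref{thm:characterize-locally-stable-chores-ce}, which is exactly your quantifier bookkeeping for (iii) and (iv) and your $Z$ vs.\ $\tilde Z$ and $H_B$ vs.\ $\Delta_B$ remarks. Your isolation argument for the trichotomy in (ii) is a welcome filling-in of a detail the paper glosses over (its proof of the theorem just asserts strict local max/saddle are ``the only other options''); note only that the same argument is also needed to rule out \emph{non-strict local maxima} (which are likewise Clarke-stationary, hence CE, hence excluded by isolation), not just non-strict minima, so that (ii) of the corollary really exhausts the negation of strict local minimality.
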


\newpage

\section*{Acknowledgements}

The research of Bhaskar Ray Chaudhury was supported by NSF CAREER Grant CCF-2441580.
The research of Christian Kroer and Tianlong Nan was supported by 
the Office of Naval Research awards N00014-22-1-2530 and N00014-23-1-2374, and the National Science Foundation awards IIS-2147361 and IIS-2238960.
The research of Ruta Mehta was supported by NSF Grant CCF-2334461.

\bibliographystyle{alphaurl}
\bibliography{refs} 

\newpage 
\appendix 

\section{Omitted Proofs} 
\label{app:sec:Omitted proofs}

\subsection{Proof of~\cref{lem:descent}} 

First, we introduce an additional definition. A locally Lipschitz function $f: \RR^m \rightarrow \RR$ admits a \emph{chain rule} if for \emph{any} arc $\bfp: \RR_+ \rightarrow \RR^m$, equality 
    \begin{equation*}
        \frac{d}{d t} (f \circ \bfp)(t) = \inp*{\nu}{\dot{\bfp}(t)} \quad \textnormal{holds for all $\nu \in \partial f(\bfp(t))$ and for a.e. $t \geq 0$.}
    \end{equation*} 

To prove~\cref{lem:descent}, we show the following lemma for a more general case: we consider an arbitrary locally Lipschitz function $f: \mathbb{R}^m \rightarrow \mathbb{R}$ that admits a chain rule, and prove a descent lemma for a general Bregman differential inclusion. 
Then, both~\cref{lem:descent} can be seen as corollaries of this lemma. 
This lemma can be extracted from the proof of Proposition 4.1 of  of~\cite{ding2024stochastic}. 
We give a formal proof here for completeness. 

\begin{lemma}
    Consider a locally Lipschitz function $f: \mathbb{R}^m \rightarrow \mathbb{R}$ that admits a chain rule. 
    Let $\bfp: \mathbb{R}_+ \rightarrow \mathbb{R}^m$ be any arc satisfying the Bregman differential inclusion 
    \begin{equation}
        \frac{\partial}{\partial t} \nabla \phi(\bfp) \in -\partial f(\bfp), \label{eq:differential-inclusion-general}
    \end{equation}
    where $\nabla \phi$ is differentiable almost everywhere and $\nabla^2 \phi(\bfp(t))$ to be \emph{positive definite} for any arc $\bfp(t)$ for a.e. $t \geq 0$. Then, if $\bfp(0)$ is not a stationary point of~\cref{eq:differential-inclusion-general}, there exists a real number $T > 0$ such that 
    \begin{equation}
        f(\bfp(T)) < \sup_{t \in [0, T]} f(\bfp(t)) \leq f(\bfp(0)). 
        \label{eq:descent-general}
    \end{equation} 
    \label{lem:descent-general}
\end{lemma}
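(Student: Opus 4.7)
The plan is to run a standard Lyapunov descent argument: show that $f \circ \bfp$ is non-increasing along any arc satisfying the Bregman differential inclusion, and then use the hypothesis that $\bfp(0)$ is not stationary (together with upper semicontinuity of $\partial f$ and positive definiteness of $\nabla^2\phi$) to upgrade non-increase to \emph{strict} decrease on some interval $[0,T]$.

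First I would extract the velocity $\dot{\bfp}(t)$ from the inclusion. For a.e.\ $t\ge 0$, pick a measurable selection $\bfg(t)\in\partial f(\bfp(t))$ with $\tfrac{d}{dt}\nabla\phi(\bfp(t))=-\bfg(t)$. Since $\nabla\phi$ is differentiable a.e.\ with positive-definite Hessian, the usual chain rule on the absolutely continuous arc $\nabla\phi\circ\bfp$ gives $\tfrac{d}{dt}\nabla\phi(\bfp(t))=\nabla^2\phi(\bfp(t))\,\dot{\bfp}(t)$, so
\begin{equation*}
    \dot{\bfp}(t) \;=\; -\bigl[\nabla^2\phi(\bfp(t))\bigr]^{-1}\bfg(t) \qquad \text{for a.e.\ } t\ge 0.
\end{equation*}
Now invoke the chain rule hypothesis on $f$: for a.e.\ $t$ and for the \emph{same} selection $\bfg(t)\in\partial f(\bfp(t))$,
\begin{equation*}
    \tfrac{d}{dt} f(\bfp(t)) \;=\; \inp*{\bfg(t)}{\dot{\bfp}(t)} \;=\; -\inp*{\bfg(t)}{[\nabla^2\phi(\bfp(t))]^{-1}\bfg(t)} \;\le\; 0,
\end{equation*}
since the inverse of a positive-definite matrix is positive definite. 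Thus $f\circ\bfp$ is non-increasing, giving the right-hand inequality $\sup_{t\in[0,T]}f(\bfp(t))\le f(\bfp(0))$ for every $T>0$.

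For strict descent, use that $\bfp(0)$ is not a stationary point, i.e.\ $\mathbf{0}_m\notin\partial f(\bfp(0))$. Because $\partial f$ is a compact convex set, there exists $\delta>0$ with $\lVert\bfh\rVert\ge 2\delta$ for every $\bfh\in\partial f(\bfp(0))$. By upper semicontinuity of the Clarke subdifferential and continuity of $\bfp$, there is some $T>0$ such that for all $t\in[0,T]$, every $\bfh\in\partial f(\bfp(t))$ satisfies $\lVert\bfh\rVert\ge\delta$; in particular $\lVert\bfg(t)\rVert\ge\delta$. Meanwhile, by continuity of $\nabla^2\phi$ and positive definiteness along the (compact) image $\bfp([0,T])$, there exists $\lambda>0$ with $\inp*{\bfv}{[\nabla^2\phi(\bfp(t))]^{-1}\bfv}\ge \lambda\lVert\bfv\rVert^2$ for all $\bfv\in\mathbb{R}^m$ and $t\in[0,T]$. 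Combining, $\tfrac{d}{dt}f(\bfp(t))\le -\lambda\delta^2$ a.e.\ on $[0,T]$, and integrating yields
\begin{equation*}
    f(\bfp(T)) \;\le\; f(\bfp(0)) - \lambda\delta^2\,T \;<\; f(\bfp(0)),
\end{equation*}
which combined with the monotonicity above gives \cref{eq:descent-general}.

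The main obstacle I anticipate is technical regularity: justifying the step $\tfrac{d}{dt}\nabla\phi(\bfp(t)) = \nabla^2\phi(\bfp(t))\dot{\bfp}(t)$ on the measure-theoretic set where all relevant derivatives exist simultaneously, and making sure the \emph{same} subgradient selection $\bfg(t)$ can be used both in the differential inclusion and in the chain rule for $f$. This is the reason the ``admits a chain rule'' hypothesis is stated for \emph{every} $\nu\in\partial f(\bfp(t))$ and a.e.\ $t$: it lets us plug in the measurable selection extracted from the inclusion without worrying about whether $f\circ\bfp$ happens to be differentiable at $t$. The upper-semicontinuity / compact-image arguments needed to produce the uniform constants $\delta$ and $\lambda$ on $[0,T]$ are then standard.
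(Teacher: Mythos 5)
Your argument for monotone non-increase is essentially the paper's: invoke the admitted chain rule with the selection coming from the inclusion, use $\tfrac{d}{dt}\nabla\phi(\bfp(t))=\nabla^{2}\phi(\bfp(t))\dot{\bfp}(t)$, and integrate to get $f(\bfp(t))\le f(\bfp(0))$ (the paper also uses that identity without further justification, so flagging it as a technical caveat puts you on equal footing). Where you genuinely differ is the strict-decrease step. The paper argues by contradiction: if $f(\bfp(t))\equiv f(\bfp(0))$, then $\dot{\bfp}=0$ a.e., so by absolute continuity $\bfp\equiv\bfp(0)$, and the inclusion then forces $\mathbf{0}_m\in\partial f(\bfp(0))$, contradicting non-stationarity; this needs no quantitative constants. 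You instead make the descent quantitative via outer semicontinuity of the Clarke subdifferential ($\norm{\bfg(t)}\ge\delta$ on some $[0,T]$) together with a uniform lower bound $\lambda>0$ for the quadratic form of $[\nabla^{2}\phi(\bfp(t))]^{-1}$, which buys an explicit rate $f(\bfp(T))\le f(\bfp(0))-\lambda\delta^{2}T$.

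The one step that does not follow from the stated hypotheses is that uniform $\lambda$: you appeal to ``continuity of $\nabla^{2}\phi$'' on the compact image $\bfp([0,T])$, but the lemma only assumes $\nabla\phi$ is differentiable almost everywhere with $\nabla^{2}\phi(\bfp(t))$ positive definite for a.e.\ $t$. Without continuity, $\lambda_{\max}\big(\nabla^{2}\phi(\bfp(t))\big)$ need not be bounded along $[0,T]$, so no uniform lower bound on $[\nabla^{2}\phi(\bfp(t))]^{-1}$ is available. The gap is easily repaired, though: on $[0,T]$ the integrand $\inp*{\bfg(t)}{[\nabla^{2}\phi(\bfp(t))]^{-1}\bfg(t)}$ is strictly positive for a.e.\ $t$ (using $\norm{\bfg(t)}\ge\delta$ and a.e.\ positive definiteness), and it is integrable because it equals $-\tfrac{d}{dt}f(\bfp(t))$ with $f\circ\bfp$ absolutely continuous; an a.e.-positive integrable function has strictly positive integral, which already gives $f(\bfp(T))<f(\bfp(0))$ without any uniform constants. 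Alternatively, you could simply fall back on the paper's contradiction argument. With that repair (and the shared, acknowledged caveat about differentiating $\nabla\phi\circ\bfp$), your proof is correct.
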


\begin{proof} 
    
    
    

    By the chain rule, 
    for almost every $s \geq 0$, we have 
    \begin{equation*}
        \frac{d}{d s} f(\bfp(s)) 
        = \inp*{-\frac{\partial }{\partial s} \nabla\phi(\bfp(s))}{\dot{\bfp}(s)} = \inp*{-\nabla^2\phi(\bfp(s)) \dot{\bfp}(s)}{\dot{s}(t)}. 
    \end{equation*}
    Then, for any $t \geq 0$, it holds that 
    \begin{equation*}
        f(\bfp(t)) - f(\bfp(0)) = \int_0^t \inp*{-\nabla^2\phi(\bfp(s)) \dot{\bfp}(s)}{\dot{\bfp}(s)} ds \leq -\int_0^t \lambda_{\min}\big(\nabla^2\phi(\bfp(s))\big) \norm{\dot{\bfp}(s)}^2 \leq 0, 
    \end{equation*}
    where $\lambda(\cdot)$ denotes the minimum eigenvalue of a matrix. 
    Then, by contradiction we can show~\cref{eq:descent-general}. Suppose $f(\bfp(t)) = f(\bfp(0))$ for all $t \geq 0$.
    This means that $\norm{\dot{\bfp}(t)} = 0$ for almost all $t \geq 0$. 
    Since $\bfp(\cdot)$ is absolutely continuous, $\bfp(t) \equiv \bfp(0)$ for any $t \geq 0$. 
    By the differential inclusion, for any $t > 0$ we have $0 = \frac{\partial}{\partial t} \nabla \phi\left( \bfp(t) \right) \in - \partial f(\bfp(t)) = - \partial f(\bfp(0))$. This contradicts the assumption that the initial point of the trajectory is non-stationary. 
    Therefore, \cref{eq:descent-general} follows. 
\end{proof} 


As shown in~\cite[Lemma 5.4]{davis2020stochastic}, a subdifferentially regular function admits a chain rule. 
Thus, both $f$ (defined in~\cref{eq:potential-CCH}) and $f\vert_{H_B}$ admit a chain rule. 

Thus, \cref{lem:descent} can be seen as a corollary of~\cref{lem:descent-general} when $\nabla \phi(\bfp) = \bfp$ and the potential function is defined as in~\cref{eq:potential-CCH}.



\subsection{Proof of~\cref{thm:relative-tatonnement-stable}} 

We state a general theorem to show the convergence of the continuous-time trajectories under appropriate conditions. \cref{thm:relative-tatonnement-stable} follows from this theorem because of~\cref{lem:descent}. 

Consider a general Bregman differential inclusion: 
\begin{equation}
    \begin{aligned}
        \frac{\partial }{\partial t} \nabla \phi({\bfp}) \in D(\bfp), 
    \end{aligned}
    \label{eq:differential-inclusion}
\end{equation} 
where $D: \mathcal{X} \rightrightarrows \mathbb{R}^m$ is a set-valued map, and $\nabla \phi: \mathcal{X} \rightarrow \mathbb{R}^m$ is a function which is differentiable almost everywhere. 

\begin{theorem}
    Assume that 
    \begin{enumerate} 
        \item $\mathcal{X}$ is compact. 
        \item The set $D^{-1}(\mathbf{0})$ is disconnected. 
        \item There exists a Lyapunov function $\varphi$ which is lower bounded, i.e., $\inf_{x \in \mathbb{R}^m} \varphi(x) > - \infty$. 
        Additionally, it satisfies $\mathcal{X} \subseteq \textnormal{dom}\, \varphi$ and every point in $D^{-1}(\mathbf{0})$ is a stationary point of $\varphi$. 
        Furthermore, 
        for any trajectory $\bfp: \mathbb{R}_+ \rightarrow \mathbb{R}^m_+$ of~\cref{eq:differential-inclusion} such that $\bfp(0) \in \mathcal{X}$ that is not a stationary point of~\cref{eq:differential-inclusion}, there exists a real number $T > 0$ satisfying 
        \begin{equation}
            \varphi(\bfp(T)) < \sup_{t \in [0, T]} \varphi(\bfp(t)) \leq \varphi(\bfp(0)). 
            \label{eq:descent}
        \end{equation} 
    \end{enumerate} 
    Let $\bfp: \mathbb{R}_+ \rightarrow \mathbb{R}^m_+$ be any trajectory of~\cref{eq:differential-inclusion} with an initial point in $\mathcal{X}$. 
    Then, $\bfp(t)$ converges to some stationary point of~\cref{eq:differential-inclusion}, i.e., $\bfp(t) \rightarrow \bfp^* \in D^{-1}(\mathbf{0})$ as $t \rightarrow \infty$. 
\end{theorem}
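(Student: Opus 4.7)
The plan is to combine three standard ingredients from dynamical systems theory: boundedness of the trajectory on the compact set $\mathcal{X}$ yields a non-empty $\omega$-limit set; the Lyapunov descent property forces this limit set to lie inside $D^{-1}(\mathbf{0})$; and connectedness of the $\omega$-limit set of an absolutely continuous curve, combined with the hypothesized disconnectedness of $D^{-1}(\mathbf{0})$, collapses it to a single point.

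Concretely, I would begin by observing that $\bfp(\cdot)$ stays in the compact invariant set $\mathcal{X}$ (in the chores applications this uses \cref{lem:well-defined-equilibrium}), so the $\omega$-limit set
\begin{equation*}
    \omega(\bfp(0)) \;=\; \bigcap_{s\geq 0}\overline{\{\bfp(t):t\geq s\}}
\end{equation*}
is non-empty and compact; because $\bfp(\cdot)$ is absolutely continuous and $\mathcal{X}$ is compact, $\omega(\bfp(0))$ is also connected (the classical fact about $\omega$-limits of continuous curves in compact Hausdorff spaces). Iterating the descent clause (3) starting from any non-stationary time produces a sequence $0<T_1<T_2<\cdots$ along which $\varphi(\bfp(T_k))$ is strictly decreasing, and combined with $\sup_{t\in[0,T]}\varphi(\bfp(t))\leq\varphi(\bfp(0))$ and $\inf\varphi>-\infty$ this yields convergence $\varphi(\bfp(t))\to\varphi^{\infty}\in\mathbb{R}$; continuity of $\varphi$ on $\mathcal{X}$ then gives $\varphi\equiv\varphi^{\infty}$ on $\omega(\bfp(0))$.

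Next, I would show $\omega(\bfp(0))\subseteq D^{-1}(\mathbf{0})$ by contradiction. If some $\bfp^{\star}\in\omega(\bfp(0))$ were non-stationary, then applying the descent clause (3) to a trajectory started at $\bfp^{\star}$ and using a form of upper-semicontinuous dependence of trajectories of \cref{eq:differential-inclusion} on initial data (together with the chain rule for $\varphi$, available via subdifferential regularity as in \cref{lem:descent}), one extracts $T>0$ and $\delta>0$ such that $\varphi(\bfp(t_n+T))\leq \varphi(\bfp(t_n))-\delta$ for all large $n$ along any sequence $t_n\to\infty$ with $\bfp(t_n)\to\bfp^{\star}$. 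This contradicts $\varphi\circ\bfp\to\varphi^{\infty}$. Finally, since $\omega(\bfp(0))$ is a connected subset of the disconnected set $D^{-1}(\mathbf{0})$, it lies in a single connected component, and under the isolatedness of components that is present in the chores applications (where equilibrium prices form a discrete collection of isolated points in $\Delta_B$ under the CCH hypothesis of \cref{assump:strictly-increasing}), this component is a singleton $\{\bfp^{\star}\}$, giving $\bfp(t)\to\bfp^{\star}$.

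The hardest step will be the third one: transporting the qualitative descent property in (3), which only asserts the existence of some $T>0$ along an idealized trajectory through $\bfp^{\star}$, back to the actual trajectory that merely passes near $\bfp^{\star}$. Since $D$ is set-valued rather than a smooth vector field, continuous dependence on initial conditions is not automatic; instead, I expect to rely on upper-semicontinuity of $D$ combined with the chain rule for $\varphi$ established via subdifferential regularity (\cref{lem:subdifferentially-regular-and-generalized-subdifferential}) to promote (3) to a locally uniform lower bound on descent in a neighborhood of any non-stationary point. A secondary subtlety is that hypothesis (2) as stated — mere disconnectedness — is strong enough to force $\omega(\bfp(0))$ into one component, but point-wise convergence requires the components met by $\omega(\bfp(0))$ to be singletons; this refinement is justified in the paper's concrete settings and should be noted as an implicit strengthening of (2).
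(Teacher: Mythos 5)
Your proposal is correct and follows essentially the same route as the paper: Lyapunov descent plus the lower bound give convergence of $\varphi$ along the trajectory, compactness yields limit points, a limit-of-shifted-trajectories (equivalently, upper-semicontinuity) argument shows every limit point is stationary, and connectedness of the limit set against disconnectedness of $D^{-1}(\mathbf{0})$ pins the trajectory to a single point. Your remark that mere disconnectedness must implicitly be strengthened (the relevant component must be a singleton, as it is for the isolated equilibria in the chores application) is a fair observation that applies equally to the paper's own final step, which rules out cycling between components but not wandering within a non-trivial component.
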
 

\begin{proof}
    Because the function value of the Lyapunov function $\varphi$ is eventually strictly decreasing along the trajectory, and $\varphi$ is lower bounded, the function value of $\varphi$ converges to some value, i.e., $\lim_{t \rightarrow \infty} \varphi(\bfp(t))$ exists. 

    Because $\bfp(t)$ is bounded and $\varphi(\bfp(t))$ converges, there is a sequence $\tau_0, \tau_1, \ldots, \tau_k, \ldots$ such that $\tau_k \rightarrow \infty$ as $k \rightarrow \infty$ and $\bfp(\tau_k)$ converges to some point $\bfp^*$ as $k \rightarrow \infty$. 
    Then, we can show $\bfp^* \in D^{-1}(0)$ by contradiction. 
    Suppose that 
    $\bfp^*$ is not a stationary point. 
    Consider a sequence of arcs $\bfp^{\tau_k}(t) = \bfp(\tau_k + t), \; k = 0, 1, 2, \ldots$. 
    It is easy to see $\bfp^{\tau}(t)$ is a trajectory of~\cref{eq:differential-inclusion} for any $\tau \geq 0$. 
    Then, it can be shown that $\hat{\bfp}(t) = \lim_{k \rightarrow \infty} \bfp^{\tau_k}(t)$ is also a trajectory of~\cref{eq:differential-inclusion}. 
    By~\cref{lem:descent-general}, we have there exists a $T > 0$ such that 
    \begin{equation}
        \varphi(\hat{\bfp}(T)) < \sup_{t \in [0, T]} \varphi(\hat{\bfp}(t)) \leq \varphi(\hat{\bfp}(0)) = \varphi(\bfp^*). 
    \end{equation} 
    On the other hand, we have 
    \begin{equation}
        \varphi(\hat{\bfp}(T)) 
        = \lim_{k \rightarrow \infty} \varphi(\bfp^{\tau_k}(T)) 
        = \lim_{k \rightarrow \infty} \varphi(\bfp(\tau_k + T)) = \lim_{t \rightarrow \infty} \varphi(\bfp(t)) = \varphi(\bfp^*), 
    \end{equation} 
    where the last two inequalities follow from the existence of $\lim_{t \rightarrow \infty} \varphi(\bfp(t))$ and the continuity of $\varphi$.  
    This leads to a contradiction. Therefore, $\bfp^* \in D^{-1}(\mathbf{0})$. 

    Since $\varphi(\bfp(t))$ converges, $\varphi(\bfp(t)) \rightarrow \varphi(\bfp^*)$ which is a value corresponding to the stationary point $\bfp^*$. Moreover, because the stationary points of~\cref{eq:differential-inclusion} are disconnected and the arc $\bfp(t)$ is continuous, we can rule out the case that the arc cycles around multiple stationary points. This is because there has to be non-stationary point on the path between different stationary points. Consider such a non-stationary point and a trajectory of~\cref{eq:differential-inclusion} starting from this point. By~\cref{lem:descent-general}, there will be a contradiction to $\lim_{t \rightarrow \infty} \varphi(\bfp(t)) \rightarrow \varphi(\bfp^*)$. Therefore, we can conclude $\bfp(t) \rightarrow \bfp^* \in D^{-1}(\mathbf{0})$.
\end{proof}

\subsection{Proof of~\cref{thm:discrete-time-relative-tatonnement-convergence}}

\begin{proof} 
    First, by~\cref{lem:discrete-time-relative-tatonnement-properties} we show the sequence of iterates $\left\{ \bfp^k \right\}_{k \geq 0}$ lie in $\Delta_B$ and enjoy a positive lower bound, if the stepsizes are properly upper bounded. 
    
    Thus, Assumption 1 in~\cref{davis assumptions} are satisfied. 
    Moreover, it follows from~\cref{lem:discrete-time-relative-tatonnement-properties} that the prices and excess demands are bounded, and therefore Assumption 2 in~\cref{davis assumptions} is satisfied. 
    Assumption 3 in~\cref{davis assumptions} corresponds to the choice of stepsizes. 
    Assumption 4 in~\cref{davis assumptions} follows from the fact that $\partial f$ is outer-semicontinuous and compact-convex valued. 
    
    In~\cref{subsec:Relative chores t\^atonnement is stable}, we have already shown that $f\vert_{H_B}$ is a locally Lipschitz and subdifferentially regular function. 
    Furthermore, it is direct that the non-stationary values of $f\vert_{H_B}$ is dense as the corresponding finite number of chores CE are disconnected. 
    Therefore, by Corollary 5.5 (1) of~\cite{davis2020stochastic}, the theorem follows. 
\end{proof}

\section{Generalized Subdifferential}
\label{app:sec:generalized-subdifferential}

A similar point of view on differentiation of nonsmooth functions was introduced by Norkin~\cite{mikhalevich2024methods}. 
Based on their notions, a function $f$ is said to be generalized differentiable in its domain, if there is a set-valued map $G_f: \mathbb{R}^m \rightrightarrows \mathbb{R}^m$ such that for every $\bfx$, 
$G_f(\bfx)$ is nonempty, convex, compact valued, the graph of $G_f$ is closed, and 
\begin{equation*}
    f(\bfy) = f(\bfx) + \inp{\bfg}{\bfy - \bfx} + o(\bfx, \bfy, \bfg), \quad \bfg \in G_f(\bfy) \; \text{and} \; \lim_{\bfy \rightarrow \bfx} \sup_{\bfg \in G_f(\bfy)} \frac{o(\bfx, \bfy, \bfg)}{\norm{\bfx - \bfy}} = 0. 
\end{equation*} 
In the literature, $G_f$ is called \emph{generalized subdifferential}. 
An element $\bfg \in G_f(\bfx)$ is a \emph{(Norkin) generalized gradient} or \emph{pseudo-gradient} of $f$ at $\bfx$. 
Generalized-differentiable functions satisfy the local Lipschitz condition. 
They, generally, have no directional derivatives but are continuously differentiable almost everywhere. 

An important point to mention is that the generalized subdifferential is a whole family of set-valued mappings, and the \emph{minimal inclusion} of Norkin subdifferentials coincides with the Clarke subdifferential. 

\textbf{Verifying $\partial f\vert_{H_B}$.} 
In~\cite{mikhalevich2024methods}, they show 
\begin{equation*}
    G_{f\vert_{H_B}}(\bfp) = \left\{ \bfg_0 \in H_0 \, \left\vert \, \bfg_0 = \Pi_{H_0}(\bfg), \bfg \in G_f(\bfp) \right. \right\}. 
\end{equation*}
Based on the fact that the minimal inclusion of generalized subdifferential coincides with Clarke subdifferential, we know  
\begin{equation*}
    \partial f\vert_{H_B}(p) = \left\{ \bfg_0 \in H_0 \, \left\vert \, \bfg_0 = \Pi_{H_0}(\bfg), \bfg \in \partial f(\bfp) \right. \right\}. 
\end{equation*}
This is because if we consider any other genralized subdifferential $G'_f$ in the family of pseudo-gradient mappings, the corresponding $G'_{f\vert_{H_B}}$ satisfies $G'_{f\vert_{H_B}}(p) \supset \partial f\vert_{H_B}(p)$ because $G'_f(p) \supset G_f(p)$ and $\Pi_{H_0}$ is a one-to-one operator. The inclusion relation can also be verified from the proof of~\cite[Theorem 1.10]{mikhalevich2024methods}. 
Therefore, $\partial f\vert_{H_B}$ is still the minimal inclusion among the family of pseudo-gradient mappings of $f\vert_{H_B}$, and thus we verify it is the Clarke subdifferential we are looking for.

\end{document}